\newcommand{\ua}{\ensuremath{\underline a}}
\newcommand{\ue}{\ensuremath{\underline e}}
\newcommand{\ui}{\ensuremath{\underline i}}
\newcommand{\uu}{\ensuremath{\underline u}}
\newcommand{\uv}{\ensuremath{\underline v}}
\newcommand{\ux}{\ensuremath{\underline x}}
\newcommand{\uy}{\ensuremath{\underline y}}
\newcommand{\uz}{\ensuremath{\underline z}}
\newcommand{\uF}{\ensuremath{\underline F}}
\newcommand{\uG}{\ensuremath{\underline G}}
\newcommand{\uS}{\ensuremath{\underline S}}
\newcommand{\PSPACE}{\textsc{Pspace}\xspace}
\newcommand{\cA}{\ensuremath \mathcal A}
\newcommand{\cC}{\ensuremath \mathcal C}
\newcommand{\cB}{\ensuremath \mathcal B}
\newcommand{\cM}{\ensuremath \mathcal M}
\newcommand{\cN}{\ensuremath \mathcal N}
\newcommand{\cS}{\ensuremath \mathcal S}
\newcommand{\lra}{\longrightarrow}
\newcommand{\safe}{\texttt{SAFE}\xspace}
\newcommand{\unsafe}{\texttt{UNSAFE}\xspace}
\newcommand{\mcmt}{\textsc{mcmt}\xspace}
\renewcommand{\int}{\ensuremath {\mathcal I}}
\newcommand{\domain}[1]{\mathit{dom}(#1)}
\newcommand{\sas}{SAS\xspace}
\newcommand{\SAS}{Simple Artifact System\xspace}
\newcommand{\ras}{RAS\xspace}
\newcommand{\RAS}{Relational Artifact System\xspace}
\newcommand{\type}[1]{\ensuremath{\mathsf{#1}}\xspace}
\newcommand{\funct}[1]{\ensuremath{\mathit{#1}}\xspace}
\newcommand{\relname}[1]{\ensuremath{\mathit{#1}}\xspace}
\newcommand{\constant}[1]{\texttt{#1}}
\newcommand{\sorts}[1]{#1_{\mathit{srt}}}
\newcommand{\functs}[1]{#1_{\mathit{fun}}}
\newcommand{\vals}[1]{#1_{\mathit{val}}}
\newcommand{\ids}[1]{#1_{\mathit{ids}}}
\newcommand{\ext}[1]{#1_{\mathit{ext}}}
\newcommand{\nullv}{\texttt{undef}}
\newcommand{\lentry}[2]{\ensuremath{#1\,{:}\,#2}}
\definecolor{deepblue}{HTML}{0C3B80}
\definecolor{deepgreen}{HTML}{2EA601}
\definecolor{lightOrange}{HTML}{FFA03C}
\definecolor{darkOrange}{HTML}{F1800A}
\definecolor{lightBlue}{HTML}{0174CD}
\definecolor{greenF}{HTML}{2CBB5C}
\definecolor{cyan}{HTML}{86A6D5}
\tikzstyle{sortnode} = [
\tikzstyle{functnode} = [
\tikzstyle{idnode} = [
\tikzstyle{valnode} = [
\tikzstyle{f} = [
\tikzstyle{fd} = [
\tikzstyle{relation}=[rectangle split, rectangle split parts=#1, rectangle split part align=base, draw, anchor=center, align=center, text height=3mm, font=\bfseries, ultra thick, text centered]
\newcommand{\smtlambda}[2]{
  \lambda #1.\left(#2\right)
}
\newcommand{\smtif}[4]{
  \begin{array}[#1]{@{}l@{}}
  \mathsf{if~}#2\mathsf{~then~}#3
  \\\mathsf{else~}#4
  \end{array}
}
\newcommand{\smtifinline}[3]{
  \mathsf{if~}#1\mathsf{~then~}#2\mathsf{~else~}#3
}
\newcommand{\typedvar}[2]{#1{:}#2}
\newcommand{\artvar}[1]{\mathit{#1}}
\newcommand{\hrsig}{\Sigma_{\mathit{hr}}}
\newcommand{\hrras}{\S_{\mathit{hr}}}
\newcommand{\userid}{\type{UserId}}
\newcommand{\jobcatid}{\type{JobCatId}}
\newcommand{\empid}{\type{EmpId}}
\newcommand{\compinid}{\type{CompInId}}
\newcommand{\stringval}{\type{String}}
\newcommand{\username}{\funct{userName}}
\newcommand{\empname}{\funct{empName}}
\newcommand{\compemp}{\funct{who}}
\newcommand{\compjob}{\funct{what}}
\newcommand{\jobdescr}{\funct{jobCatDescr}}
\newcommand{\pstatev}{\artvar{pState}}
\newcommand{\uidv}{\artvar{uId}}
\newcommand{\jid}{\artvar{jId}}
\newcommand{\dval}{\artvar{pubDate}}
\newcommand{\ostate}{\artvar{pubState}}
\newcommand{\recstate}{\artvar{aState}}
\newcommand{\uid}{\artvar{uId}}
\newcommand{\eid}{\artvar{eId}}
\newcommand{\cid}{\artvar{cId}}
\newcommand{\enab}{\constant{enabled}}
\newcommand{\publishing}{\constant{publishing}}
\newcommand{\published}{\constant{published}}
\newcommand{\joopen}{\constant{open}}
\newcommand{\joclosed}{\constant{closed}}
\newcommand{\joselected}{\constant{joSelected}}
\newcommand{\appreceived}{\constant{received}}
\newcommand{\win}{\constant{winner}}
\newcommand{\los}{\constant{loser}}
\newcommand{\final}{\constant{final}}
\newcommand{\notified}{\constant{notified}}
\newcommand{\joidx}{\type{joIndex}}
\newcommand{\jocat}{\funct{joCat}}
\newcommand{\jodate}{\funct{joPDate}}
\newcommand{\jostate}{\funct{joState}}
\newcommand{\appidx}{\type{appIndex}}
\newcommand{\appjob}{\funct{appJobCat}}
\newcommand{\appuser}{\funct{applicant}}
\newcommand{\appresp}{\funct{appResp}}
\newcommand{\appscore}{\funct{appScore}}
\newcommand{\appres}{\funct{appResult}}
\newcommand{\scoreval}{\type{Score}}
\tikzstyle{sortnode} = [
\tikzstyle{functnode} = [
\tikzstyle{idnode} = [
\tikzstyle{artnode} = [
\tikzstyle{valnode} = [
\tikzstyle{f} = [
\tikzstyle{fd} = [
\tikzstyle{relation}=[rectangle split, rectangle split parts=#1, rectangle split part align=base, draw, anchor=center, align=center, text height=3mm, font=\bfseries, text centered]
\newcommand{\I}{\ensuremath{\mathcal{I}}}
\newcommand{\M}{\ensuremath{\mathcal{M}}}
\renewcommand{\S}{\ensuremath{\mathcal{S}}}
\newcommand{\set}[1]{\{#1\}}                      
\newcommand{\tup}[1]{\langle #1\rangle}            
\newtheorem{lemma}{Lemma}[section]
\newtheorem{theorem}{Theorem}[section]
\newtheorem{corollary}[theorem]{Corollary}
\newtheorem{proposition}{Proposition}[section]
\newtheorem{definition}[theorem]{Definition}
\newtheorem{assumption}[theorem]{Assumption}
\theoremstyle{remark}
\newtheorem{remark}{Remark}[section]
\newtheorem{example}{Example}[section]
\begin{document}

\title{Verification of Data-Aware Processes via Array-Based Systems (Extended Version)}

\author{Diego Calvanese$^1$, Silvio Ghilardi$^2$, Alessandro Gianola$^1$,\\
 Marco Montali$^1$ and Andrey Rivkin$^1$\\
 \\
 $^1$ Free University of Bozen-Bolzano\\
 \texttt{\textit{surname}@inf.unibz.it}\\
 \\
 $^2$ Università degli Studi di Milano\\
 \texttt{silvio.ghilardi@unimi.it}}

\date{Wednesday 5th December, 2018}

\maketitle

\begin{abstract}
  We study verification over a general model of artifact-centric systems, to
  assess (parameterized) safety properties irrespectively of the initial
  database instance.  We view such artifact systems as array-based systems,
  which allows us to check safety by adapting backward reachability,
  establishing for the first time a correspondence with model checking based on
  Satisfiability-Modulo-Theories (SMT). To do so, we make use of the
  model-theoretic machinery of model completion, which surprisingly turns out
  to be an effective tool for verification of relational systems, and
  represents the main original contribution of this paper.  In this way, we
  pursue a twofold purpose.  On the one hand, we reconstruct (restricted to
  safety) the essence of some important decidability results obtained in the
  literature for artifact-centric systems, and we devise a genuinely novel
  class of decidable cases.  On the other, we are able to exploit SMT
  technology in implementations, building on the well-known MCMT model checker
  for array-based systems, and extending it to make all our foundational
  results fully operational.
\end{abstract}

\section{Introduction}

During the last two decades, a huge body of research has been dedicated to the
challenging problem of reconciling data and process
management within contemporary organizations~\cite{Rich10,Dum11,Reic12}. This
requires to move from a purely control-flow understanding of business processes
to a more holistic approach that also considers how data are manipulated and
evolved by the process.
Striving for this integration, new models were
devised, with two
prominent representatives: object-centric processes \cite{KuWR11}, and business
artifacts \cite{Hull08,DaHV11}.

In parallel, a flourishing series of results has been dedicated to the formalization of such integrated models, and on the boundaries of decidability and complexity for their static analysis and  verification~\cite{CaDM13}. Such results are quite fragmented, since they consider a variety of different assumptions on the model and on the static analysis tasks \cite{Vian09,CaDM13}.
Two main trends can be identified within this line. A recent series of results
focuses on very general data-aware processes that evolve a full-fledged,
relational database (DB) with arbitrary first-order constraints
\cite{BeLP12,BCDDM13,AAAF16,CDMP17}. Actions amount to full bulk updates that
may simultaneously operate on multiple tuples at once, possibly injecting fresh
values taken from an infinite data domain.  Verification is studied by fixing
the initial instance of the DB, and by considering all possible
evolutions induced by the process over the initial data.

A second trend of research
is instead focused on the formalization and verification of artifact-centric
processes. These systems are traditionally formalized using three components
\cite{DHPV09,DaDV12}:
\begin{inparaenum}[\itshape (i)]
\item a read-only DB that stores fixed, background
  information,
\item a working memory that stores the evolving state of artifacts, and
\item actions that update the working memory.
\end{inparaenum}
Different variants of this model, obtained via a careful tuning of the relative
expressive power of its three components, have been studied towards
decidability of verification problems parameterized over the read-only DB (see,
e.g., \cite{DHPV09,DaDV12,boj,DeLV16}). These are verification problems where a
property is checked for every possible configuration of the read-only DB.

The overarching goal of this work is to connect, for the first time, such
formal models
and their corresponding verification problems on the one hand, with the models
and techniques of \emph{model checking via Satisfiability-Modulo-Theories
 (SMT)} on the other hand.  This is concretized through four technical
contributions.

Our \emph{first contribution} is the definition of a general framework of
so-called \emph{\RAS{s}} (\ras{s}), in which artifacts are formalized in the
spirit of \emph{array-based systems}, one of the most sophisticated setting
within the SMT tradition. In this setting, \sas{s} are a particular class 
of \ras{s}, where only artifact variables are allowed.
``Array-based systems'' is an umbrella term generically referring to
infinite-state transition systems implicitly specified using a declarative,
logic-based formalism.  The formalism captures transitions manipulating arrays
via logical formulae, and its precise definition depends on the specific
application of interest.  The first declarative formalism for array-based
systems was introduced in~\cite{ijcar08,lmcs} to handle the verification of
distributed systems, and afterwards was successfully employed also to verify a
wide range of infinite-state systems~\cite{FI,fmsd}.
Distributed systems are parameterized in their essence: the number $N$ of
interacting processes within a distributed system is unbounded, and the
challenge is that of supplying certifications that are valid for all possible
values of the parameter $N$. The overall state of the system is typically
described by means of arrays indexed by process identifiers, and used to store
the content of process variables like locations and clocks. These arrays are
genuine \emph{second order function} variables: they map indexes to elements,
in a way that changes as the system evolves. \emph{Quantifiers} are then used
to represent sets of system states.
%
%
\ras{s} employ arrays to capture a very rich working memory that simultaneously
accounts for artifact variables storing single data elements, and full-fledged
artifact relations storing unboundedly many tuples. Each artifact relation is
captured using a collection of arrays, so that a tuple in the relation can be
retrieved by inspecting the content of the arrays with a given index. The
elements stored therein may be fresh values injected into the \ras, or data
elements extracted from the read-only DB, whose relations are subject to key
and foreign key constraints. This constitutes a big leap from the usual applications of array-based systems, because the nature of such constraints is quite different and requires
completely new  techniques for handling them (for instance, for quantifier elimination, see below). To attack this complexity,
by relying on array-based systems, \ras{s} encode the read-only DB using a
functional, algebraic view, where relations and constraints are captured using
multiple sorts and unary functions.  The resulting model captures the essential
aspects of the model in \cite{verifas}, which in turn is tightly related
(though incomparable) to the sophisticated formal model for artifact-centric
systems of \cite{DeLV16}.

Our \emph{second contribution} is the development of \emph{algorithmic
 techniques} for the verification of \emph{(parameterized) safety} properties
over \ras{s}, which amounts to determine whether there exists an instance of
the read-only DB that allows the \ras to evolve from its initial configuration
to an \emph{undesired} one that falsifies a given state property. To attack
this problem, we build on \emph{backward reachability}~\cite{ijcar08,lmcs}, one
of the most well-established techniques for safety verification in array-based
systems.  This is a correct, possibly non-terminating technique that
\emph{regresses} the system from the undesired configuration to those
configurations that reach the undesired one.  This is done by iteratively
computing symbolic pre-images, until they either intersect the initial
configuration of the system (witnessing unsafety), or they form a fixpoint that
does not contain the initial state (witnessing safety).

Adapting backward reachability to the case of \ras{s}, by retaining soundness
and completeness, requires genuinely novel research so as to eliminate new
(existentially quantified) ``data'' variables introduced during regression.
Traditionally, this is done
by quantifier instantiation or elimination.  However, while quantifier
instantiation
can be transposed to \ras{s}, quantifier elimination cannot, since the data
elements contained in the arrays point to the content of a full-fledged DB with
constraints.  To reconstruct quantifier elimination in this setting, which is
the main technical contribution of this work, we employ the classic
model-theoretic machinery of \emph{model completions}~\cite{mma}: via model
completions, we prove that the runs of a \ras can be faithfully lifted to
richer contexts
where quantifier elimination is indeed available, despite the fact that it was
not available in the original
structures.  This allows us to recast safety problems over \ras{s} into
equivalent safety problems in this richer setting.

Our \emph{third contribution} is the identification of three notable classes of
\ras{s} for which backward reachability terminates, in turn witnessing
decidability of safety. The first class restricts the working memory to
variables only, i.e., focuses on \sas. The second class focuses on \ras operating under the restrictions imposed in~\cite{verifas}:
it requires acyclicity of foreign keys and ensures a sort of locality principle
where different artifact tuples are not compared.  Consequently, it
reconstructs the decidability result exploited in~\cite{verifas} if one
restricts the verification logic used there to safety properties only.  In
addition, our second class supports full-fledged bulk updates, which greatly increase
the expressive power of dynamic systems \cite{SchmitzS13} and, in our setting,
witness the incomparability of our results and those in \cite{verifas}.
The third class is genuinely novel, and while it further restricts foreign keys
to form a tree-shaped structure, it does not impose any restriction on the
shape of updates, and consequently supports not only bulk updates, but also
comparisons between artifact tuples.

Our \emph{fourth contribution} concerns the implementation of backward
reachability techniques for \ras{s}. Specifically, we have extended the
well-known \textsc{mcmt} model checker for array-based systems~\cite{mcmt},
obtaining a fully operational counterpart to all the foundational results
presented in the paper. Even though implementation and experimental evaluation
are not central in this paper,  we note that our model checker correctly handles the
examples produced to test \textsc{verifas}~\cite{verifas}, as well as
additional examples that go beyond the verification capabilities of
\textsc{verifas}, and report some interesting case here.
 The performance of
\textsc{mcmt} to conduct verification of these examples is very encouraging,
and indeed provides the first stepping stone towards effective, SMT-based
verification techniques for artifact-centric systems.

\section{Preliminaries}
\label{sec:preliminaries}

We adopt the usual first-order syntactic notions of signature, term, atom,
(ground) formula, and so on.  We use $\uu$ to represent a tuple
$\tup{u_1,\ldots,u_n}$.  Our signatures $\Sigma$ are multi-sorted and include
equality for every sort, which implies that variables are sorted as well.
Depending on the context, we keep the sort of a variable implicit, or we
indicate explicitly in a formula that variable $x$ has sort $S$ by employing
notation $x:S$.  The notation $t(\ux)$, $\phi(\ux)$ means that the term $t$,
the formula $\phi$ has free variables included in the tuple $\ux$.
Constants and function symbols $f$ have
 \emph{sources} $\uS$ and a \emph{target} $S'$, denoted as
$f:\uS\longrightarrow S'$ (relation symbols $r$ only have sources $r:\uS$).  
We assume that terms and formulae are well-typed,
in the sense that the sorts of variables, constants, and 
relations, function
sources/targets match.  A formula is said to be \emph{universal} (resp.,
\emph{existential}) if it has the form $\forall \ux\, (\phi(\ux))$ (resp.,
$\exists \ux\, (\phi(\ux))$), where $\phi$ is a quantifier-free
formula. Formulae with no free variables are called \emph{sentences}.

From the semantic side, we use the standard notions of a
\emph{$\Sigma$-structure} $\cM$ and of \emph{truth} of a formula in a
$\Sigma$-structure under an assignment to the free variables.
A \emph{$\Sigma$-theory} $T$ is a set of $\Sigma$-sentences; a \emph{model} of
$T$ is a $\Sigma$-structure $\cM$ where all sentences in $T$ are true.  We use
the standard notation $T\models \phi$ to say that $\phi$ is true in all models
of $T$ for every assignment to the free variables of $\phi$.  We say that
$\phi$ is \emph{$T$-satisfiable} iff there is a model $\cM$ of $T$ and an
assignment to the free variables of $\phi$ that make $\phi$ true in $\cM$.


In the following (cf.~Section~\ref{sec:artifact}) we specify transitions of an
artifact-centric system using first-order formulae.  To obtain a more compact
representation, we make use there of definable extensions as a means for
introducing so-called \emph{case-defined functions}.  We fix a signature
$\Sigma$ and a $\Sigma$-theory $T$; a \emph{$T$-partition} is a finite set
$\kappa_1(\ux), \dots, \kappa_n(\ux)$ of quantifier-free formulae
such that $T\models \forall \ux \bigvee_{i=1}^n \kappa_i(\ux)$ and
$T\models \bigwedge_{i\not=j}\forall \ux \neg (\kappa_i(\ux)\wedge
\kappa_j(\ux))$.  Given such a $T$-partition
$\kappa_1(\ux), \dots, \kappa_n(\ux)$ together with $\Sigma$-terms
$t_1(\ux), \dots, t_n(\ux)$ (all of the same target sort), a
\emph{case-definable extension} is the $\Sigma'$-theory $T'$, where
$\Sigma'=\Sigma\cup\{F\}$, with $F$ a ``fresh'' function symbol (i.e.,
$F\not\in\Sigma$)\footnote{Arity and source/target sorts for $F$ can be
 deduced from the context (considering that everything is well-typed).}, and
$T'=T \cup\bigcup_{i=1}^n \{\forall\ux\; (\kappa_i(\ux) \to F(\ux) =
t_i(\ux))\}$.
Intuitively, $F$ represents a case-defined function, which can be reformulated
using nested if-then-else expressions and can be written as
$ F(\ux) ~:=~ \mathtt{case~of}~ \{\kappa_1(\ux):t_1;\cdots;\kappa_n(\ux):t_n\}.
$ By abuse of notation, we identify $T$ with any of its case-definable
extensions $T'$.  In fact, it is easy to produce from a $\Sigma'$-formula
$\phi'$ a $\Sigma$-formula $\phi$ equivalent to $\phi'$ in all models of $T'$:
just remove (in the appropriate order) every occurrence $F(\uv)$ of the new
symbol $F$ in an atomic formula $A$, by replacing $A$ with
$\bigvee_{i=1}^n (\kappa_i(\uv) \land A(t_i(\uv)))$.
We also exploit $\lambda$-abstractions (see, e.g., formula~\eqref{eq:trans1}
below) for a more compact (still first-order) representation of some complex
expressions, and always use them in atoms like $b = \lambda y. F(y,\uz)$ as
abbreviations of $\forall y.~b(y)=F(y,\uz)$ (where, typically, $F$ is a symbol
introduced in a case-defined extension as above).

\section{Read-only Database Schemas}
\label{sec:readonly}

We now provide a formal definition of (read-only) DB-schemas by relying on an
algebraic, functional characterization, and derive some key model-theoretic
properties.

\begin{definition}\label{def:db}
  A \emph{DB schema} is a pair $\tup{\Sigma,T}$, where:
  \begin{inparaenum}[\itshape (i)]
  \item $\Sigma$ is a \emph{DB signature}, that is, a finite multi-sorted
    signature whose only symbols are equality, unary functions, and constants;
  \item $T$ is a \emph{DB theory}, that is, a set of universal
    $\Sigma$-sentences.
  \end{inparaenum}
\end{definition}
Next, we refer to a DB schema simply through its (DB) signature $\Sigma$ and
(DB) theory $T$, and denote by $\sorts{\Sigma}$ the set of sorts and by
$\functs{\Sigma}$ the set of functions in $\Sigma$.
Since $\Sigma$ contains only unary function symbols and equality, all atomic
$\Sigma$-formulae are of the form $t_1(v_1)=t_2(v_2)$, where $t_1$, $t_2$ are
possibly complex terms, and $v_1$, $v_2$ are either variables or constants.

\begin{remark}\label{rem:extdb}

If desired, we can freely extend DB schemas by adding arbitrary $n$-ary relation symbols to the signature $\Sigma$. For this purpose, we give the
following definition.

\begin{definition}\label{def:extdb}
  A \emph{DB extended-schema} is a pair $\tup{\Sigma,T}$, where:
  \begin{inparaenum}[\itshape (i)]
  \item $\Sigma$ is a \emph{DB extended-signature}, that is, a finite multi-sorted
    signature whose only symbols are equality, $n$-ary relations, unary functions, and constants;
  \item $T$ is a \emph{DB extended-theory}, that is, a set of universal
    $\Sigma$-sentences.
  \end{inparaenum}
\end{definition}

Since for our application we are only interested in relations with primary and foreign key dependencies (even if our implementation takes into account also the case of ``free'' relations, i.e. without key dependencies),
we restrict our focus on DB schemas, which are sufficient to capture those constraints (as explained in the following subsection).
We notice that, in case Assumption~\ref{ass}
discussed below holds for DB extended-theories, all the results presented in Section~\ref{sec:artifact} (and Theorem~\ref{thm:basic}) still hold even considering DB extended-schemas instead of DB schemas.
\end{remark}

We associate to a DB signature $\Sigma$ a characteristic graph $G(\Sigma)$
capturing the dependencies induced by functions over sorts.\footnote{The same definition can be adopted also for extended
DB signatures (relation symbols do not play a role in it).}
Specifically,
$G(\Sigma)$ is an edge-labeled graph whose set of nodes is $\sorts{\Sigma}$,
and with a labeled edge $S \xrightarrow{f} S'$ for each $f:S\longrightarrow S'$
in $\functs{\Sigma}$.
We say that $\Sigma$ is \emph{acyclic} if $G(\Sigma)$ is so. The \emph{leaves}
of $\Sigma$ are the nodes of $G(\Sigma)$ without outgoing edges.  These
terminal sorts are divided in two subsets, respectively representing
\emph{unary relations} and \emph{value sorts}. Non-value sorts (i.e., unary
relations and non-leaf sorts) are called \emph{id sorts}, and are conceptually
used to represent (identifiers of) different kinds of objects. Value sorts,
instead, represent datatypes such as strings, numbers, clock values, etc. We
denote the set of id sorts in $\Sigma$ by $\ids{\Sigma}$, and that of value
sorts by $\vals{\Sigma}$, hence
$\sorts{\Sigma} = \ids{\Sigma}\uplus\vals{\Sigma}$.

We now consider extensional data.
\begin{definition}
  \label{def:instance}
  A \emph{DB instance} of DB schema $\tup{\Sigma,T}$ is a $\Sigma$-structure
  $\cM$ that is a model of $T$ and such that every id sort of $\Sigma$ is
  interpreted in $\cM$ on a \emph{finite} set.
\end{definition}
Contrast this to arbitrary \emph{models} of $T$, where no finiteness assumption
is made.  What may appear as not customary in Definition~\ref{def:instance} is
the fact that value sorts can be interpreted on infinite sets. This allows us,
at once, to reconstruct the classical notion of DB instance as a finite model
(since only finitely many values can be pointed from id sorts using functions),
at the same time supplying a potentially infinite set of fresh values to be
dynamically introduced in the working memory during the evolution of the
artifact system. More details on this will be given in
Section~\ref{sec:relational-view}.

We respectively denote by $S^\cM$, $f^\cM$, and $c^\cM$ the interpretation in
$\cM$ of the sort $S$ (this is a set), of the function symbol $f$ (this is a
set-theoretic function), and of the constant $c$ (this is an element of the
interpretation of the corresponding sort).  Obviously, $f^\cM$ and $c^\cM$ must
match the sorts in $\Sigma$. E.g., if $f$ has source $S$ and target $U$, then
$f^\cM$ has domain $S^\cM$ and range $U^\cM$.

\begin{figure*}[tbp]
  \centering
  \scalebox{.7}{
   \begin{tikzpicture}[node distance=3mm and 5mm]
     \node[idnode] (userid) {\userid};
     \node[functnode, right=of userid] (username) {\username};

     \node[idnode, above=of userid] (empid) {\empid};
     \node[functnode, right=of empid] (empname) {\empname};

     \node[idnode, above=7mm of empid] (compinid) {\compinid};
     \node[functnode, right=of compinid, yshift=-1mm] (compemp) {\compemp};
     \node[functnode, right=of compemp, yshift=2mm] (compjob) {\compjob};

     \node[idnode, above=7mm of compinid] (jobcatid) {\jobcatid};
     \node[functnode, right=of jobcatid] (jobcatdescr) {\jobdescr};

     \node[valnode, above right=0mm and 12mm of empname] (stringval) {\stringval};

     \draw[f]
       (userid) edge (username);
     \draw[fd,out=0,in=180]
       (username) edge ($(stringval.west)-(0,2mm)$);

     \draw[f]
       (empid) edge (empname);
     \draw[fd,out=0,in=180]
       (empname) edge (stringval);

     \draw[f,out=0,in=180]
       ($(compinid.east)+(0mm,-1mm)$) edge (compemp.west);
     \draw[fd]
       (compemp) |- ($(compemp.south)+(-3mm,-3mm)$) -| (empid);
     \draw[f,out=0,in=180]
       ($(compinid.east)+(0mm,1mm)$) edge (compjob);
     \draw[fd]
       (compjob) |- ($(compjob.north)+(-3mm,3mm)$) -| (jobcatid);

     \draw[f]
       (jobcatid) edge (jobcatdescr);
     \draw[fd,out=0,in=180]
       (jobcatdescr) edge ($(stringval.west)+(0,2mm)$);;

     \node[
       relation=2,
       rectangle split horizontal,
       rectangle split part fill={lightgray!50},
       right=8.2cm of userid,
       ultra thick,
     ] (user)
     {
       \nodepart{one} \underline{$\mathit{id}$} : \userid %
       \nodepart{two}  {\username} : \stringval
     };
     \node[left=0mm of user] {\relname{User}};

     \node[
       relation=2,
       rectangle split horizontal,
       rectangle split part fill={lightgray!50},
       above= of user.north west,
       anchor=south west,
       ultra thick
     ] (emp)
     {
       \nodepart{one} \underline{$\mathit{id}$} : \empid %
       \nodepart{two}  {\empname} : \stringval
     };
     \node[left=0mm of emp] {\relname{Employee}};

     \node[
       relation=3,
       rectangle split horizontal,
       rectangle split part fill={lightgray!50},
       above=7mm of emp.north west,
       anchor=south west,
       ultra thick
     ] (compin)
     {
       \nodepart{one} \underline{$\mathit{id}$} : \compinid %
       \nodepart{two}  {\compemp} : \empid
       \nodepart{three}  {\compjob} : \jobcatid
     };
     \node[left=0mm of compin] {\relname{CompetentIn}};

     \node[
       relation=2,
       rectangle split horizontal,
       rectangle split part fill={lightgray!50},
       above=7mm of compin.north west,
       anchor=south west,
       ultra thick
     ] (jobcat)
     {
       \underline{$\mathit{id}$} : \jobcatid %
       \nodepart{two}  {\jobdescr} : \stringval
     };
     \node[left=0mm of jobcat] {\relname{JobCategory}};

     \draw[fd,out=-90,in=90] (compin.two south) |-
      ($(compin.two south)-(3mm,3mm)$) -| (emp.one north);

     \draw[fd,out=90,in=-90] (compin.three north) |-
      ($(compin.two north)+(-3mm,3mm)$) -| (jobcat.one south);
   \end{tikzpicture}
  }
  \caption{On the left: characteristic graph of the human resources DB
   signature from Example~\ref{ex:hr}. On the right: relational view of the DB
   signature; each cell denotes an attribute with its type, underlined
   attributes denote primary keys, and directed edges capture foreign keys.}
  \label{fig:hr}
\end{figure*}
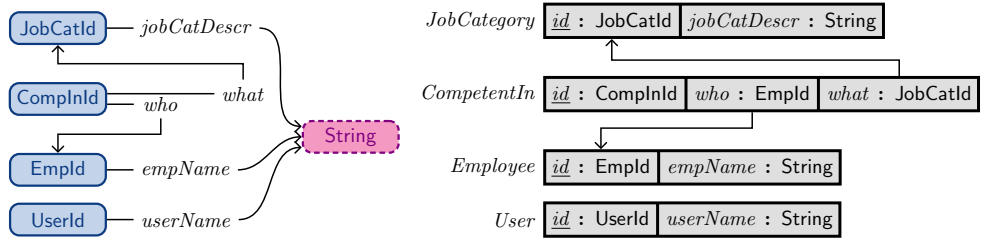

\begin{example}
  \label{ex:hr}
  The human resource (HR) branch of a company stores the following information
  inside a relational database:
  \begin{inparaenum}[\itshape (i)]
  \item users registered to the company website, who are potential job
    applicants;
  \item the different, available job categories;
  \item employees belonging to HR, together with the
    job categories they are competent in.
  \end{inparaenum}
  To formalize these different aspects, we make use of a DB signature $\hrsig$
  consisting of:
  \begin{inparaenum}[\itshape (i)]
  \item four id sorts, used to respectively identify users, employees, job
    categories, and the competence relationship connecting employees to job
    categories;
  \item one value sort containing strings used to name users and
    employees, and describe job categories.
  \end{inparaenum}
  In addition, $\hrsig$ contains five function symbols mapping:
  \begin{inparaenum}[\itshape (i)]
  \item user identifiers to their corresponding names;
  \item employee identifiers to their corresponding names;
  \item job category identifiers to their corresponding descriptions;
  \item competence identifiers to their corresponding employees and job
    categories.
  \end{inparaenum}
  The characteristic graph of $\hrsig$ is shown in Figure~\ref{fig:hr} (left
  part).
  \hfill\ensuremath{\triangleleft}
\end{example}

We close the formalization of DB schemas by discussing DB theories, whose role
is to encode background axioms.
We illustrate a typical background axiom, required to handle the possible
presence of \emph{undefined identifiers/values} in the different sorts. This
axiom is essential to capture artifact systems whose working memory is
initially undefined, in the style of~\cite{DeLV16,verifas}.
To specify an undefined value we add to every sort $S$ of $\Sigma$ a constant
$\nullv_S$ (written from now on, by abuse of notation, just as $\nullv$, used
also to indicate a tuple).  Then, for each function symbol $f$ of $\Sigma$, we
add the following axiom to the DB theory:
\begin{equation}
  \label{eq:null}
  \forall x~(x = \nullv \leftrightarrow f(x) = \nullv)
\end{equation}
This axiom states that the application of $f$ to the undefined value produces
an undefined value, and it is the only situation for which $f$ is undefined.

\begin{remark}
  In the artifact-centric model in the style of~\cite{DeLV16,verifas} that we
  intend to capture, the DB theory consists of Axioms~\eqref{eq:null} only.
  %
  However, our technical results do not require this specific choice, and more
  general sufficient conditions will be discussed later. These conditions apply to natural variants of Axiom~\eqref{eq:null} (such variants might be used to model situations where we would like to have for instance 
  many undefined values).
\end{remark}

\subsection{Relational View of DB Schemas}
\label{sec:relational-view}

We now clarify how the algebraic, functional characterization of DB schema and
instance can be actually reinterpreted in the classical, relational
model. Definition~\ref{def:db} naturally corresponds to the definition of
relational database schema equipped with single-attribute \emph{primary keys}
and \emph{foreign keys} (plus a reformulation of
constraint~\eqref{eq:null}). To technically explain the correspondence, we
adopt the \emph{named perspective}, where each relation schema is defined by a
signature containing a \emph{relation name} and a set of \emph{typed attribute
 names}. Let $\tup{\Sigma,T}$ be a DB schema. Each id sort $S \in \ids{\Sigma}$
corresponds to a dedicated relation $R_S$ with the following attributes:
\begin{inparaenum}[\itshape (i)]
\item one identifier attribute $id_S$ with type $S$;
\item one dedicated attribute $a_f$ with type $S'$ for every function symbol
  $f \in \functs{\Sigma}$ of the form $f: S \longrightarrow S'$.
\end{inparaenum}

The fact that $R_S$ is built starting from functions in $\Sigma$ naturally
induces different database dependencies in $R_S$. In particular, for each
non-id attribute $a_f$ of $R_S$, we get a \emph{functional dependency} from
$id_S$ to $a_f$; altogether, such dependencies in turn witness that
$\mathit{id}_S$ is the \emph{(primary) key} of $R_S$. In addition, for each
non-id attribute $a_f$ of $R_S$ whose corresponding function symbol $f$ has id
sort $S'$ as image, we get an \emph{inclusion dependency} from $a_f$ to the id
attribute $id_{S'}$ of $R_{S'}$; this captures that $a_f$ is a \emph{foreign
 key} referencing $R_{S'}$.

\begin{example}
  The diagram on the right in Figure~\ref{fig:hr} graphically depicts the
  relational view corresponding to the DB signature of Example~\ref{ex:hr}.
  \hfill\ensuremath{\triangleleft}
\end{example}
Given a DB instance $\M$ of $\tup{\Sigma,T}$, its corresponding
\emph{relational instance} $\I$ is the minimal set satisfying the following
property: for every id sort $S \in \ids{\Sigma}$, let $f_1,\ldots,f_n$ be all
functions in $\Sigma$ with domain $S$; then, for every identifier
$\constant{o} \in S^\M$, $\I$ contains a \emph{labeled fact} of the form
$R_S(\lentry{id_S}{\constant{o}^\M}, \lentry{a_{f_1}}{f_1^\M(\constant{o}^\M)},
\ldots, \lentry{a_{f_n}}{f_n^\M(\constant{o}^\M)})$.
With this interpretation, the \emph{active domain of $\I$} is the  set
\[
    \bigcup_{S \in \ids{\Sigma}} (S^\M \setminus \set{\nullv^\M}) 
    \cup
    \left\{
      \constant{v} \in \bigcup_{V \in \vals{\Sigma}} V^\M ~\left|~
        \begin{array}[c]{@{}l@{}}
          \constant{v}\neq\nullv^\M \text{ and there exist } f \in \functs{\Sigma}\\ \text{and }
          \constant{o}\in \domain{f^\M} \text{ s.t.~} f^\M(\constant{o}) =
          \constant{v}
        \end{array}
      \right.
      \!
    \right\}
\]
consisting of all (proper) identifiers assigned by $\M$ to id sorts, as well as
all values obtained in $\M$ via the application of some function. Since such
values are necessarily \emph{finitely many}, one may wonder why in
Definition~\ref{def:instance} we allow for interpreting value sorts over
infinite sets. The reason is that, in our framework, an evolving artifact
system may use such infinite provision to inject and manipulate new values into
the working memory.
From the definition of active domain above, exploiting Axioms~\eqref{eq:null} we get that the membership of a tuple
 $(x_0,\dots, x_n)$ to a generic $n+1$-ary relation $R_S$  with key dependencies
(corresponding to an id sort $S$)  can be expressed in our setting by using just unary function symbols and equality:

 \begin{equation}\label{eq:relations}
  R_S(x_0, \dots, x_n) \mbox{ iff } x_0 \neq \nullv \land x_1=f_1(x_0) \land \dots \land x_n = f_n(x_0) 
  \end{equation}
  
  Hence, the representation of negated atoms is the one that directly follows from negating~\eqref{eq:relations}:

 \begin{equation}\label{eq:not-relations}
 \neg R_S(x_0,\dots,x_n) \mbox{ iff } x_0 = \nullv \lor x_1 \neq f1(x_0) \lor \dots \lor x_n \neq f_n(x_0)
 \end{equation}

This relational interpretation of DB schemas exactly reconstructs the
requirements posed by~\cite{DeLV16,verifas} on the schema of the
\emph{read-only} database:
\begin{inparaenum}[\itshape (i)]
\item each relation schema has a single-attribute primary key;
\item attributes are typed;
\item attributes may be foreign keys referencing other relation schemas;
\item the primary keys of different relation schemas are pairwise disjoint.
\end{inparaenum}


We stress that all such requirements are natively captured in our functional
definition of a DB signature, and do not need to be formulated as axioms in the
DB theory. The DB theory is used to express additional constraints, like that
in Axiom~\eqref{eq:null}. In the following subsection, we thoroughly discuss
which properties must be respected by signatures and theories to guarantee that
our verification machinery is well-behaved.

One may wonder why we have not directly adopted a relational view for DB
schemas. This will become clear during the technical development. We anticipate
the main, intuitive reasons. First, our functional view allows us to
reconstruct in a single, homogeneous framework, some important results on
verification of artifact systems, achieved on different models that have been
unrelated so far~\cite{boj,DeLV16}. 
 Second, our
functional view makes the dependencies among different types explicit. In fact,
our notion of characteristic graph, which is readily computed from a DB
signature, exactly reconstructs the central notion of foreign key graph used
in~\cite{DeLV16} towards the main decidability results. 
Finally, we underline, once again, that \emph{free 
$n$-ary relation symbols can be added to our signatures} (see Remark~\ref{rem:extdb} and Definition~\ref{def:extdb} above) without compromising the results underlying our techniques.



%



\begin{remark}
%
%
 In some situations, it is useful to have many undefined keys and possibly also incomplete relations 
 with some undefined values. In such cases,
%
then one can only assume the left-to-right side
of~\eqref{eq:null}, which is equivalent to the ground axiom
\begin{equation}\label{eq:null1}
  f(\nullv) = \nullv
\end{equation}
In order to preserve the condition of being a foreign
key (i.e., the requirement that, for each
non-id attribute $a_f$ of a relation $R_S$ whose corresponding function symbol $f$ has id
sort $S'$ as image, we want an \emph{inclusion dependency} from $a_f$ to the id
attribute $id_{S'}$ of the relation $R_{S'}$), the axioms
\begin{equation}\label{eq:null2}
  \forall x~(f(x) \neq \nullv \rightarrow g(f(x)) \neq \nullv)~~
\end{equation}
are also needed. 
\end{remark}

\subsection{Formal Properties of DB Schemas}
\label{sec:theories}

The theory $T$ from Definition~\ref{def:db} must satisfy few crucial
requirements for our approach to work. In this section, we define such
requirements and show that they are matched, e.g., when the signature $\Sigma$ is
acyclic (as in~\cite{verifas}) and $T$ consists of Axioms~\eqref{eq:null} only. Actually, acyclicity is a stronger requirement than needed, which,
however, simplifies our exposition.

\medskip
\noindent
\textbf{Finite Model Property}. 
A $\Sigma$-formula $\phi$ is a $\Sigma$-\emph{constraint} (or just a
constraint) iff it is a conjunction of literals.
The constraint satisfiability problem for $T$ asks: given an existential
formula $\exists \uy \,\phi(\ux,\uy)$ (with $\phi$ a constraint\footnote{For
 the purposes of this definition, we may equivalently take $\phi$ to be
 quantifier-free.}), are there a model $\cM$ of $T$ and an assignment $\alpha$
to the free variables $\ux$ such that
$\cM, \alpha \models \exists \uy \,\phi(\ux,\uy)$?

We say that $T$ has the \emph{finite model property} (for constraint
satisfiability) iff every constraint $\phi$ that is satisfiable in a model of
$T$ is satisfiable in a DB instance of $T$.\footnote{This directly implies that $\phi$ is satisfiable also in a DB instance that interprets value sorts into finite sets.} The finite model property implies decidability of the constraint satisfiability
problem in case $T$ is recursively axiomatized. The following is proved in
Appendix~\ref{app_sec2}:

\begin{proposition}\label{prop:fmp}
  $T$ has the finite model property in  case $\Sigma$ is acyclic.
\end{proposition}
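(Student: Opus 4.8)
The plan is to show that, when $\Sigma$ is acyclic, any constraint $\phi$ satisfiable in a model $\cM$ of $T$ is already satisfiable in a DB instance, i.e.\ in a model of $T$ where every id sort is interpreted on a finite set. The key idea is to carve out of $\cM$ a finite substructure generated by the (finitely many) elements that are needed to witness $\phi$, and then to observe that, thanks to acyclicity, this generation process terminates.

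More precisely, fix a model $\cM$ of $T$ and an assignment $\alpha$ making $\exists \uy\,\phi(\ux,\uy)$ true; let $a_1,\dots,a_k \in |\cM|$ be the finitely many elements interpreting the free and existential variables of $\phi$, together with the interpretations $c^\cM$ of all constants occurring in $\Sigma$ (recall $\Sigma$ has finitely many symbols). First I would close this finite set under all unary functions of $\Sigma$: let $D_0 = \{a_1,\dots,a_k\} \cup \{c^\cM : c \in \Sigma\}$ and $D_{n+1} = D_n \cup \{f^\cM(d) : f \in \functs{\Sigma},\ d \in D_n \cap (\text{source of } f)\}$. Because $\Sigma$ is acyclic, iterated application of functions strictly descends in the characteristic graph $G(\Sigma)$, so after finitely many steps (bounded by the longest path in $G(\Sigma)$) we reach a fixpoint $D = \bigcup_n D_n$, which is finite. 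Define $\cN$ to be the substructure of $\cM$ with domain $D$ (partitioned sortwise as $D \cap S^\cM$), with functions and constants inherited from $\cM$; this is well-defined precisely because $D$ is closed under the functions and contains the constants.

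Next I would check that $\cN$ is a DB instance satisfying $\phi$. Satisfaction of $\phi$ under $\alpha$ is immediate: $\phi$ is a conjunction of literals (equalities and negated equalities between terms built from the $a_i$, constants, and unary functions), all of whose subterms are evaluated inside $D$, and truth of such literals is absolute between $\cM$ and its substructure $\cN$. Every id sort is interpreted on a subset of the finite set $D$, hence finitely, so the DB-instance finiteness condition holds. The one genuine point to verify — and the step I expect to be the main obstacle — is that $\cN$ is still a model of $T$: here we crucially use that $T$ consists of \emph{universal} $\Sigma$-sentences (Definition~\ref{def:db}), and universal sentences are preserved under substructures. So $\cN \models T$, and $\cN$ witnesses $T$-satisfiability of $\phi$ in a DB instance.

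The only subtlety in the obstacle step is making sure the construction of $\cN$ as a substructure is legitimate in the multi-sorted setting: one must take $D$ to be closed under \emph{every} function symbol applied to elements of the appropriate source sort, which is exactly what the closure $D_{n+1}$ above does, and one must ensure each sort's carrier in $\cN$ is nonempty (if some sort ends up empty in $D$, simply add one extra element of that sort from $\cM$, or note that the relevant sorts are inhabited by constants such as $\nullv$ when Axioms~\eqref{eq:null} are present) — this does not affect acyclicity of the closure or universality-preservation. Finally, since the constraint satisfiability problem quantifies existentially over an assignment, having produced one finite model suffices; the ``recursively axiomatized $\Rightarrow$ decidable'' consequence then follows by the standard argument of dovetailing a proof search for unsatisfiability against an enumeration of finite DB instances, though that is not part of the proposition's statement.
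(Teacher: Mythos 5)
Your argument is correct, but it is not the route the paper takes. You prove the finite model property semantically: starting from an arbitrary model $\cM$ of $T$ satisfying the constraint, you take the substructure generated by the witnessing elements and the constants, observe that acyclicity makes this closure finite (each generator contributes at most as many elements as there are paths in $G(\Sigma)$ from its sort), and conclude via preservation of universal sentences under substructures and absoluteness of literals. The paper instead argues syntactically: it invokes Herbrand's Theorem to reduce satisfiability of $T\cup\{\phi\}$ to satisfiability of the set of ground instances of the (universal) axioms of $T$ over the finitely many ground terms that acyclicity permits, and then settles the resulting ground problem by congruence closure, which yields a finite (term) model. The two proofs hinge on exactly the same two facts --- acyclicity bounds the term structure, and $T$ is universal --- but your generated-substructure argument is more elementary and self-contained, whereas the paper's Herbrand-plus-congruence-closure reduction simultaneously delivers an effective decision procedure for constraint satisfiability (a point the surrounding remarks in the appendix exploit), which your proof only recovers indirectly via the standard finite-model-property-plus-recursive-axiomatization argument you sketch at the end. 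The multi-sorted empty-carrier caveat you flag is handled correctly, provided you re-close under functions after padding an empty sort, which remains finite by the same acyclicity argument.
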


\medskip
\noindent
\textbf{Quantifier Elimination.} 
A $\Sigma$-theory $T$ has \emph{quantifier elimination} iff for every
$\Sigma$-formula $\phi(\ux)$ there is a quantifier-free formula $\phi'(\ux)$
such that $T\models \phi(\ux)\leftrightarrow \phi'(\ux)$.  It is known that
quantifier elimination holds if quantifiers can be eliminated from
\emph{primitive} formulae, i.e., formulae of the kind
$\exists \uy \,\phi(\ux, \uy)$, with $\phi$ a constraint. We assume that when
quantifier elimination is considered, there is an effective procedure that
eliminates quantifiers.

A DB theory $T$ does not necessarily have quantifier elimination; it is however
often possible to strengthen $T$ in a conservative way (with respect to
constraint satisfiability) and get quantifier elimination. We say that $T$ has
a \emph{model completion} iff there is a stronger theory $T^*\supseteq T$
(still within the same signature $\Sigma$ of $T$) such that
\begin{inparaenum}[\itshape (i)]
\item every $\Sigma$-constraint satisfiable in a model of $T$ is also so in a
  model of $T^*$;
\item $T^*$ has quantifier elimination. $T^*$ is called a \emph{model
   completion} of $T$.
\end{inparaenum}

\begin{proposition}\label{prop:mc}
  $T$ has a model completion in case it is axiomatized by universal
  one-variable formulae and $\Sigma$ is acyclic.
\end{proposition}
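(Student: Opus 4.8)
The plan is to prove the proposition in three moves: (1) show that $T$ has the amalgamation property; (2) exploit acyclicity to write down an explicit strengthening $T^*$, in the same signature $\Sigma$, and prove that it has quantifier elimination; (3) check that $T^*$ is conservative over $T$ with respect to constraint satisfiability. Together with $T\subseteq T^*$, moves (2) and (3) already deliver a model completion in the sense of the definition above (and move (1) guarantees, in addition, the usual stronger model-theoretic properties: $T^*$ is the model-complete companion of a universal theory with amalgamation, hence genuinely \emph{the} model completion). Moves (1) and (3) will be handled by one and the same construction — a ``free amalgam'' of $\Sigma$-structures — while the acyclicity hypothesis is really used only in move (2).

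For move (1) I would use the free amalgam directly. Given models $\M_1,\M_2$ of $T$ with a common submodel $\M_0$, define a $\Sigma$-structure $\N$ by taking, for each sort $S$, $S^\N$ to be the disjoint union of $S^{\M_1}$ and $S^{\M_2}$ identified along $S^{\M_0}$, interpreting each unary function symbol $f$ piecewise (as $f^{\M_1}$ on the $\M_1$-part and as $f^{\M_2}$ on the $\M_2$-part; this is well defined since the two interpretations agree on $\M_0$ and functions are total, so no element is sent outside its own part), and interpreting constants as in $\M_0$. The decisive point is that every $a\in S^\N$ lies in the $\M_i$-part for some $i$, whence $t^\N(a)=t^{\M_i}(a)$ for every $\Sigma$-term $t$; consequently every quantifier-free \emph{one-variable} formula $\psi(a)$ has the same truth value in $\N$ as in $\M_i$. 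Since $T$ is axiomatized by universal one-variable sentences and $\M_i\models T$, we get $\N\models T$, and the obvious embeddings witness amalgamation. This is precisely where the ``one variable'' hypothesis on $T$ is needed.

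For move (2), let $\Theta_S$ denote the set of $\Sigma$-terms in a single variable of sort $S$; acyclicity of the characteristic graph $G(\Sigma)$ makes every $\Theta_S$ \emph{finite}. Since all $\Sigma$-atoms have the shape $t_1(v_1)=t_2(v_2)$, a primitive formula $\exists y\,\phi(\ux,y)$ with $y$ of sort $S$ is, up to normalization, a conjunction of (in)equalities relating the terms $t(y)$ (for $t\in\Theta_S$) among themselves and to ``parameter terms'' over $\ux$. I would show that the existence of a witness $y$ in \emph{some} model of $T$ extending a given valuation of $\ux$ is equivalent to an effectively computable quantifier-free condition $\theta_\phi(\ux)$: unless the constraints force $y$ or some $t(y)$ to coincide with a parameter value (in which case the formula collapses to a quantifier-free one outright), one simply adjoins a fresh element together with its forward $\Theta_S$-image — a finite, acyclic gadget — and the only obstruction is a clash among the equalities between parameter terms that $\phi$, together with the one-variable axioms of $T$ applied pointwise to $y, f(y), \ldots$, forces to hold; such a clash is a quantifier-free condition on $\ux$. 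Taking $\theta_\phi$ to be the conjunction of the forced (in)equalities, set $T^*:=T\cup\{\,\forall\ux\,(\theta_\phi(\ux)\to\exists y\,\phi(\ux,y))\mid \phi\text{ primitive}\,\}$. By construction $T\models \exists y\,\phi\to\theta_\phi$ while $T^*\models\theta_\phi\to\exists y\,\phi$, so $T^*$ eliminates the quantifier from every primitive formula, hence — by the standard induction on formula structure — has full quantifier elimination.

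For move (3), i.e.\ conservativity, I would run a chain argument: starting from any $\M\models T$, build an increasing chain $\M=\M_0\subseteq\M_1\subseteq\cdots$ of models of $T$ in which, at stage $n{+}1$, one pending requirement $\exists y\,\phi(\ux,y)$ — with $\ux$ interpreted in $\M_n$ and $\M_n\models\theta_\phi(\ux)$ — is realized by a one-point free extension of $\M_n$, namely the instance of the amalgam construction of move (1) that glues $\M_n$ to the finite gadget encoding $y$ and its forward $\Theta_S$-image; with fair bookkeeping the union of the chain is a model of $T^*$ containing $\M$. Hence every $\Sigma$-constraint satisfiable in a model of $T$ is satisfiable in a model of $T^*$, which together with the quantifier elimination of $T^*$ shows that $T^*$ is a model completion of $T$. \textbf{The main obstacle} is move (2): establishing that satisfiability of $\exists y\,\phi$ in some $T$-extension of a given model is a quantifier-free property of $\ux$. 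This is exactly where acyclicity is indispensable — it is what makes the forward image of the fresh witness a \emph{finite} gadget, so that the consistency analysis terminates and $\theta_\phi$ is a genuine, finite quantifier-free formula (and, when $T$ is recursively axiomatized, $T^*$ is too); in the presence of a cycle in $G(\Sigma)$ the one-point extension would be infinitary and there would be no reason for the existentially closed models of $T$ to form an elementary class.
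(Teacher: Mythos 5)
Your overall route is the paper's: amalgamation via the piecewise-defined free amalgam (with the one-variable hypothesis used exactly where you use it), then a model completion $T^*$ axiomatized by adding, for each primitive $\exists y\,\phi(\ux,y)$, the sentence $\forall\ux\,(\theta_\phi(\ux)\to\exists y\,\phi(\ux,y))$ where $\theta_\phi$ collects the quantifier-free $T$-consequences of $\phi$ (finitely many up to $T$-equivalence, by acyclicity), with conservativity obtained by the same chain/fair-scheduling argument. The paper packages your moves (2)--(3) as a separate lemma and verifies its hypothesis not by an explicit gadget analysis but by passing to the finite substructure $\M[\ua]$ generated by the parameters and invoking the Robinson Diagram Lemma plus compactness, then gluing the resulting witness extension back onto $\M$ by amalgamation --- which is also the clean way to justify your chain step.

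There is, however, one point where your formulation has a real gap: $\theta_\phi$ cannot be ``the conjunction of the forced (in)equalities''; it must be the conjunction of \emph{all} implied quantifier-free formulae (equivalently, all implied clauses) $\chi(\ux)$ with $T\models\phi\to\chi$. Since the axioms of $T$ are arbitrary universal one-variable clauses, they can force genuinely disjunctive conditions on the parameters without forcing any single literal. Concretely, take $f,g,h:S\longrightarrow U$ (an acyclic signature) and $T=\{\forall x\,(f(x)=g(x)\vee f(x)=h(x))\}$, and let $\phi(y,u_1,u_2,u_3)$ be $f(y)=u_1\wedge g(y)=u_2\wedge h(y)=u_3$. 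Then $T\models\exists y\,\phi\to(u_1=u_2\vee u_1=u_3)$, yet no equality or inequality among the $u_i$ is forced, so your $\theta_\phi$ would be trivial; a model of $T$ with three pairwise distinct parameters would satisfy $\theta_\phi$ while admitting no witness in any $T$-extension, condition (ii) of the elimination criterion would fail, and the added axiom would destroy conservativity (indeed $T^*$ would prove $\forall u_1 u_2 u_3\,(u_1=u_2\vee u_1=u_3)$). With $\theta_\phi$ taken as the full conjunction of implied clauses --- still finite by acyclicity --- your argument goes through and coincides with the paper's.
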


In Appendix~\ref{app_sec2} we prove the above proposition and give an algorithm
for quantifier elimination. This algorithm can be improved (and behaves much
better than their linear arithmetics counterparts) using a suitable version of
the Knuth-Bendix procedure~\cite{BaNi98} (studied in a dedicated
paper~\cite{new}, even if our \mcmt implementation already partially takes into
account such future development). Moreover, acyclicity is not needed in
general: when, for instance, $T:=\emptyset$ or when $T$ contains only
Axioms~\eqref{eq:null}, a model completion can be proved to exist, even if
$\Sigma$ is not acyclic, by using the Knuth-Bendix version of the
quantifier elimination algorithm.

\begin{remark}
	Proposition~\ref{prop:mc} holds also for DB extended-schemas, in case the universal
	one-variable formulae do not involve the relation symbols (so, the relations are ``free''): as explained in~\cite{new}, our implementation of the quantifier 
	elimination algorithm takes into account also this case. More generally, the model completion exists whenever we consider an acyclic
	DB extended-schema with a DB extended-theory $T$ that enjoys the amalgamation property.

\end{remark}

Hereafter, we make the following assumption:
\begin{assumption}
  \label{ass}
  The DB theories we consider have decidable constraint satisfiability problem,
  finite model property, and admit a model completion.
\end{assumption}

This assumption is matched, for instance, in the following three cases:
\begin{inparaenum}[\itshape (i)]
\item when $T$ is empty;
\item when $T$ is axiomatized by Axioms~\eqref{eq:null};
\item when $\Sigma$ is acyclic and $T$ is axiomatized by finitely many universal
  one-variable formulae (such as Axioms~\eqref{eq:null},\eqref{eq:null1},\eqref{eq:null2}, etc.).
\end{inparaenum}

 \begin{remark}
    Notice that the DB extended-schemas obtained by adding ``free'' relations to the DB schemas of \textit{(i)}, \textit{(ii)}, \textit{(iii)} above match Assumption~\ref{ass}.
  \end{remark}

\section{Relational Artifact Systems}
\label{sec:artifact}

We are now in the position to define our formal model of \emph{\RAS{s}} (\ras{s}), and to study parameterized safety problems over \ras{s}.
Since \ras{s} are array-based systems, we start by recalling the intuition
behind them.

In general terms, an array-based system is described using a multi-sorted
theory that contains two types of sorts, one accounting for the indexes of
arrays, and the other for the elements stored therein. Since the content of an
array changes over time, it is referred to using a second-order function
variable, whose interpretation in a state is that of a total function mapping
indexes to elements (so that applying the function to an index denotes the
classical \emph{read} operation for arrays). The definition of an array-based
system with array state variable $a$ always requires: a formula $I(a)$
describing the \emph{initial configuration} of the array $a$, and a formula
$\tau(a,a')$ describing a \emph{transition} that transforms the content of the
array from $a$ to $a'$. In such a setting, verifying whether the system can
reach unsafe configurations described by a formula $K(a)$ amounts to check
whether the formula
$I(a_0)\wedge \tau(a_0, a_1) \wedge \cdots \wedge \tau(a_{n-1}, a_n)\wedge
K(a_n)$ is satisfiable for some $n$.
Next, we make these ideas formally precise by grounding array-based systems in
the artifact-centric setting.

\medskip
\noindent
\textbf{The \ras Formal Model.}
 Following the tradition of artifact-centric systems \cite{DHPV09,DaDV12,boj,DeLV16}, a \ras consists of a read-only DB, a read-write working memory for artifacts, and a finite set of actions (also called services) that inspect the relational database and the working memory, and determine the new configuration of the working memory. In a \ras, the working memory consists of \textit{individual} and
\textit{higher order} variables. These variables (usually called \textit{arrays})
are supposed to model evolving relations, so-called \emph{artifact
relations} in \cite{DeLV16,verifas}.  The idea is to treat artifact relations in a uniform way as
we did for the read-only DB: we need extra sort symbols
(recall that each sort symbol corresponds to a database relation symbol) and
extra unary function symbols, the latter being treated as second-order
variables.

Given a DB schema $\Sigma$, an \emph{artifact extension} of $\Sigma$ is a
signature $\ext{\Sigma}$ obtained from $\Sigma$ by adding to it some extra sort
symbols\footnote{By `signature' we always mean 'signature with equality', so
 as soon as new sorts are added, the corresponding equality predicates are
 added too.}. These new sorts (usually indicated with letters $E, F, \dots$)
are called \emph{artifact sorts} (or \emph{artifact relations} by some abuse of
terminology), while the old sorts from $\Sigma$ are called \emph{basic sorts}.  
In \ras, artifacts and basic sorts correspond, respectively, to the index and the elements sorts mentioned in the literature on  array-based systems.
Below, given $\tup{\Sigma,T}$ and an artifact extension $\ext{\Sigma}$ of
$\Sigma$, when we speak of a $\ext{\Sigma}$-model of $T$, a DB instance of
$\tup{\ext{\Sigma},T}$, or a $\ext{\Sigma}$-model of $T^*$, we mean a
$\ext{\Sigma}$-structure $\cM$ whose reduct to $\Sigma$ respectively is a model
of $T$, a DB instance of $\tup{\Sigma,T}$, or a model of $T^*$.

An \emph{artifact setting} over $\ext{\Sigma}$ is a pair $(\ux,\ua)$ given by a
finite set $\ux$ of individual variables and a finite set $\ua$ of unary
function variables: \emph{the latter are required to have an artifact sort as
 source sort and a basic sort as target sort}. Variables in $\ux$ are called
\emph{artifact variables}, and variables in $\ua$ \emph{artifact
 components}. Given a DB instance $\cM$ of $\ext{\Sigma}$, an \emph{assignment} to an
artifact setting $(\ux, \ua)$ over $\ext{\Sigma}$ is a map $\alpha$ assigning
to every artifact variable $x_i\in \ux$ of sort $S_i$ an element
$x^\alpha\in S_i^\cM$ and to every artifact component
$a_j: E_j\longrightarrow U_j$ (with $a_j\in \ua$) a set-theoretic function
$a_j^\alpha: E_j^\cM\longrightarrow U_j^\cM$. 
In \ras, artifact components and artifact variables correspond, respectively, to \textit{arrays}
 and \textit{constant arrays} (i.e., arrays with all equal elements)
mentioned in the literature on  array-based systems.

We can view an assignment to an artifact setting $(\ux, \ua)$ as a DB instance
\emph{extending} the DB instance $\cM$ as follows.  Let all the artifact
components in $(\ux, \ua)$ having source $E$ be
$a_{i_1}: E\longrightarrow S_1, \cdots, a_{i_n}:E\longrightarrow S_n$. Viewed
as a relation in the artifact assignment $(\cM,\alpha)$, the artifact relation
$E$ ``consists'' of the set of tuples $ \{\tup{e, a_{i_1}^\alpha(e), \dots, a_{i_n}^\alpha(e)} \mid e\in E^{\cM} \}$.
Thus each element of $E$ is formed by an ``entry'' $e\in E^\cM$ (uniquely
identifying the tuple) and by ``data'' $\ua_i^\alpha(e)$ taken from the
read-only database $\cM$.  When the system evolves, the set $E^\cM$ of entries
remains fixed, whereas the components $\ua_i^\alpha(e)$ may change: typically,
we initially have $\ua_i^\alpha(e)=\nullv$, but these values are changed when
some defined values are inserted into the relation modeled by $E$; the values
are then repeatedly modified (and possibly also reset to $\nullv$, if the tuple
is removed and $e$ is re-set to point to undefined values)\footnote{In
 accordance with \textsc{mcmt} conventions, we denote
 the application of an artifact component $a$ to a term (i.e., constant or
 variable) $v$ also as $a[v]$ (standard notation for arrays), instead of $a(v)$.}.

In order to introduce verification problems in the symbolic setting of array-based systems, one first
has to specify which formulae are used to represent
\begin{inparablank}
        \item sets of states,
        \item the system initializations, and
        \item system evolution.
\end{inparablank}
To introduce \ras{s} we discuss the kind of formulae we
use.  In such formulae, we use notations like $\phi(\uz,\ua)$ to mean that
$\phi$ is a formula whose free individual variables are among the $\uz$ and
whose free unary function variables are among the $\ua$.  Let $(\ux,\ua)$ be an
artifact setting over $\ext{\Sigma}$, where $\ux=x_1,\dots, x_n$ are the
artifact variables and $\ua=a_1,\dots,a_m$ are the artifact components (their
source and target sorts are left implicit).

An \emph{initial formula} is a formula $\iota(\ux)$ of the
  form\footnote{Recall that $a_j =\lambda y. d_{j}$ abbreviates
   $\forall y\, a_{j}(y)=d_{j}$.}
    $\textstyle
    (\bigwedge_{i=1}^n x_i= c_i) \land
    (\bigwedge_{j=1}^m a_j =\lambda y. d_j)$,
  where $c_i$, $d_j$ are constants from $\Sigma$ (typically, $c_i$  and $d_j$ are $\nullv$).
A \emph{state formula} has the form
 $\exists \ue\, \phi(\ue, \ux,\ua)$,
  where $\phi$ is quantifier-free and the $\ue$ are individual variables of
  artifact sorts.
  A \emph{transition formula} $\hat\tau$ has the form
  \begin{equation}\label{eq:trans1}
  \textstyle
    \exists \ue\,(
      \gamma(\ue,\ux,\ua)
       \land\bigwedge_i x'_i= F_i(\ue,\ux,\ua)
       \land \bigwedge_j a'_j=\lambda y. G_j(y,\ue,\ux,\ua)
    )
  \end{equation}
  where the $\ue$ are individual variables (of \emph{both} basic and artifact
  sorts), $\gamma$ (the `guard') is quantifier-free, $\ux'$, $\ua'$ are renamed
  copies of $\ux$, $\ua$, and the $F_i$, $G_j$ (the `updates') are case-defined
  functions.
Transition formulae as above can express, e.g.,
\begin{inparaenum}[\itshape (i)]
\item insertion (with/without duplicates) of a tuple in an artifact relation,
\item removal of a tuple from an artifact relation,
\item transfer of a tuple from an artifact relation to artifact variables (and
  vice-versa), and
\item bulk removal/update of \emph{all} the tuples satisfying a certain
  condition from an artifact relation.
\end{inparaenum}
All the above operations can also be constrained: 
the formalization of the above operations in the formalism of our transition is 
straightforward (the reader can see all the details in Appendix~\ref{app_sec4}).

\begin{definition}\label{def:ras}
  A \emph{\RAS} (\ras) is
  $$
    \cS ~=~\tup{\Sigma,T,\ext{\Sigma}, \ux, \ua, \iota(\ux,\ua),
     \tau(\ux,\ua,\ux',\ua')}
$$
  where:
  \begin{inparaenum}[\it (i)]
  \item $\tup{\Sigma,T}$ is a (read-only) DB schema,
  \item $\ext{\Sigma}$ is an artifact extension of $\Sigma$,
  \item $(\ux, \ua)$ is an artifact setting over $\ext{\Sigma}$,
  \item $\iota$ is an intitial formula, and
  \item $\tau$ is a disjunction of transition formulae.
  \end{inparaenum}
\end{definition}

\begin{example}\label{ex:hr-short}
  We present here a \ras $\hrras$
  containing a multi-instance artifact accounting for the evolution of
  \emph{job applications}. Each job category may receive multiple applications
  from registered users. Such applications are then evaluated, finally deciding
  which to accept or reject. The example is inspired by the job
  hiring process presented in~\cite{Silv11} to show the intrinsic difficulties
  of capturing real-life processes with many-to-many interacting business
  entities using conventional process modeling notations (e.g., BPMN).  An
  extended version of this example is presented in
  Appendix~\ref{sec:hiring-example}.

  As for the read-only DB, $\hrras$ works over the DB schema of
  Example~\ref{ex:hr}, extended with a further value sort $\scoreval$ used to
  score job applications. $\scoreval$ contains $102$ values in the range
  $[\constant{-1},\constant{100}]$, where $\constant{-1}$ denotes the
  non-eligibility of the application, and a score from $\constant{0}$ to
  $\constant{100}$ indicates the actual one assigned after evaluating the
  application. For readability, we use as syntactic sugar usual predicates $<$, $>$,
  and $=$ to compare variables of type $\scoreval$. 

  As for the working memory, $\hrras$ consists of two artifacts.
  The first single-instance \emph{job hiring} artifact employs a dedicated $\pstatev$
variable to capture main phases that the running process goes through: initially, hiring is disabled ($\pstatev=\nullv$),
and, if there is at least one registered user in the HR DB,  $\pstatev$ becomes enabled.
The second multi-instance artifact accounts for the evolution of of \emph{user
   applications}.
 To model applications, we take the DB signature $\hrsig$ of
  the read-only HR DB, and enrich it with an artifact
  extension containing an artifact sort $\appidx$ used to \emph{index} (i.e.,
  \emph{``internally'' identify}) job applications. The management of job
  applications is then modeled by an artifact setting with:
  \begin{inparaenum}[\itshape (i)]
  \item artifact components with domain $\appidx$ capturing the artifact
    relation storing different job applications;
  \item additional individual variables as temporary memory to manipulate the
    artifact relation.
  \end{inparaenum}
  Specifically, each application consists of a job category, the identifier of
  the applicant user and that of an HR employee responsible for the
  application, the application score, and the final result (indicating whether
  the application is accepted or not). These
  information slots are encapsulated into dedicated artifact components, i.e.,
  function variables with domain $\appidx$ that collectively realize the
  application artifact relation:
  \[
    \small
    \begin{array}{l@{~:~}r@{~\longrightarrow~}l@{~~~~~~~~~~~~}l@{~:~}r@{~\longrightarrow~}l}
      \appjob    & \appidx  & \jobcatid & \appscore  & \appidx  & \scoreval\\
      \appuser   & \appidx  & \userid   &    \appresp   & \appidx  & \empid\\
        \appres  & \appidx  & \stringval\\
      \end{array}
    \]
  We now discuss the relevant transitions for inserting and evaluating job
  applications. When writing transition formulae, we make the following
  assumption: if an artifact variable/component is not mentioned at all, it is
  meant that is updated identically; otherwise, the relevant update function
  will specify how it is updated.\footnote{Non-deterministic
   updates can be formalized using existentially quantified variables in the
   transition.}  The insertion of an application into the system can be
  executed when the hiring process is enabled,
  and consists of two consecutive steps. To indicate when a step can be
  applied, also ensuring that the insertion of an application is not
  interrupted by the insertion of another one, we manipulate a string artifact
  variable $\recstate$. The first step is executable when $\recstate$ is
  $\nullv$, and aims at loading the application data into dedicated artifact
  variables through the following simultaneous effects:
  \begin{inparaenum}[\itshape (i)]
  \item the identifier of the user who wants to submit the application, and
    that of the targeted job category, are selected and respectively stored
    into variables $\uid$ and $\jid$;
  \item the identifier of an HR employee who becomes responsible for the
    application is selected and stored into variable $\eid$, with the
    requirement that such an employee must be competent in the job category
    targeted by the application;
  \item $\recstate$ evolves into state $\appreceived$.
   \end{inparaenum}
  Formally:
 \[
    \small
    \begin{array}{@{}l@{}}
      \exists \typedvar{u}{\userid},
      \typedvar{j}{\jobcatid},\typedvar{e}{\empid},\typedvar{c}{\compinid}\\
      \left(
        \begin{array}{@{}l@{}}
          \pstatev = \enab \land \recstate = \nullv
          \land u \neq \nullv \land j \neq \nullv \land e \neq \nullv \land
         c \neq \nullv
         \land \compemp(c) = e \\
          {}\land \compjob(c) = j
         \land \pstatev' = \enab \land \recstate' = \appreceived
         \land \uid' = u \land \jid' = j \land \eid' = e \land \cid'=c
        \end{array}
      \right)
    \end{array}
  \]

  The second step transfers the application data into the application artifact
  relation (using its corresponding function variables), and
  resets all application-related artifact variables to $\nullv$ (including
  $\recstate$, so that new applications can be inserted). For the insertion, a
  ``free'' index (i.e., an index pointing to an undefined applicant) is
  picked. The newly inserted application gets a default score of
  $\constant{-1}$ (``not eligible''), and an $\nullv$ final result:%
  \[
    \scriptsize
    \begin{array}{@{}l@{}}
      \exists \typedvar{i}{\appidx}\\
      \!\!\left(
        \begin{array}{@{}l@{}}
          \pstatev = \enab \land \recstate = \appreceived
          \land \appuser[i]=\nullv
          \land \pstatev'=\enab \land \recstate'=\nullv \land \cid'=\nullv
          \\{}\land
          \appjob' =
                  \smtlambda{j}{
                    \smtifinline{j=i}
                      {\jid}
                      {\appjob[j]}
                  }
          \land
          \appuser' =
                  \smtlambda{j}{
                    \smtifinline{j=i}
                      {\uid}
                      {\appuser[j]}
                  }
          \\{}\land
          \appresp' =
                  \smtlambda{j}{
                     \smtifinline{j=i}
                      {\eid}
                      {\appresp[j]}
                  }
          \land
          \appscore' =
                  \smtlambda{j}{
                    \smtifinline{j=i}
                      {\constant{-1}}
                      {\appscore[j]}
                  }
          \\{}\land
          \appres' =
                  \smtlambda{j}{
                    \smtifinline{j=i}
                      {\nullv}
                      {\appres[j]}
                  }
          \land \jid' = \nullv \land \uid' = \nullv \land \eid'=\nullv
        \end{array}
      \right)
    \end{array}
  \]
  Notice that such a transition does not prevent the possibility of inserting
  exactly the same application twice, at different indexes. If this is not
  wanted, the transition can be suitably changed so as to guarantee that no two
  identical applications can coexist in the same artifact relation (see
  Appendix~\ref{sec:hiring-example} for an example).

  Each application currently considered as not eligible can be made eligible by
  assigning a proper score to it:
  \[
    \small
    \begin{array}{@{}l@{}}
      \exists \typedvar{i}{\appidx}, \typedvar{s}{\scoreval}
      \left(
        \begin{array}{@{}l@{}}
          \pstatev = \enab \land \recstate=\nullv \\
           \appscore[i] = \constant{-1} \land \recstate'=\nullv \\
          s \geq \constant{0}
          \land \pstatev' = \enab \land \appscore'[i] = s
        \end{array}
      \right)
    \end{array}
  \]
  Finally, application results are computed when the process moves to state
  $\notified$.  This is handled by
  the \emph{bulk} transition:
  \[
    \small
    \begin{array}{@{}l@{}}
      \pstatev = \enab \land \recstate=\nullv\\
       \land \pstatev'=\notified \land \recstate'=\nullv \\
      \land \appres' = \smtlambda{j}{
                           \smtif{c}{\appscore[j] > \constant{80}}
                           {\win}{\los}
                         }
    \end{array}
  \]
  which declares applications with a score above $\constant{80}$ as winning,
  and the others as losing.
  \hfill\ensuremath{\triangleleft}
\end{example}

\medskip
\noindent
\textbf{Parameterized Safety via Backward Reachability.}
A \emph{safety} formula for $\cS$ is a state formula
$\upsilon(\ux)$ describing undesired states of $\cS$. 
As usual in array-based systems, we say that $\cS$ is \emph{safe with
        respect to} $\upsilon$ if intuitively the system has no finite run leading from
$\iota$ to $\upsilon$.  Formally, there is no DB-instance $\cM$ of $\tup{\ext{\Sigma},T}$, no $k\geq 0$, and
no assignment in $\cM$ to the variables $\ux^0,\ua^0 \dots, \ux^k, \ua^k$ such
that the formula
\begin{equation}\label{eq:smc1}
    \iota(\ux^0, \ua^0)
    \land \tau(\ux^0,\ua^0, \ux^1, \ua^1)
    \land \cdots
    \land\tau(\ux^{k-1},\ua^{k-1}, \ux^k,\ua^{k})
    \land \upsilon(\ux^k,\ua^{k})
\end{equation}
is true in $\cM$ (here $\ux^i$, $\ua^i$ are renamed copies of $\ux$, $\ua$). The \emph{safety
        problem} for $\cS$ is the following: \emph{given a safety formula $\upsilon$
        decide whether $\cS$ is safe with respect to $\upsilon$}.

\begin{example}\label{ex:hr-prop}
The following property expresses the undesired situation that, in the \ras from Example~\ref{ex:hr-short}, once the evaluation is notified there is an applicant with unknown result:
  \[
    \begin{array}{@{}l@{}}
      \exists  \typedvar{i}{\appidx}\\
      \left(
        \begin{array}{@{}l@{}}
          \pstatev=\notified \land \appuser[i]\neq \nullv
          \land \appres[i]\neq \win \land \appres[i]\neq \los
        \end{array}
      \right)
    \end{array}
  \]
  The job hiring \ras $\hrras$ turns out to be safe with respect to this
  property (cf.~Section~\ref{sec:first-exp}).
  \hfill\ensuremath{\triangleleft}
\end{example}

Algorithm~\ref{alg1} describes the \emph{backward reachability algorithm} (or,
\emph{backward search}) for handling the safety problem for $\cS$.  An integral
part of the algorithm is to compute \textit{symbolic} preimages.
For that purpose, we define for any $\phi_1(\uz,\uz')$ and $\phi_2(\uz)$,
$\mathit{Pre}(\phi_1,\phi_2)$ as the formula
$\exists \uz'(\phi_1(\uz, \uz')\land \phi_2(\uz'))$.
The \emph{preimage} of the set of states described by a state formula
$\phi(\ux)$ is the set of states described by
$\mathit{Pre}(\tau,\phi)$.\footnote{Notice that, when $\tau=\bigvee\hat\tau$,
        then $\mathit{Pre}(\tau,\phi)=\bigvee\mathit{Pre}(\hat\tau,\phi)$.}
$\mathsf{QE}(T^*,\phi)$ in Line~6 is a subprocedure that extends the quantifier
elimination algorithm of $T^*$ so as to convert the preimage $\mathit{Pre}(\tau,\phi)$
of a state formula $\phi$ into a state formula (equivalent to it modulo the
axioms of $T^*$), witnessing its \emph{regressability}: this is possible since $T^*$ eliminates from primitive
formulae the existentially quantified variables over the basic sorts,
whereas elimination of quantified variables over artifact sorts is not
possible, because these variables occur as arguments of artifact
components (see Lemma~\ref{lem:eq1} and Lemma~\ref{lem:eq2} in Appendix~\ref{app_sec4b} for more details).
Algorithm~\ref{alg1} computes iterated preimages of $\upsilon$ and
applies to them the above explained quantifier elimination over basic sorts, until a fixpoint is reached or until a set
intersecting the initial states (i.e., satisfying $\iota$) is found.\footnote{\textit{Inclusion} (Line~2) and \textit{disjointness} 
(Line~3) tests can be discharged via proof obligations to be handled by SMT solvers. 
The fixpoint is reached when the test in Line~2 returns \textit{unsat}, which means that the preimage of the set of the current states is included
in the set of states reached by the backward search so far.}
We obtain the following theorem, proved in Appendix~\ref{app_sec4b}:

\begin{theorem}\label{thm:nonsimple}
  Backward search (cf.\ Algorithm~\ref{alg1}) is effective and partially
  correct\footnote{\emph{Partial correctness} means that, when
                        the algorithm terminates, it gives a correct answer. \emph{Effectiveness}
                        means that all subprocedures in the algorithm can be effectively
                        executed.} for solving safety problems for \ras\/s.
\end{theorem}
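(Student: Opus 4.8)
The plan is to establish the two assertions of the theorem — \emph{effectiveness} and \emph{partial correctness} — separately, in each case reducing the claim to machinery already available: Assumption~\ref{ass} (decidable constraint satisfiability, finite model property, and existence of a model completion $T^*$ equipped with an effective quantifier-elimination procedure) and the regressability Lemmas~\ref{lem:eq1}--\ref{lem:eq2}.

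\textbf{Effectiveness.} Each iteration of Algorithm~\ref{alg1} performs three kinds of operations, and I would check that each is computable. \emph{(i) Computing the preimage $\mathit{Pre}(\tau,\phi)$ of a state formula $\phi$.} Since $\tau$ is a finite disjunction of transition formulae $\hat\tau$ as in~\eqref{eq:trans1} and $\mathit{Pre}$ distributes over disjunction, it suffices to treat one $\hat\tau$. This is purely syntactic: one substitutes the updates into $\phi(\ux',\ua')$, replacing each occurrence $a'_j[t]$ by $G_j(t,\ue,\ux,\ua)$ and each $x'_i$ by $F_i(\ue,\ux,\ua)$, and then unfolds the case-defined functions $F_i,G_j$ by the routine rewriting recalled in Section~\ref{sec:preliminaries}. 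Crucially, this substitution introduces neither new function symbols nor nested applications of artifact components, and its result is an existential (over basic \emph{and} artifact sorts) quantifier-free formula. \emph{(ii) Turning $\mathit{Pre}(\tau,\phi)$ back into a state formula via $\mathsf{QE}(T^*,\cdot)$ in Line~6.} Here I would invoke Lemmas~\ref{lem:eq1}--\ref{lem:eq2}: although the preimage carries existential quantifiers over both basic and artifact sorts, only those over the basic sorts have to be eliminated, because the artifact-sort variables occur exclusively as arguments of artifact components; after prenex-normalizing and pushing the basic-sort quantifiers inward, one obtains primitive $T^*$-formulae to which the quantifier-elimination procedure of $T^*$ (Assumption~\ref{ass}) applies. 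The outcome has the shape $\exists\ue\,\phi'(\ue,\ux,\ua)$ with $\ue$ of artifact sorts and $\phi'$ quantifier-free, i.e.\ a state formula. \emph{(iii) The inclusion test of Line~2 and the disjointness test of Line~3.} Both are entailment/satisfiability questions between state formulae modulo $T$ (equivalently $T^*$); negating the outermost quantifiers over artifact sorts reduces them to constraint-satisfiability instances, which are decidable by Assumption~\ref{ass}. Hence every subprocedure is effective.

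\textbf{Partial correctness.} Assume the algorithm halts; I must show its output is correct. The key invariant is that after $i$ iterations the accumulated formula $B_i$ describes, modulo $T^*$, precisely the configurations from which $\upsilon$ is reachable in at most $i$ steps of $\tau$. This is proved by induction on $i$: the base case $B_0=\upsilon$ is immediate, and the inductive step combines the semantic fact that $\mathit{Pre}(\tau,\cdot)$ computes one-step predecessors with the correctness of $\mathsf{QE}$ (its output is $T^*$-equivalent to its input). If the run stops at Line~3 with $B_i\wedge\iota$ satisfiable in a model of $T$, the finite model property yields a DB instance $\cM$ of $\tup{\ext{\Sigma},T}$ in which it holds; unwinding the definition of $\mathit{Pre}$ through the invariant produces an assignment to $\ux^0,\ua^0,\dots,\ux^i,\ua^i$ in $\cM$ satisfying~\eqref{eq:smc1}, so $\cS$ is genuinely unsafe and the answer \unsafe is correct. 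If instead the run stops at Line~2 with a fixpoint, then $B_i$ is $T^*$-equivalent to the (finite) disjunction of all iterated preimages of $\upsilon$ and is closed under $\mathit{Pre}(\tau,\cdot)$, hence describes \emph{every} configuration from which $\upsilon$ is ever reachable; since $B_i\wedge\iota$ is $T$-unsatisfiable at that point, no initial configuration reaches $\upsilon$, so $\cS$ is safe and \safe is correct. Throughout, the licence to carry out the satisfiability checks over $T^*$ rather than over $T$ (and then descend to a genuine DB instance) is exactly clause \emph{(i)} of the definition of model completion together with the finite model property.

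\textbf{Main obstacle.} The delicate point is step (ii): proving that after eliminating only the basic-sort quantifiers the preimage of a state formula is again a state formula. This is precisely where passing to the model completion pays off — quantifier elimination is unavailable in the original DB theory but available in $T^*$ — and where one must verify that the substitution of the $\lambda$-defined updates neither nests artifact-component applications nor reintroduces quantifiers over artifact sorts. Lemmas~\ref{lem:eq1}--\ref{lem:eq2} carry out this analysis; once they are in place, the remainder is bookkeeping on top of Assumption~\ref{ass}.
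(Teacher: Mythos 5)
Your overall route is the paper's: regress via Lemmas~\ref{lem:eq1}--\ref{lem:eq2}, eliminate only the basic-sort quantifiers using the model completion $T^*$, and reduce the tests in Lines~2--3 to satisfiability questions. But there is a genuine gap exactly where you declare the rest to be ``bookkeeping'': the satisfiability tests and the transfer between $T$, $T^*$, and DB instances. The formulae tested in Lines~2--3 (and $\iota\wedge\phi$, since $\iota$ contains conjuncts $\forall y\, a_j(y)=d_j$) are $\exists\forall$-formulae with universal quantifiers over artifact sorts. Your justification --- ``negating the outermost quantifiers over artifact sorts reduces them to constraint-satisfiability instances'' --- is not an argument, and your appeal to clause \emph{(i)} of the model-completion definition together with the finite model property does not apply here: both properties are stated only for constraints (equivalently, existential formulae), whereas an $\exists\forall$-sentence is neither, and in general neither its decidability nor the equivalence of its satisfiability over $T$, over DB instances, and over $T^*$ follows from Assumption~\ref{ass} alone.

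The missing piece is the finite-instantiation argument of Lemma~\ref{lem:sat}. Because every universally quantified variable is of an artifact sort, and artifact-sort elements occur in atoms only as arguments of artifact components or in equalities among themselves, any model satisfying the $\exists\forall$-sentence can be restricted so that each artifact sort contains exactly the existential witnesses $\ue$; hence the universal quantifiers need only be instantiated over the (finitely many) $\ue$. After this instantiation the terms $\ua[\ue]$ can be replaced by fresh basic-sort variables $\uz$, producing a quantifier-free $\Sigma$-formula to which Assumption~\ref{ass} genuinely applies; this single step simultaneously yields decidability of the tests in Lines~2--3 and the licence to carry out the whole run analysis inside models of $T^*$ while still concluding (un)safety with respect to DB instances of $T$ (which is what the unsafe branch of your partial-correctness argument needs when it ``descends to a genuine DB instance''). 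So the load-bearing difficulty is not only the regressability step you flag as the main obstacle, but equally this instantiation lemma, which your proposal replaces with an assertion.
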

\begin{proof}[Proof sketch]
  Algorithm~\ref{alg1}, to be effective, requires the availability of decision procedures for discharging the satisfiability tests in Lines~2-3. Thanks to
        the subprocedure $\mathsf{QE}(T^*,\phi)$, the only formulae we need to test in these lines
        have a specific form (i.e. $\exists\forall$-formulae\footnote{As defined in Appendix~\ref{app_sec4b}, we call $\exists\forall$-formulae
         the ones of the kind $\exists \ue\; \forall \ui \; \phi(\ue, \ui, \ux, \ua)$, where $\ue, \ui$ are variables whose sort is an artifact sort and $\phi$ is quantifier-free.}). By our hypotheses in Assumption~\ref{ass},
        we can freely assume that all the runs we are interested in take place inside
        models of $T^*$ (where we can eliminate quantifiers binding variables of basic sorts): in fact, a technical lemma (Lemma~\ref{lem:sat}) shows that formulae
        of the kind $\exists\forall$  are satisfiable in a model of $T$ iff they are satisfiable
        in a DB instance iff they are satisfiable in a model of $T^*$. The fact that a preimage of a state formula is a state formula is exploited
to make both safety and fixpoint tests effective (in fact, we prove that the entailment between
state formulae - and more generally satisfiability of $\exists\forall$ sentences - can be decided via finite instantiation techniques).
\end{proof}
\begin{wrapfigure}[11]{r}{0.45\textwidth}
\vspace{-20pt}
      \begin{algorithm}[H]
        \SetKwProg{Fn}{Function}{}{end}
        \Fn{$\mathsf{BReach}(\upsilon)$}{
                \setcounter{AlgoLine}{0}
                \ShowLn$\phi\longleftarrow \upsilon$;  $B\longleftarrow \bot$\;
                \ShowLn\While{$\phi\land \neg B$ is $T$-satisfiable}{
                        \ShowLn\If{$\iota\land \phi$ is $T$-satisfiable}
                        {\textbf{return}  $\mathsf{unsafe}$}
                        \setcounter{AlgoLine}{3}
                        \ShowLn$B\longleftarrow \phi\vee B$\;
                        \ShowLn$\phi\longleftarrow \mathit{Pre}(\tau, \phi)$\;
                        \ShowLn$\phi\longleftarrow \mathsf{QE}(T^*,\phi)$\;
                }
                \textbf{return} $(\mathsf{safe}, B)$;}{
                \caption{Schema of the backward reachability algorithm}\label{alg1}
        }
      \end{algorithm}
  \end{wrapfigure}

Theorem~\ref{thm:nonsimple} shows that backward search is a semi-decision procedure: if the system is
unsafe, backward search always terminates and discovers it; if the system is
safe, the procedure can diverge (but it is still correct).
Notice that the role of quantifier elimination (Line~6 of Algorithm~\ref{alg1})
is twofold:
\begin{inparaenum}[\itshape (i)]
\item It allows to discharge the fixpoint test of Line~2 (see
  Lemma~\ref{lem:sat}).
\item It ensures termination in significant cases, namely those where
  \emph{(strongly) local formulae}, introduced in the next section, are
  involved.
\end{inparaenum}

\vspace{10mm}

\section{Termination Results for \ras\/\lowercase{s}}
\label{sec:termination}


We now present three termination results, two relating \ras{s} to fundamental previous results, and one genuinely novel. All the proofs are given in the appendix.

\medskip
\noindent
\textbf{Termination for ``Simple'' Artifact Systems.} An interesting class of \ras{s} is the one where the working memory consists \textit{only} of artifact variables (without artifact relations). We call systems of this type \sas{s} (\textit{Simple Artifact Systems}). For \sas{s}, the following termination result holds.

\begin{theorem}\label{thm:basic}
        Let $\tup{\Sigma,T}$ be a DB schema with $\Sigma$ acyclic.  Then, for every \sas
        $\cS=\tup{\Sigma,T,\ux,\iota,\tau}$, backward search terminates and decides safety
                problems for $\cS$ in \PSPACE in the combined size of $\ux$, $\iota$, and
                $\tau$.
\end{theorem}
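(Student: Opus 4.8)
The key feature to exploit is that a \sas has no artifact sorts and no artifact components, so that everything backward search produces stays quantifier-free over the fixed finite tuple $\ux$. First I would unwind Algorithm~\ref{alg1} in this setting. A state formula here is simply a quantifier-free $\Sigma$-formula $\phi(\ux)$ over $\ux$ and the constants of $\Sigma$, and $\iota$, $\upsilon$ are of this form. For a transition formula $\hat\tau=\exists\ue\,(\gamma(\ue,\ux)\wedge\bigwedge_i x'_i=F_i(\ue,\ux))$ — with $\ue$ of basic sorts and the $F_i$ case-defined — one obtains $\mathit{Pre}(\hat\tau,\phi)$ by substituting the updates for $\ux'$ and expanding the case-definitions, i.e.\ $\exists\ue\,\psi(\ue,\ux)$ with $\psi$ quantifier-free; then $\mathsf{QE}(T^*,\cdot)$, available by Assumption~\ref{ass}, eliminates the basic-sort variables $\ue$ and yields an equivalent quantifier-free $\phi'(\ux)$. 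Hence every formula occurring along a run of the algorithm is a quantifier-free $\Sigma$-formula over $\ux\cup\mathrm{Const}(\Sigma)$, and the satisfiability tests in Lines~2--3 are tests of quantifier-free formulae, decidable by the finite model property of Assumption~\ref{ass} (with $T$- and $T^*$-satisfiability agreeing on such formulae by conservativity of the model completion).

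Termination then becomes a combinatorial observation. Since $\Sigma$ has only unary functions and constants and $G(\Sigma)$ is acyclic, every $\Sigma$-term over $\ux\cup\mathrm{Const}(\Sigma)$ has the form $f_1(\cdots f_k(v)\cdots)$ where $v$ is a variable or a constant and $f_1,\dots,f_k$ trace a directed path in the finite DAG $G(\Sigma)$; there are therefore only finitely many such terms, at most $O(|\ux|)$ of them once $\Sigma$ is taken as fixed. So there are $O(|\ux|^2)$ atoms $t_1=t_2$, hence only finitely many quantifier-free $\Sigma$-formulae over $\ux$ up to propositional — a fortiori $T^*$- — equivalence. In Algorithm~\ref{alg1} the disjunction $B$ only grows and defines (in models of $T^*$) a monotonically increasing set of states; as there are finitely many $T^*$-inequivalent such sets, after finitely many iterations $\phi\models_{T^*}B$, so $\phi\wedge\neg B$ becomes $T$-unsatisfiable and the loop stops. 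Together with partial correctness (Theorem~\ref{thm:nonsimple}) this already gives a terminating decision procedure for safety of any \sas over an acyclic $\Sigma$.

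For the \PSPACE bound a direct space analysis of Algorithm~\ref{alg1} does not suffice ($B$, or even a single $\phi$, may be exponentially large), so I would instead recast safety as reachability in a finite \emph{type graph}: nodes are the $T$-consistent complete conjunctions of literals over the $O(|\ux|^2)$ atoms, with an edge $c\to c'$ whenever some transition of $\tau$ can pass from a configuration of type $c$ to one of type $c'$. The claim to prove is that $\cS$ is unsafe iff this graph has a path from a node entailing $\iota$ to one entailing $\upsilon$: the forward direction reads off the type of each configuration along a run, and the converse reconstructs a genuine run over a single read-only DB instance by gluing the finitely many per-step witnesses, using the finite model property (Proposition~\ref{prop:fmp}) and the amalgamation-like behaviour of $T^*$. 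The graph has $\le 2^{O(|\ux|^2)}$ nodes of polynomial size, and node-consistency and edge-existence are constraint-satisfiability questions over polynomially many terms, decidable in polynomial space (in fact in \textsc{NP}, by guessing the equality type of the boundedly many relevant elements and invoking the finite model property). Hence unsafety is in \textsc{NPspace} — guess the path node by node, with a polynomially-many-bit counter bounding its length by the number of nodes — and by Savitch's theorem and closure of \PSPACE under complement, the safety problem is in \PSPACE in the combined size of $\ux$, $\iota$ and $\tau$.

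The routine part is the counting that yields termination; the delicate point is the backward direction of the type-graph characterisation underlying the \PSPACE bound, namely turning an abstract path of types into a concrete run over one DB instance that simultaneously realises all configuration types and all transition witnesses along the path — this forces one to keep the relevant active domain polynomially bounded, which is exactly where acyclicity of $\Sigma$ is used (and where it matters that $\Sigma$ and $T$ are held fixed, since the number of paths in $G(\Sigma)$ — and hence of $\Sigma$-terms — could otherwise be exponential in the signature). Acyclicity is also what delivers, via Propositions~\ref{prop:fmp} and~\ref{prop:mc}, the finite model property and the quantifier-eliminating model completion $T^*$ on which the entire argument depends.
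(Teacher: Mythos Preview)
Your termination argument is essentially the paper's: acyclicity bounds the number of $\Sigma$-terms over $\ux$, hence of atoms, hence of quantifier-free formulae up to $T^*$-equivalence, so the growing $B$ must stabilise. The partial-correctness appeal to Theorem~\ref{thm:nonsimple} is also what the paper does (in its dedicated Theorem for \sas{s}).

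For the \PSPACE bound you take a genuinely different, though closely related, route. The paper stays inside the backward-search framework: it removes the fixpoint test, adds an exponential counter, and makes Line~6 \emph{nondeterministic} by guessing a single conjunction of literals $\alpha(\ux)$ that implies $\mathsf{QE}(T^*,\mathit{Pre}(\tau,\phi))$; crucially, it observes that this implication can be checked \emph{without} computing the QE, by verifying $T\models\alpha\to C$ for every clause $C(\ux)$ implied by $\phi$ (which is what QE in $T^*$ amounts to, cf.\ Remark in the appendix on Proposition~\ref{prop:mc}). Correctness then piggy-backs on the already-proved soundness/completeness of backward search.

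Your forward type-graph formulation is the natural dual. The ``delicate'' gluing step you flag is exactly the completeness direction of the paper's nondeterministic procedure, and the clean way to discharge it is the one you hint at: since each $c_i$ is a \emph{complete} type over the finitely many atoms and $T^*$ eliminates quantifiers, the edge condition $\exists\ux'(\tau\wedge c_{i+1}(\ux'))$ reduces to a quantifier-free formula that $c_i$ must decide; consistency of the edge then gives $T^*\models c_i\to\exists\ux'(\tau\wedge c_{i+1}(\ux'))$, so a single model of $T^*$ realising $c_0$ realises the whole path, and one transfers back to a DB-instance via the finite model property. This is slightly cleaner than amalgamating per-step witnesses. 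One small correction: what acyclicity buys for the \PSPACE bound is the polynomial bound on the number of \emph{terms/atoms} (hence polynomial-size types and NP edge checks), not a polynomial bound on the active domain of the glued model --- the latter need not be, and need not be shown to be, polynomial.
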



\begin{remark}
We remark that Theorem~\ref{thm:basic} holds also for DB extended-schemas (so, even adding ``free relations''
to the DB signatures). Moreover, notice that it can be shown that every existential
formula $\phi(\ux,\ux')$ can be turned into the form of
Formula~\eqref{eq:transition1}. Furthermore, we highlight that the proof of the decidability result of
Theorem~\ref{thm:basic} requires that the considered background theory $T$:
\begin{inparaenum}[\itshape (i)]
\item admits a model completion;
\item is \emph{locally finite}, i.e., up to $T$-equivalence, there are only
  finitely many atoms involving a fixed finite number of variables (this
  condition is implied by acyclicity);
\item is universal; and
\item enjoys decidability of constraint satisfiability.
\end{inparaenum}
Conditions~\textit{(iii)} and~\textit{(iv)} imply that one can decide whether a
finite structure is a model of $T$.  If
\textit{(ii)} and \textit{(iii)} hold, it is well-known that \textit{(i)} is
equivalent to amalgamation~\cite{wheeler}. Moreover, \textit{(ii)} alone always
holds for relational signatures and \textit{(iii)} is equivalent to $T$ being
closed under substructures (this is a standard preservation theorem in model
theory~\cite{CK}).  It follows that \textit{arbitrary relational signatures} (or \textit{locally finite
theories} in general, even allowing $n$-ary relation and $n$-ary function symbols) require only amalgamability and closure under
substructures. Thanks to these observations, Theorem~\ref{thm:basic} is reminiscent of an analogous result in~\cite{boj}, i.e., Theorem~5,  the crucial hypotheses of which are exactly amalgamability and closure under substructures, although the setting in that paper is different (there, key dependencies are not discussed, whereas we are interested only in DB (extended-)theories).
\end{remark}

In our
first-order setting, we can perform verification in a \emph{purely symbolic} way,
using (semi-)decision procedures provided by SMT-solvers, even when local
finiteness fails.  As mentioned before, local finiteness is guaranteed in the
relational context, but it does not hold anymore when \emph{arithmetic
 operations} are introduced.  Note that the theory of a single uninterpreted
binary relation (i.e., the theory of directed graphs) has a model completion, whereas it can be
easily seen that the theory of one binary relation endowed with primary key
dependencies 
(i.e. the theory of a binary relation which is a partial function) 
\textit{has not}, since it is \textit{not} amalgamable. So, the second distinctive feature of our setting naturally follows from
this observation: thanks to our functional representation of DB schemas (with
keys), the amalgamation property, required by Theorem~\ref{thm:basic}, holds, witnessing that our framework
 remains well-behaved even in the presence of key dependencies.

\medskip
\noindent
\textbf{Termination with Local Updates.}
Consider an acyclic signature $\Sigma$, a DB theory $T$ (satisfying our
Assumption~\ref{ass}), and an artifact setting $(\ux,\ua)$ over an artifact
extension $\ext{\Sigma}$ of $\Sigma$.
We call a state formula \emph{local} if it is a disjunction of the formulae
\begin{equation}\label{eq:local}
  \textstyle
  \exists e_1\cdots \exists e_k\, ( \delta(e_1,\dots, e_k)  \land
    \bigwedge_{i=1}^k \phi_i(e_i,\ux,\ua)),
\end{equation}
and \emph{strongly local} if it is a disjunction of the formulae
\begin{equation}\label{eq:localstrong}
  \textstyle
  \exists e_1\cdots\exists e_k\, ( \delta(e_1, \dots, e_k) \land
  \psi(\ux) \land \bigwedge_{i=1}^k \phi_i(e_i, \ua)).
\end{equation}
In~\eqref{eq:local} and~\eqref{eq:localstrong}, $\delta$ is a conjunction of
variable equalities and inequalities, $\phi_i$, $\psi$ are quantifier-free, and
$e_1,\ldots,e_k$ are individual variables varying over artifact sorts.
The key limitation of local state formulae is that they cannot
compare entries from different tuples of artifact relations:
each $\phi_i$ in~\eqref{eq:local} and~\eqref{eq:localstrong} can contain only
the existentially quantified variable $e_i$.

A transition formula $\hat\tau$ is \emph{local} (resp., \emph{strongly local})
if whenever a formula $\phi$ is local (resp., strongly local), so is
$\mathit{Pre}(\hat\tau,\phi)$ (modulo the axioms of $T^*$). Examples of
(strongly) local $\hat\tau$ are discussed in Appendix~\ref{app_sec4}.

\begin{theorem}\label{thm:term1}
  If $\Sigma$ is acyclic, backward search (cf.\ Algorithm~\ref{alg1})
  terminates when applied to a local safety formula in a \ras whose $\tau$ is a
  disjunction of local transition formulae.
\end{theorem}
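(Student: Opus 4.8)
The argument instantiates the well-structured transition systems paradigm to backward search over \ras{s}. First I would show, by induction on the number of iterations of Algorithm~\ref{alg1}, that every formula $\phi$ reaching Line~6 (and hence every $B$ formed at Line~4) is, modulo $T^*$, equivalent to a \emph{local} state formula of the shape~\eqref{eq:local}. The base case is the hypothesis that $\upsilon$ is local. For the inductive step I use: $\mathit{Pre}(\tau,\cdot)=\bigvee_{\hat\tau}\mathit{Pre}(\hat\tau,\cdot)$; each $\hat\tau$ is local, so $\mathit{Pre}(\hat\tau,\phi)$ is $T^*$-equivalent to a local formula whenever $\phi$ is; a finite disjunction of local formulae is local; and $\mathsf{QE}(T^*,\cdot)$ returns a state formula $T^*$-equivalent to its argument (Theorem~\ref{thm:nonsimple}). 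Thus all formulae produced along the run may be taken of the form~\eqref{eq:local}.

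Next I set up a well-quasi-order. Call a \emph{configuration} a pair $(\cM,\alpha)$ with $\cM$ a model of $T^*$ and $\alpha$ an assignment to $(\ux,\ua)$; by Lemma~\ref{lem:sat}, $T$-satisfiability of the $\exists\forall$-sentences arising below coincides with satisfiability over configurations. Since $\Sigma$ is acyclic, $T$ (a fortiori $T^*$) is locally finite, so there are finitely many complete quantifier-free types $\mathsf{p}_1,\dots,\mathsf{p}_s$ of the tuple $\ux$ and, once a $\ux$-type is fixed, finitely many complete quantifier-free $1$-types $\mathsf{q}_1,\dots,\mathsf{q}_N$ of a single artifact-sort variable $e$ over the parameters $\ux,\ua$ (such a $1$-type only speaks about the finitely many basic-sort terms reachable, via the functions of $\ext{\Sigma}$ and the components $\ua$, from $e$ and from $\ux$). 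To $(\cM,\alpha)$ I associate its \emph{abstraction}: the realized $\ux$-type, together with, for each artifact sort $E$ and each compatible $1$-type $\mathsf{q}_j$, the cardinality in $\mathbb{N}\cup\{\infty\}$ of $\{e\in E^{\cM}\mid e \text{ realizes } \mathsf{q}_j\}$. Abstractions live in a finite disjoint union of copies of $(\mathbb{N}\cup\{\infty\})^{M}$; ordering them by equality on the $\ux$-type component and componentwise $\le$ on the cardinality components yields, by Dickson's Lemma (and since $\mathbb{N}\cup\{\infty\}$ is well-ordered), a well-quasi-order $\preceq$. Crucially, after putting the $\delta$ of~\eqref{eq:local} into a complete equality/inequality pattern (splitting the disjunction), merging the $\phi_i$ of variables forced equal, and expanding the resulting quantifier-free constraint on each single artifact variable into the disjunction of complete $1$-types implying it, any local formula becomes a finite disjunction of formulae each asserting the existence of prescribed numbers of pairwise-distinct artifact elements of prescribed sorts and complete $1$-types, under a fixed $\ux$-type; such a formula defines an \emph{upward-closed} set of abstractions w.r.t.\ $\preceq$, and its satisfaction in a configuration depends only on the abstraction.

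Finally I turn stabilization into termination. Let $B_0,B_1,\dots$ be the successive values of $B$ and $\phi_n$ the value of $\phi$ entering the test at Line~2 of iteration $n$, so $B_n\equiv\bigvee_{i\le n}\phi_i$ and the test of iteration $n{+}1$ asks whether $\phi_{n+1}\wedge\neg B_n$ is $T$-satisfiable. Since the $\phi$'s are existential over artifact sorts, $\neg B_n$ is universal over them, so $\phi_{n+1}\wedge\neg B_n$ is an $\exists\forall$-sentence; by Lemma~\ref{lem:sat} it is $T$-satisfiable iff some configuration realizes $\phi_{n+1}$ but not $B_n$, i.e.\ iff the upward-closed set $U_{n+1}$ of abstractions of $B_{n+1}$-configurations strictly contains $U_n$. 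As $U_0\subseteq U_1\subseteq\cdots$ is an increasing chain of upward-closed subsets of a well-quasi-order, it stabilizes, so some test fails; hence the while-loop of Algorithm~\ref{alg1} halts (unless the unsafety test at Line~3 returns earlier), and together with Theorem~\ref{thm:nonsimple} this decides safety.

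\textbf{Main obstacle.} The delicate points are the locality-preservation step and the normal-form manipulation: one must ensure that $\mathsf{QE}(T^*,\cdot)$ keeps us within formulae $T^*$-equivalent to local ones — which is exactly what the definition of \emph{local transition formula} is designed to provide, but which has to be checked against the concrete quantifier-elimination procedure for $T^*$ under acyclicity — and that expanding a local formula into complete $1$-types is legitimate, i.e.\ that local finiteness really yields finitely many $1$-types over the parameters $\ux,\ua$; this is where acyclicity of $\Sigma$ is used essentially. The remaining bookkeeping is the standard well-structured-transition-system argument.
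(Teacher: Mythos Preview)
Your proposal is correct and follows essentially the same route as the paper: the paper encodes each configuration by the tuple of counts $(k_1(\cM),\ldots,k_N(\cM))$ recording how many artifact-sort elements generate each of the finitely many isomorphism classes of cyclic $\tilde\Sigma$-substructures (with $\tilde\Sigma=\ext{\Sigma}\cup\{\ua,\ux\}$), shows that these counts determine satisfaction of local formulae (your upward-closedness claim), and concludes via Dickson's Lemma. Your complete quantifier-free $1$-types of a single artifact element over the parameters $\ux,\ua$ are precisely these cyclic-substructure isomorphism classes, and your explicit separation of the $\ux$-type component is absorbed in the paper's formulation because the $\ux$ are constants of $\tilde\Sigma$ and hence already live in every cyclic substructure.
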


\begin{proof}[Proof sketch]
Let $\tilde\Sigma$ be $\ext{\Sigma} \cup \{\ua, \ux\}$, i.e., $\ext{\Sigma}$
expanded with function symbols $\ua$ and constants $\ux$ ($\ua$ and $\ux$ are
treated as symbols of $\tilde\Sigma$, but not as variables anymore).
We call a $\tilde\Sigma$-structure
\emph{cyclic}\footnote{This is unrelated to cyclicity of $\Sigma$ defined in
 Section~\ref{sec:readonly}, and comes from universal algebra terminology.}  if
it is generated by one element belonging to the interpretation of an artifact
sort.  Since $\Sigma$ is acyclic, so is $\tilde\Sigma$, and then one can show
that there are only finitely many cyclic $\tilde\Sigma$-structures
$\cC_1,\ldots,\cC_N$ up to isomorphism.  With a $\tilde\Sigma$-structure $\cM$
we associate the tuple of numbers
$k_1(\cM), \dots, k_N(\cM)\in \mathbb N\cup\{\infty\}$ counting the numbers of
elements generating (as singletons) the cyclic substructures isomorphic to
$\cC_1, \dots, \cC_N$, respectively.  Then we show that, if the tuple
associated with $\cM$ is componentwise bigger than the one associated with
$\cN$, then $\cM$ satisfies all the local formulae satisfied by $\cN$.  Finally
we apply Dikson Lemma~\cite{BaNi98}.
\end{proof}

Note that Theorem~\ref{thm:term1} can be used to reconstruct the decidability
results of~\cite{verifas} concerning safety problems. Specifically, one needs
to show that transitions in~\cite{verifas} are strongly local which, in turn,
can be shown using quantifier elimination (see Appendix~\ref{app_sec4} for more
details).  Interestingly, Theorem~\ref{thm:term1} can be applied to more cases
not covered in~\cite{verifas}.
For example, one can provide transitions
enforcing \emph{updates over unboundedly many} tuples (bulk updates) that are
strongly local (cf.\ Appendix~\ref{app_sec4}).  One can also see that the
safety problem for our running example is decidable since all its transitions
are strongly local.
Another case considers coverability problems for broadcast
protocols~\cite{bro1,bro2}, which can be encoded using local formulae over the trivial one-sorted signature
containing just one basic sort, finitely many constants and one artifact sort with one artifact
component.  These problems can be decided with a
non-primitive recursive lower bound~\cite{SchmitzS13} (whereas the problems
in~\cite{verifas} have an \textsc{ExpSpace} upper bound). Recalling that \cite{verifas} handles verification of LTL-FO,
thus going beyond safety problems, this shows that the two settings are incomparable. Notice that Theorem~\ref{thm:term1}
implies also the decidability of the safety problem for \sas{s}, in case of $\Sigma$ acyclic.

\medskip
\noindent
\textbf{Termination for Tree-like Signatures.}
$\Sigma$ is \emph{tree-like} if it is acyclic and all non-leaf nodes have
outdegree~1.  An artifact setting over $\Sigma$ is tree-like if
$\tilde\Sigma:=\ext{\Sigma}\cup\{\ua, \ux\}$ is tree-like.  In tree-like
artifact settings, artifact relations have a single ``data'' component, and
basic relations are unary or binary.

\begin{theorem}
  \label{thm:term2}
  Backward search (cf.\ Algorithm~\ref{alg1}) terminates when applied to a
  safety problem in a \ras with a tree-like artifact setting.
\end{theorem}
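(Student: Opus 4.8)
The plan is to follow the same template as the proof of Theorem 5.3, i.e. to exhibit a well-quasi-order (wqo) on the relevant class of structures such that each run-configuration reachable by backward search corresponds to an element of the ordered set, and such that the ordering is compatible with satisfaction of the state formulae produced by the algorithm. First I would set $\tilde\Sigma := \ext{\Sigma}\cup\{\ua,\ux\}$ as before, treating the artifact components $\ua$ as unary function symbols and the artifact variables $\ux$ as constants, so that a configuration of the \ras\ is just a $\tilde\Sigma$-structure (a DB-instance together with an assignment to the artifact setting). Because $\Sigma$ is tree-like — acyclic with outdegree $1$ at every non-leaf node — the structures generated by a single artifact-sort element are no longer finitely many up to isomorphism (they can contain unboundedly long branches coming from the $\ua$-chains into the value/id sorts), so Dickson's Lemma is not enough and I would instead encode these "cyclic" substructures as finite labeled trees and invoke \emph{Kruskal's Tree Theorem}~\cite{kruskal} to get a wqo on them.

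The key steps, in order, are: (1) Show, using the $\mathsf{QE}(T^*,\cdot)$ subprocedure, that every state formula occurring in a run of backward search on a tree-like \ras\ is (equivalent modulo $T^*$ to) a \emph{local} formula in the sense of~\eqref{eq:local} — here I would exploit that in a tree-like setting every term has at most one "successor path", so quantifier elimination over basic sorts cannot introduce comparisons between distinct artifact entries; this reduces Theorem~5.4 to the tree-like instance of Theorem~5.3's situation, except that local finiteness fails. (2) Associate with each $\tilde\Sigma$-structure $\cM$ the multiset of isomorphism types of its one-generated ("cyclic") substructures, and encode each such cyclic substructure as a finite tree whose node-labels record the relevant basic-sort/id-sort data and whose edge-labels record which function symbol was applied; acyclicity plus outdegree~$1$ guarantees these are genuinely trees. (3) Put on these labeled trees the homeomorphic-embedding order and invoke Kruskal's Theorem to conclude it is a wqo; then lift to multisets of trees (Higman/multiset wqo) to order configurations. (4) Prove the crucial monotonicity statement: if the multiset-of-trees invariant of $\cM$ dominates that of $\cN$ in this order, then every local formula true in $\cN$ is true in $\cM$ — this is where one checks that homeomorphic embedding of the encoding trees preserves satisfaction of the one-entry formulae $\phi_i(e_i,\ux,\ua)$. (5) Finally, run the standard wqo argument on backward search: the sequence of "bad" regions $B$ computed by Algorithm~\ref{alg1} is increasing, its complement sequence is a decreasing sequence of upward-closed sets under the wqo, hence stabilizes, so the fixpoint test in Line~2 eventually succeeds and the algorithm terminates; partial correctness is already Theorem~\ref{thm:nonsimple}.

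The main obstacle I expect is step (1) together with step (4): making precise the claim that in a tree-like artifact setting the preimage operator keeps state formulae \emph{local} (so that the wqo on cyclic substructures actually controls reachability), and then verifying that the tree-encoding is rich enough that homeomorphic embedding of trees entails the semantic domination of local formulae — in particular one must be careful that the finitely many constants/value-sort elements mentioned in the transitions and in $\upsilon$ are baked into the node labels, and that the labeling alphabet is finite (which is where $T$'s local finiteness on the \emph{basic} part, guaranteed by acyclicity of $\Sigma$, is used), since Kruskal's Theorem requires a wqo — here simply a finite — label set. A secondary subtlety is handling the $\nullv$ constants and Axioms~\eqref{eq:null} uniformly inside the tree encoding, but this is routine once the labeling scheme is fixed.
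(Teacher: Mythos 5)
There is a genuine gap, and it sits exactly where you flagged the main obstacle: step (1). Your plan hinges on showing that backward search over a tree-like \ras produces only \emph{local} state formulae, so that a multiset of isomorphism types of one-generated substructures controls satisfaction. But tree-like \ras{s} deliberately impose \emph{no} locality restriction on transitions --- this is precisely what distinguishes Theorem~\ref{thm:term2} from Theorem~\ref{thm:term1}: guards may contain literals such as $a[e_1]=a[e_2]$ with $e_1\neq e_2$ (the flight-management example in the appendix does exactly this), so the preimages computed in Lines~5--6 of Algorithm~\ref{alg1} need not be local, and no amount of quantifier elimination over the basic sorts will remove a comparison between two distinct artifact entries. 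With step (1) gone, step (4) fails as well: the multiset of isomorphism types of one-generated substructures is too coarse, since in a tree-like signature a one-generated substructure is just a chain, and the multiset forgets all sharing between entries. Concretely, a model in which two entries satisfy $a[e_1]=a[e_2]$ and one in which they carry distinct but isomorphic values have the same invariant, yet the non-local existential sentence $\exists e_1\exists e_2\,(e_1\neq e_2\wedge a[e_1]=a[e_2])$ separates them; such sentences do arise as iterated preimages.

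The paper's proof sidesteps locality entirely. Every state formula is an existential $\tilde\Sigma$-sentence, hence preserved under embeddings, so it suffices to prove that the \emph{finite $\tilde\Sigma$-structures ordered by embeddability} form a wqo (the witnessing models along a non-terminating run can be taken finite by Lemma~\ref{lem:sat}). The wqo is built by recursion on the tree of sorts: for a sort $S$ with sons $S_1,\dots,S_n$ one sets $w(S):=M(w(S_1))\times\cdots\times M(w(S_n))$, where $M(-)$ is a multiset lifting that preserves wqo's, and to each $a\in S^{\cM}$ one associates the tuple of multisets of the invariants of its $f_i$-fibers; domination of these invariants is then shown to yield an actual embedding $\cM\hookrightarrow\cN$ by glueing embeddings of fibers. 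In other words, the ``tree'' fed to the Kruskal-style argument is the entire structure organized as a tree of fibers over the root sort, not the chain generated by a single artifact entry. Constants and artifact variables are absorbed by replacing each constant $c$ of sort $S$ with a fresh sort $S_c$ and a function $f_c:S_c\to S$, which keeps the signature tree-like. Your step (5) coincides with the paper's concluding argument and is fine once the wqo is established on embeddability rather than on your coarser invariant.
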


\begin{proof}[Proof sketch]
The crux is to show, using Kruskal's Tree Theorem~\cite{kruskal},
that the finitely generated $\tilde\Sigma$-structures are a well-quasi-order
w.r.t.~the embeddability partial order.
\end{proof}

While tree-like \ras restrict artifact relations to be unary, their transitions
are not subject to any locality restriction. This allows for expressing rich
forms of updates, including general bulk updates (which allow us to capture
non-primitive recursive verification problems) and transitions comparing at
once different tuples in artifact relations. Notice that tree-like \ras{s} are
incomparable with the ``tree'' classes of \cite{boj}, since the former use
artifact relations, whereas the latter only individual variables. In
Appendix~\ref{sec:app-examples} we show the power of such advanced features in
a flight management process example.

%

\section{First experiments}
\label{sec:first-exp}

We implemented a prototype of the backward reachability algorithm for \ras{s} on top of the \textsc{mcmt} model checker for array-based systems.
%
Starting from its first version~\cite{mcmt}, \textsc{mcmt} was successfully
applied to a variety of settings: cache coherence and mutual exclusions protocols~\cite{lmcs},
timed~\cite{VERIFY} and fault-tolerant~\cite{jsat,disc} distributed systems, and imperative programs~\cite{atva,FI}. Interesting case studies
concerned waiting time bounds synthesis in parameterized timed
networks~\cite{nasa} and internet protocols~\cite{ifm}. Further related tools
include \textsc{safari}~\cite{cav}, \textsc{asasp}~\cite{asasp}, and \textsc{Cubicle}~\cite{cubicle_cav}. The latter
relies on a parallel architecture with further
powerful extensions.
The work principle of \textsc{mcmt} is rather simple: the tool generates the
proof obligations arising from the safety and fixpoint tests in backward search
(Lines~2-3 of Algorithm~\ref{alg1}) and passes them to the background
SMT-solver (currently it is \textsc{Yices}~\cite{Dutertre}).  In practice, the
situation is more complicated because SMT-solvers are quite efficient in
handling satisfiability problems in combined theories at quantifier-free level,
but may encounter difficulties with quantifiers. For this reason, \textsc{mcmt}
implements modules for \emph{quantifier elimination} and \emph{quantifier
 instantiation}.  A \emph{specific module} for the quantifier elimination
problems mentioned in Line~6 of Algorithm~\ref{alg1} has been added to
Version~2.8 of \textsc{mcmt}.

We produced a benchmark consisting of eight realistic business process examples
and ran it in \textsc{mcmt} (detailed explanations and results are given in
Appendix~\ref{app_experiments}). The examples are partially made by hand and
partially obtained from those supplied in~\cite{verifas}. A thorough comparison
with \textsc{Verifas}~\cite{verifas} is matter of future work, and is
non-trivial for a variety of reasons.  In particular, the two systems tackle
incomparable verification problems: on the one hand, we deal with safety
problems, whereas \textsc{Verifas} handles more general LTL-FO properties.  On
the other hand, we tackle features not available in \textsc{Verifas}, like bulk
updates and comparisons between artifact tuples.
Moreover, the two verifiers implement completely different state space
construction strategies: \textsc{mcmt} is based on backward reachability and
makes use of declarative techniques that rely on decision procedures, while
\textsc{Verifas} employs forward search via VASS encoding.
%

The benchmark is available as part of the last distribution~2.8 of
\textsc{mcmt}.\footnote{\url{http://users.mat.unimi.it/users/ghilardi/mcmt/},
 subdirectory \texttt{/examples/dbdriven} of the distribution. The user manual
 contains a new section (pages 36--39) on how to encode \ras{s} in MCMT
 specifications.}  Table~\ref{tab:exp-results} shows the very encouraging
results (the first row tackles Example~\ref{ex:hr-prop}). While a systematic
evaluation is out of scope, \textsc{mcmt} effectively handles the benchmark
with a comparable performance shown in other, well-studied systems, with
verification times below 1s in most cases.


%

\setlength{\dashlinedash}{1pt}
\setlength{\dashlinegap}{1pt}
\begin{table}
  \caption{Experimental results. The input system size is reflected by
   columns \textbf{\#AC}, \textbf{\#AV}, \textbf{\#T},
   indicating, resp., the number of artifact components, artifact variables, and
   transitions.}
  \label{tab:exp-results}
  \renewcommand{\arraystretch}{1.2}
  \begin{tabular}{@{}c||c@{}}
    \begin{minipage}{.48\linewidth}
      \centering{
       \scriptsize
  \begin{tabularx}{\linewidth}{
  >{\hsize=0.3\hsize}X
  >{\hsize=0.6\hsize\raggedleft\arraybackslash}X
  >{\hsize=0.6\hsize\raggedleft\arraybackslash}X
  >{\hsize=0.6\hsize\raggedleft\arraybackslash}X
  >{\hsize=0.5\hsize}X
  >{\hsize=0.6\hsize\raggedleft\arraybackslash}X
  >{\hsize=0.6\hsize\raggedleft\arraybackslash}X
    }
    \textbf{Exp.} & \textbf{\#AC} & \textbf{\#AV} & \textbf{\#T}
    & \textbf{Prop.} & \textbf{Res.} & \textbf{Time (sec)}
    \\\midrule
    E1 & 9 & 18 & 15 &
         E1P1 & \safe   & 0.06 \\
    &&&& E1P2 & \unsafe & 0.36 \\
    &&&& E1P3 & \unsafe & 0.50 \\
    &&&& E1P4 & \unsafe & 0.35 \\
   \hdashline
    E2 & 6 &13 &28&
         E2P1 & \safe   & 0.72 \\
    &&&& E2P2 & \unsafe & 0.88 \\
    &&&& E2P3 & \unsafe & 1.01 \\
    &&&& E2P4 & \unsafe & 0.83 \\
    \hdashline
    E3 & 4 & 14 & 13 &
         E3P1 & \safe   & 0.05 \\
    &&&& E3P2 & \unsafe & 0.06
\end{tabularx}}
         \end{minipage}
  &
  \renewcommand{\arraystretch}{1.2}
  \begin{minipage}{0.48\linewidth}
    \centering{
     \scriptsize
 \begin{tabularx}{\linewidth}{
  >{\hsize=0.3\hsize}X
  >{\hsize=0.6\hsize\raggedleft\arraybackslash}X
  >{\hsize=0.6\hsize\raggedleft\arraybackslash}X
  >{\hsize=0.6\hsize\raggedleft\arraybackslash}X
  >{\hsize=0.5\hsize}X
  >{\hsize=0.6\hsize\raggedleft\arraybackslash}X
  >{\hsize=0.6\hsize\raggedleft\arraybackslash}X}
    \textbf{Exp.} & \textbf{\#AC} & \textbf{\#AV} & \textbf{\#T}
    & \textbf{Prop.}&\textbf{Res.}&\textbf{Time (sec)}
   \\\midrule
    E4 & 9 & 11 & 21 &
         E4P1 & \safe   & 0.12 \\
    &&&& E4P2 & \unsafe & 0.13 \\
 \hdashline
    E5 & 6 & 17 & 34 &
         E5P1 & \safe   & 4.11 \\
    &&&& E5P2 & \unsafe & 0.17 \\
   \hdashline
    E6 & 2 & 7 &15 &
         E6P1 & \safe   & 0.04 \\
    &&&& E6P2 & \unsafe & 0.08 \\
    \hdashline
    E7 & 2 & 28 & 38 &
         E7P1 & \safe   & 1.00 \\
    &&&& E7P2 & \unsafe & 0.20 \\
   \hdashline
    E8 & 3 & 20 & 19 &
         E8P1 & \safe   & 0.70 \\
    &&&& E8P2 & \unsafe & 0.15
                 \end{tabularx}}
        \end{minipage}
 \end{tabular}
\end{table}


\section{Conclusion}

We have laid the foundations of SMT-based verification for artifact systems,
focusing on safety problems and relying on array-based systems as underlying
formal model. We have exploited the model-theoretic
machinery of model completion to overcome the main technical difficulty arising
from this approach, i.e., showing how to reconstruct quantifier elimination in
the rich setting of artifact systems. On top of this framework, we have
identified three classes of systems for which safety is decidable, which impose
different combinations of restrictions on the form of actions and the shape of
DB constraints.
The presented techniques have been implemented on top of the well-established
\textsc{mcmt} model checker, making our approach fully operational.

We consider the present work as the starting point for a full line of research
dedicated to SMT-based techniques for the effective verification of data-aware
processes, addressing richer forms of verification beyond safety (such as
liveness, fairness, or full LTL-FO) and richer classes of artifact systems,
(e.g., with concrete data types and arithmetics), while identifying novel
decidable classes (e.g., by restricting the structure of the DB and of
transition and state formulae).
Implementation-wise, we want to build on the reported encouraging results and
benchmark our approach using the \textsc{Verifas} system as a baseline, while
incorporating the plethora of optimizations available in SMT-based model
checking.  Finally, we plan to tackle more conventional process modeling
notations, in particular data-aware extensions of the de-facto standard BPMN.

\bibliographystyle{plainurl}
\bibliography{main-bib}

\newpage

\appendix

\section{Examples}
\label{sec:app-examples}

In this section, we present two full examples of \ras for which our backward
reachability technique terminates. In particular, they are meant to highlight
the expressiveness of our approach, even in presence of the restrictions
imposed by Theorems~\ref{thm:term1} and~\ref{thm:term2} towards decidability of
reachability. When writing transition formulae in the examples, we make the
following assumption: when an artifact variable or component is not mentioned
at all in a transition, it is meant that is updated identically; if it is
mentioned, the relevant update function in the transition will specify how it
is updated.\footnote{Notice that non-deterministic updates can be formalized
 using the existential quantified variables in the transition.}

\subsection{Job Hiring Process}
\label{sec:hiring-example}

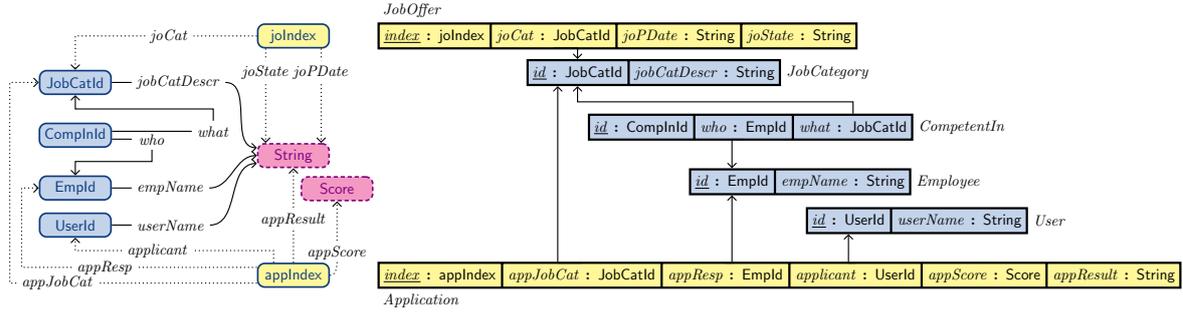
\begin{figure*}
\centering
\resizebox{\textwidth}{!}{
  \begin{tikzpicture}[node distance=3mm and 5mm]
    \node[idnode] (userid) {\userid};
    \node[functnode, right=of userid] (username) {\username};
    
    \node[idnode, above=of userid] (empid) {\empid};
    \node[functnode, right=of empid] (empname) {\empname};
    
    \node[idnode, above=7mm of empid] (compinid) {\compinid};
    \node[functnode, right=of compinid, yshift=-1mm] (compemp) {\compemp};
    \node[functnode, right=of compemp, yshift=2mm] (compjob) {\compjob};
    
    \node[idnode, above=7mm of compinid] (jobcatid) {\jobcatid};
    \node[functnode, right=of jobcatid] (jobcatdescr) {\jobdescr};
    
    \node[valnode, above right=2mm and 12mm of empname] (stringval) {\stringval};
    
	\node[artnode, above = of stringval, yshift = 21mm] (joidx)  {\joidx};
	\node[functnode, below=of joidx,xshift=7mm] (jopdate) {\jodate};
	\node[functnode, below=of joidx,xshift=-7mm] (jostate) {\jostate};
	\node[functnode,left=16mm of joidx] (jocat) {\jocat};
	
	\node[artnode, below = of stringval, yshift = -21mm] (appidx)  {\appidx};
	\node[functnode, below=1cm of stringval] (appres) {\appres};
	\node[functnode, above left=0mm and 16mm of appidx] (appuser) {\appuser};

	\node[valnode, below=2mm of stringval,xshift=11mm] (scoreval) {\scoreval};
	\node[functnode,below=1cm of scoreval] (appscore) {\appscore};

	\node[functnode,left=3cm of appidx,yshift=2mm] (appresp) {\appresp};
	\node[functnode,left=4cm of appidx,yshift=-2mm] (appjob) {\appjob};

    \draw[f] 
      (userid) edge (username);
    \draw[fd,out=0,in=180] 
      (username) edge ($(stringval.west)-(0,2mm)$);
    
    \draw[f]
      (empid) edge (empname);
    \draw[fd,out=0,in=180]
      (empname) edge (stringval);
      
    \draw[f,out=0,in=180]
      ($(compinid.east)+(0mm,-1mm)$) edge (compemp.west);
    \draw[fd]
      (compemp) |- ($(compemp.south)+(-3mm,-3mm)$) -| (empid);  
    \draw[f,out=0,in=180]
      ($(compinid.east)+(0mm,1mm)$) edge (compjob);
    \draw[fd]
      (compjob) |- ($(compjob.north)+(-3mm,3mm)$) -| (jobcatid); 
    
    \draw[f] 
      (jobcatid) edge (jobcatdescr);
    \draw[fd,out=0,in=180] 
      (jobcatdescr) edge ($(stringval.west)+(0,2mm)$);
      
       \draw[f,dotted] 
      ($(joidx.south)-(7mm,0)$) edge (jostate);
    \draw[fd,dotted] 
      (jostate) edge ($(stringval.north)-(7mm,0)$);
      
       \draw[f,dotted] 
      ($(joidx.south)+(7mm,0)$) edge (jopdate);
    \draw[fd,dotted] 
      (jopdate) edge ($(stringval.north)+(7mm,0)$);
      
      \draw[f,dotted] 
      (joidx) edge (jocat); 
      \draw[fd,dotted] 
      (jocat) -| (jobcatid); 
      
	   \draw[f,dotted] 
	   (appidx) edge ($(appres.south)+(0,1mm)$); 
      \draw[fd,dotted] 
      (appres) edge (stringval.south); 
      
	   \draw[f,dotted] 
	   ($(appidx.north)-(5mm,0)$) |- (appuser.east); 
      \draw[fd,dotted,out=-180,in=0] 
      (appuser) -| (userid.south);

      \draw[f,dotted] 
	   ($(appidx.west)+(0,2mm)$)  edge (appresp);  
     \draw[fd,dotted] 
	   (appresp) -- ++(-21mm,0) |- (empid);  
	   
	    \draw[f,dotted] 
	   ($(appidx.west)-(0,2mm)$) edge (appjob);  
     \draw[fd,dotted] 
	   (appjob) -- ++(-12mm,0) |- (jobcatid);  
	   
	  \draw[f,dotted,out=0,in=-90] 
	   (appidx.east)  edge ($(appscore.south)+(0,0.7mm)$);  
     \draw[fd,dotted] 
	   (appscore) edge (scoreval);


    \node[
      relation=4, 
      rectangle split horizontal, 
      rectangle split part fill={yellow!50},
      right=12mm of joidx.east,
      anchor=west,
      ultra thick
    ] (joboffer)
    {
      \underline{$\mathit{index}$} : \joidx %
      \nodepart{two} {\jocat} : \jobcatid
      \nodepart{three} {\jodate} : \stringval
      \nodepart{four} {\jostate} : \stringval
    };
    \node[above=0mm of joboffer.north west,anchor=south west] {\relname{JobOffer}};
    
        \node[
      relation=6, 
      rectangle split horizontal, 
      rectangle split part fill={yellow!50},
      right=12mm of appidx.east,
      anchor=west,
      ultra thick
    ] (app)
    {
      \nodepart{one} \underline{$\mathit{index}$} : \appidx %
      \nodepart{two} {\appjob} : \jobcatid
      \nodepart{three} {\appresp} : \empid
      \nodepart{four} {\appuser} : \userid
      \nodepart{five} {\appscore} : \scoreval
      \nodepart{six} {\appres} : \stringval
    };
    \node[below=0mm of app.south west,anchor=north west] {\relname{Application}};

      \node[
      relation=2, 
      rectangle split horizontal, 
      rectangle split part fill={cyan!50},
      above=7mm of app.four north,
      xshift=15.5mm,
      ultra thick
    ] (user)
    {
      \nodepart{one} \underline{$\mathit{id}$} : \userid %
      \nodepart{two}  {\username} : \stringval
    };
    \node[right=0mm of user] {\relname{User}};
 
    \node[
      relation=2, 
      rectangle split horizontal, 
      rectangle split part fill={cyan!50},
      above= of user.north west,
      xshift=-29.5mm,
      anchor=south west,
      ultra thick
    ] (emp)
    {
      \nodepart{one} \underline{$\mathit{id}$} : \empid %
      \nodepart{two}  {\empname} : \stringval
    };
    \node[right=0mm of emp] {\relname{Employee}}; 
    
    \node[
      relation=3, 
      rectangle split horizontal, 
      rectangle split part fill={cyan!50},
      above=7mm of emp.north west,
      xshift=-25.5mm,
      anchor=south west,
      ultra thick
    ] (compin)
    {
      \nodepart{one} \underline{$\mathit{id}$} : \compinid %
      \nodepart{two}  {\compemp} : \empid
      \nodepart{three}  {\compjob} : \jobcatid
    };
    \node[right=0mm of compin] {\relname{CompetentIn}};
    
    \node[
      relation=2, 
      rectangle split horizontal, 
      rectangle split part fill={cyan!50},
      above=7mm of compin.north west,
      xshift=-15.5mm,
      anchor=south west,
      ultra thick
    ] (jobcat)
    {
      \nodepart{one} \underline{$\mathit{id}$} : \jobcatid %
      \nodepart{two} {\jobdescr} : \stringval
    };
    \node[right=0mm of jobcat] {\relname{JobCategory}};

    \draw[fd] 
      ($(joboffer.two south)+(0,1pt)$) 
      -| (jobcat.one north);
  
    \draw[fd] 
      ($(compin.two south)+(0,1pt)$) 
      -| (emp.one north);
    
    \draw[fd,out=90,in=-90] 
      (compin.three north) 
      |- ($(compin.two north)+(-3mm,3mm)$) 
      -| (jobcat.one south);
    
    \draw[fd] 
      ($(app.two north)-(0,1pt)$) 
      -|($(jobcat.one south)-(5mm,0)$);
    
    \draw[fd]
      ($(app.four north)-(0,1pt)$)
      -| (user.one south);

    \draw[fd]
      ($(app.three north)-(0,1pt)$)
      -| (emp.one south);

  \end{tikzpicture} 
}

%
%
%

%
%
  \caption{On the left: characteristic graph of the human resources DB signature from Example~\ref{ex:hr}, augmented with the signature of the artifact extension for the job hiring process; value sorts are shown in pink, basic id sorts in blue, and artifact id sorts in yellow.
      On the right: relational view of the DB signature and the corresponding artifact relations; each cell denotes an attribute with its type, underlined attributes denote primary keys, and directed edges capture foreign keys.}
  \label{fig:hr_cg}
\end{figure*}

We present a \ras $\hrras$ capturing a job hiring process where multiple job categories may be turned into actual job offers, each one receiving many applications from registered users. Such applications are then evaluated, finally deciding which are accepted and which are rejected. The example is inspired by the job hiring process presented in \cite{Silv11} to show the intrinsic difficulties of capturing real-life processes with many-to-many interacting business entities using conventional process modeling notations (such as BPMN). 
Note that this example is also demonstrating the co-evolution of multiple instances of two different artifacts (namely, job offer and application).   
 
 As for the read-only DB, $\hrras$ works over the DB schema of Example~\ref{ex:hr}, extended with a further value sort $\scoreval$ used to score the applications sent for job offerings. $\scoreval$ contains $102$ different values, intuitively corresponding to the integer numbers from $-1$ to $100$ (included), where $-1$ denotes that the application is considered to be not eligible, while a score between $0$ and $100$ indicates the actual score assigned after evaluating the application. For the sake of readability, we make use of the usual integer comparison predicates to compare variables of type $\scoreval$. This is simply syntactic sugar and does not require the introduction of rigid predicates in our framework. In fact, given two variables $x$ and $y$ of type $\scoreval$, $x<y$ is a shortcut for the finitary disjunction testing that $x$ is one of the scores that are ``less than'' $y$ (similarly for the other comparison predicates).

As for the working memory, $\hrras$ consists of three artifacts: a single-instance \emph{job hiring} artifact tracking the three main phases of the overall process, and two multi-instance artifacts accounting for the evolution of \emph{job offers}, and that of corresponding \emph{user applications}. The job hiring artifact simply requires a dedicated $\pstatev$ variable to store the current process state. The job offer and user application multi-instance artifacts are instead modeled by enriching the DB signature $\hrsig$ of the read-only database of human resources. In particular, an artifact extension is added containing two artifact sorts $\joidx$ and $\appidx$ used to respectively \emph{index} (i.e., \emph{``internally'' identify}) job offers and applications. The management of job offers and applications is then modeled by a full-fledged artifact setting that adopts:
\begin{compactitem} 
\item artifact components with domains $\joidx$ and $\appidx$ to capture the artifact relations storing multiple instances of job offers and applications; 
\item individual variables used as temporary memory to manipulate the artifact relations.
\end{compactitem}
The actual components of such an artifact setting will be introduced when needed.

 We now describe how the process works, step by step. Initially, hiring is disabled, which is captured by initially setting the $\pstatev$ variable to $\nullv$. A transition of the process from disabled to \emph{enabled} may occur  provided that the read-only HR DB contains at least one registered user (who, in turn, may decide to apply for job offers created during this phase).  Technically, we introduce a dedicated artifact variable $\uidv$ initialized to $\nullv$, and used to load the identifier of such a registered user, if (s)he exists. The enablement task is then captured by the following transition formula:
  \begin{align*}
  \exists y:\userid
  \left(
  \begin{array}{@{}l@{}l@{}}
    \pstatev = \nullv \land y \neq \nullv \\ 
    \recstate=\nullv \land \recstate'=\nullv \\
    \land \pstatev' = \constant{enabled} \land \uidv' = y    
  \end{array}
  \right)
 \end{align*}
	  
We now focus on the creation of a job offer. When the overall hiring process is enabled, some job categories present in the read-only DB may be published into a corresponding job offer, consequently becoming ready to receive applications. This is done in two steps. In the first step, we transfer the id of the job category to be published to the artifact variable $\jid$, and the string representing the publishing date to the artifact variable $\dval$. Thus, $\jid$ is filled with the identifier of a job category picked from $\jobcatid$ (modeling a nondeterministic choice of category), while $\dval$ is filled with a $\stringval$ (modeling a \emph{user input} where one of the infinitely many strings is injected into $\dval$).
 
 In addition, the transition interacts with a further artifact variable $\ostate$ capturing the publishing state of offers, and consequently used to synchronize the two steps for publishing a job offer. In particular, this first step can be executed only if $\ostate$ is \emph{not} in state $\publishing$, and has the effect of setting it to such a value, thus preventing the first step to be executed twice in a row (which would actually overwrite what has been stored in $\jid$ and $\dval$). Technically, we have:
\[
\small
\begin{array}{@{}l@{}}
\exists \typedvar{j}{\jobcatid},  \typedvar{d}{\stringval} 
  \left( 
    \begin{array}{@{}l@{}}
      \pstatev=\enab \land \ostate\neq\publishing\land j\neq\nullv\\
      {}\land
       \pstatev'=\enab \land \ostate' = \publishing
       \land
        \jid' = j \land \dval' = d \\
        \recstate=\nullv \land \recstate'=\nullv
    \end{array}
  \right)
\end{array}
\]
The second step consists in transferring the content of these three variables into corresponding artifact components that keep track of all active job offers, at the same time resetting the content of the artifact variables to $\nullv$. This is done by introducing three function variables with domain $\joidx$, respectively keeping track of the category, publishing date, and state of job offers:
\[
\small
  \begin{array}{l@{~:~}r@{~\longrightarrow~}l}
    \jocat    & \joidx  & \jobcatid\\
    \jodate   & \joidx  & \stringval\\
    \jostate  & \joidx  & \stringval\\
  \end{array}
\]
With these artifact components at hand, the second step is then realized as follows:
\[
\small
  \begin{array}{@{}l@{}}
    \exists \typedvar{i}{\joidx}\\
    \left(
      \begin{array}{@{}l@{}}
        \pstatev = \enab \land \ostate=\publishing 
        \land
         \jodate[i]=\nullv \land  \jocat[i] = \nullv \land \jostate[i]=\nullv \\
          {}\land \recstate'=\nullv
         \land
        \pstatev' = \enab \land \ostate'=\published\\
        {}\land
        \jocat' = \smtlambda{j}{
                    \smtif{c}{j=i}
                      {\jid}
                      {
                        \smtif{t}{\jocat[j]=\jid}
                        {\nullv}
                        {
                          \jocat[j]                        
                        }
                      }
                  }
        \land
        \jodate' = \smtlambda{j}{
                    \smtif{c}{j=i}
                      {\dval}
                      {
                        \smtif{t}{\jocat[j]=\jid}
                        {\nullv}
                        {
                          \jodate[j]                        
                        }
                      }
                  }\\
        {}\land
        \jostate' = \smtlambda{j}{
                    \smtif{c}{j=i}
                      {\joopen}
                      {
                        \smtif{t}{\jocat[j]=\jid}
                        {\nullv}
                        {
                          \jostate[j]                        
                        }
                      }
                  }
          \\{}\land
          \uid'= \nullv \land \eid'=\nullv \land \jid'=\nullv \land \dval'=\nullv \land \cid'=\nullv
      \end{array}
    \right)
  \end{array}
\]
The ``if-then-else'' pattern is used to create an entry for the job offer artifact relation containing the information stored into the artifact variables populated in the first step, at the same time \emph{making sure that only one entry exists for a given job category}. This is done by picking a job offer index $i$ that is not already pointing to an actual job offer, i.e., such that  the $i$-th element of $\jocat$ is $\nullv$. Then, the transition updates the whole 
    content of the three artifact components $\jocat$, $\jodate$, and $\jostate$ as follows:
    \begin{compactitem}[$\bullet$]
      \item The $i$-th entry of such variables is respectively assigned to the job category stored in $\jobcatid$, the string stored in $\dval$, and the constant $\joopen$ (signifying that this entry is ready to receive applications).
      \item All other entries are kept unaltered, with the exception of a possibly existing entry $j$ with $j\neq i$ that points to the same job category contained in $\jobcatid$. If such an entry $j$ exists, its content is reset, by assigning to the $j$-th component of all  three artifact components the value $\nullv$.     Obviously, other strategies to resolve this possible conflict can be seamlessly captured in our framework.
    \end{compactitem}
  A similar conflict resolution strategy will be used in the other transitions of this example.
  
  We now focus on the evolution of applications to job offers. Each application consists of a job category, the identifier of the applicant user, the identifier of an employee from human resources who is responsible for the application, the score assigned to the application, and the application final result (indicating whether the application is among the winners or the losers for the job offer). These five information types are encapsulated into five dedicated function variables with domain $\appidx$, collectively realizing the application artifact relation:
\[
\small
  \begin{array}{l@{~:~}r@{~\longrightarrow~}l}
    \appjob    & \appidx  & \jobcatid\\
    \appuser   & \appidx  & \userid\\
    \appresp   & \appidx  & \empid\\
    \appscore  & \appidx  & \scoreval\\
    \appres  & \appidx  & \stringval\\
  \end{array}
\]
  
With these function variables at hand, we discuss the insertion of an application into the system for an open job offer. This is again managed in multiple steps, first loading the necessary information into dedicated artifact variables, and finally transferring them into the function variables that collectively realize the application artifact relation. To synchronize these multiple steps and define which step is applicable in a given state, we make use of a string artifact variable called $\recstate$. The first step to insert an application is executed when $\recstate$ is $\nullv$, and has the effect of loading into $\jid$ the identifier of a job category that has a corresponding open job offer, at the same time putting $\recstate$ in state $\joselected$. 
\[
\small
  \begin{array}{@{}l@{}}
  \exists \typedvar{i}{\joidx}\\
  \left(
    \begin{array}{@{}l@{}}
    \pstatev = \enab \land \recstate = \nullv \land \ostate\neq\publishing 
    \land \jocat[i] \neq \nullv \land \jostate[i] = \joopen\\
    {}\land \pstatev' = \enab \land \recstate' = \joselected
    \land \jid' = \jocat[i]\land \jocat'=\jocat \land \ostate'=\nullv\\
    {}\land  \uid'= \nullv \land \eid'=\nullv \land \jid'=\nullv \land \dval'=\nullv \land \cid'=\nullv
    \end{array}
  \right)
  \end{array}
\]     
The last row of the transition resets the content of all artifact variables, cleaning the working memory for the forthcoming steps (avoiding that stale values are present there). This is also useful from the technical point of view, as it guarantees that the transition is \emph{strongly local} (cf.~Section~\ref{sec:termination}, and the discussion in Appendix~\ref{sec:deletion-updates}).

    The second step has a twofold purpose: picking the identifier of the user who wants to submit an application for the selected job offer, and assigning to its application an employee of human resources who is competent in the category of the job offer.  This also results in an update of variable $\recstate$:
\[
\small
  \begin{array}{@{}l@{}}
  \exists \typedvar{u}{\userid}, \typedvar{e}{\empid},\typedvar{c}{\compinid}\\
  \left(
    \begin{array}{@{}l@{}}
    \pstatev = \enab \land \recstate = \joselected\land \ostate\neq\publishing 
    \land \compemp(c) = e\\
     \land \compjob(c) = \jid\land \jid\neq \nullv
    \land u\neq \nullv \land c\neq\nullv
    {}\land \pstatev' = \enab\\ \land \recstate' = \appreceived
    \land \jid' = \jid \land \uid' = u \land \eid' = e \land \cid'=c
    \end{array}
  \right)
  \end{array}  
\]

The last step transfers the application data into the application artifact relation, making sure that no two applications exist for the same user and the same job category. The transfer is done by assigning the artifact variables to corresponding components of the application artifact relation, at the same resetting all application-related artifact variables to $\nullv$ (including $\recstate$, so that new applications can be inserted). For the insertion, a ``free'' index (i.e., an index pointing to an undefined applicant, with an undefined job category and an undefined responsible) is picked. The newly inserted application gets a default score of $\constant{-1}$ (thus initializing it to ``not eligible''), while the final result is $\nullv$:    
\[
\small
  \begin{array}{@{}l@{}}
    \exists \typedvar{i}{\appidx}\\
    \left(
      \begin{array}{@{}l@{}}
        ~\pstatev = \enab \land \recstate = \appreceived \land \ostate\neq\publishing \\
        {}\land \appjob[i] = \nullv \land \appuser[i]=\nullv \land \appresp[i]=\nullv\\ 
            \land \pstatev' = \enab \land \recstate' = \nullv \land \ostate'=\nullv\\
        {}\land
        \appjob' =
                  \smtlambda{j}{
                    \smtif{c}{j=i}
                      {\jid}
                      {
                        \smtif{t}{\left(\appuser[j]=\uid \land \appresp[j] = \eid\right)}
                        {\nullv}
                        {
                          \appjob[j]                        
                        }
                      }
                  }\\
        {}\land
        \appuser' = 
                  \smtlambda{j}{
                    \smtif{c}{j=i}
                      {\uid}
                      {
                        \smtif{t}{\left(\appuser[j]=\uid \land \appresp[j] = \eid\right)}
                        {\nullv}
                        {
                          \appuser[j]                        
                        }
                      }
                  }\\
          {}\land
          \appresp' = 
                  \smtlambda{j}{
                    \smtif{c}{j=i}
                      {\eid}
                      {
                        \smtif{t}{\left(\appuser[j]=\uid \land \appresp[j] = \eid\right)}
                        {\nullv}
                        {
                          \appresp[j]                        
                        }
                      }
                  }\\
          {}\land
          \appscore' = 
                  \smtlambda{j}{
                    \smtif{c}{j=i}
                      {\constant{-1}}
                      {
                        \smtif{t}{\left(\appuser[j]=\uid \land \appresp[j] = \eid\right)}
                        {\nullv}
                        {
                          \appscore[j]                        
                        }
                      }
                  }\\
          {}\land
          \appres' = 
                  \smtlambda{j}{
                    \smtif{c}{j=i \lor \left(\appuser[j]=\uid \land \appresp[j] = \eid\right)}
                      {\nullv}
                      {\appres[j]}
                  }\\  
                  {}\land  \uid'= \nullv \land \eid'=\nullv \land \jid'=\nullv \land \dval'=\nullv \land \cid'=\nullv
    
      \end{array}
    \right)
  \end{array}
\]

Each single application that is currently considered as not eligible can be made eligible by carrying out an evaluation that assigns a proper score to it. This is managed by the following transition:
\[
\small
  \begin{array}{@{}l@{}}
  \exists \typedvar{i}{\appidx}, \typedvar{s}{\scoreval}
  \left(
    \begin{array}{@{}l@{}}
    \pstatev = \enab \land \appuser[i] \neq \nullv \land \ostate\neq\publishing \\
    \appscore[i] = \constant{-1} \land s \geq 0
    {}\land \pstatev' = \enab \land \appscore'[i] = s
    \end{array}
  \right)
  \end{array}
\] 
Evaluations are only possible as long as the process is in the $\enab$ state. The process moves from enabled to \emph{final} once the deadline for receiving applications to job offers is actually reached. This event is captured with pure nondeterminism, and has the additional \emph{bulk} effect of turning all open job offers to \emph{closed}:
\[
\small
  \begin{array}{@{}l@{}}
    \pstatev = \enab \land \pstatev'=\final \land \ostate\neq\publishing \land \ostate'=\nullv\\ 
    \recstate=\nullv \land \recstate'=\nullv \land \dval'=\nullv\\
    \land \jostate' =   \smtlambda{j}{
                            \smtif{c}{\jostate[j] = \joopen}
                            {\joclosed}{\jostate[j]}
                          }    
  \end{array}
\]

Finally, we consider the determination of winners and losers, which is carried out when the overall hiring process moves from final to \emph{notified}. This is captured by the following \emph{bulk} transition, which declares all applications with a score above $\constant{80}$ as winning, and all the others as losing:
\[
\small
  \begin{array}{@{}l@{}}
    \pstatev = \final \land \pstatev'=\notified\land \dval'=\nullv \land \ostate\neq\publishing \\ 
    \recstate=\nullv \land \recstate'=\nullv \land \ostate'=\nullv\\
    \land \appres' =    \smtlambda{j}{
                              \smtif{c}{\appscore[j] > \constant{80}}
                            {\win}{\los}
                          }
  \end{array}
\]  

We close the example with the following key observation. All transitions of the hiring process are, in their current form, strongly local, with the exception of those operating over artifact relations in a way that ensures no repeated entries are inserted. Such transitions can be turned into strongly local ones if \emph{repetitions in the artifact relations are allowed}. That is, multiple identical job offers and applications can be inserted in the corresponding relations, using different indexes. This is the strategy adopted in Example~\ref{ex:hr-short} in the main text of the paper. This approach realizes a sort of multiset semantics for artifact relations. The impact of this variant to verification of safety properties is discussed in Appendix~\ref{sec:insertion-updates}.

\subsection{Flight Management Process}
\newcommand{\fmsig}{\Sigma_{\mathit{fm}}}
\newcommand{\cityid}{\type{CityId}}
\newcommand{\flightid}{\type{FlightId}} 
\newcommand{\sfdest}{\funct{safeCity}}
\newcommand{\flightidx}{\type{FligthIndex}}
\newcommand{\passidx}{\type{PassengerIndex}}
\newcommand{\overbooked}{\funct{overbooked}}
\newcommand{\regpasger}{\funct{regdPassenger}}
\newcommand{\cityidx}{\type{CityIndex}}
\newcommand{\destination}{\funct{destination}}

\newcommand{\sdid}{\artvar{safeCitytId}}
\newcommand{\usdid}{\artvar{unsafeCityId}}

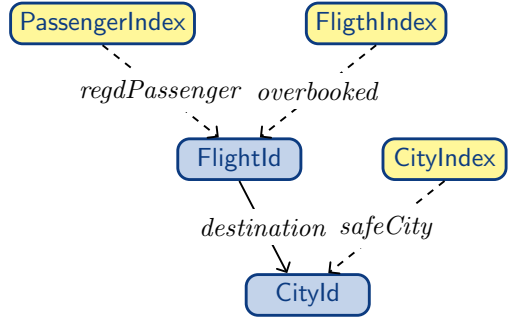
\begin{figure}[h!]
\centering
\scalebox{.9}{
  \begin{tikzpicture}[node distance=1mm,auto,>=stealth]
    \node[idnode] (cityid) at (0,0) {\cityid};
    \node[idnode] (flightid) at (-1,2) {\flightid};
    \node[artnode] (passidx) at (-3,4) {\passidx};
    \node[artnode] (flightidx) at (1,4) {\flightidx};
    \node[artnode] (cityidx) at (2,2) {\cityidx};

%
%
%
%
%
%
    \draw[fd] 
      ($(flightid.south)$) -- node[fill=white, anchor=center, pos=0.5] {\destination} ($(cityid.north)+(-0.3,0)$);
       \draw[fd,dashed] 
      ($(flightidx.south)$) -- node[fill=white, anchor=center, pos=0.5] {\overbooked} ($(flightid.north)+(0.3,0)$);
       \draw[fd,dashed] 
      ($(passidx.south)$)  -- node[fill=white, anchor=center, pos=0.5] {\regpasger} ($(flightid.north)+(-0.3,0)$);
       \draw[fd,dashed] 
      ($(cityidx.south)$) -- node[fill=white, anchor=center, pos=0.5] {\sfdest} ($(cityid.north)+(0.3,0)$);
  \end{tikzpicture} 
}
  \caption{A characteristic graph of the flight management process, where blue and yellow boxes respectively represent basic and artifact sorts.}
  \label{fig:fm}
\end{figure}
In this section we consider a simple \ras that falls in the scope of the decidability result described in Section~\ref{sec:termination}. 
Specifically, this example has a tree-like artifact setting (see Figure~\ref{fig:fm}), thus assuring that, when solving the safety problem  
for it, the backward search algorithm is guaranteed to terminate. Note, however, that the termination result adopted here is the one of  Theorem~\ref{thm:term2} due to the non-locality of certain transitions, as explained in detail below. 

The flight management process represents a simplified version of a flight management
system adopted by an airline.
To prepare a flight, the company picks a corresponding destination (that meets the aviation safety compliance indications)
and consequently reports on a number of passengers that are going to attend the flight. 
Then, an airport dispatcher may pick a manned flight 
and put it in the airports flight plan.
In case the flight destination becomes unsafe (e.g., it was stroke by a hurricane or the hosting 
airport had been seized by terrorists), the dispatcher uses the system to inform 
the airline about this condition. In turn, the airline notifies all the passengers of the affected destination about the 
contingency, and temporary cancels their flights.

To formalize these different aspects, we make use of a DB signature $\fmsig$ that consists of: 
\begin{inparaenum}[\it(i)]
\item  two id sorts, used to identify flights and cities;
\item one function symbol $\destination : \flightid\longrightarrow\cityid$ mapping flight identifiers to their corresponding destinations (i.e., city identifiers).
 \end{inparaenum}
Note that, in a classical relational model (cf. Section~\ref{sec:relational-view}), our signature would contain two relations: one binary $R_{\flightid}$ that defines flights and their destinations, and another unary $R_{\cityid}$ identifying cities, that are referenced by $R_{\flightid}$ using $\destination$. 

We assume that the read-only flight management database contains data about at 
least one flight and one city. To start the process, one needs at least one city 
to meet the aviation safety compliances. It is assumed that, initially, all the cities 
are unsafe. An airport dispatcher, at once, may change the safety status only of one 
city.

We model this action by performing two consequent actions. 
First, we select the city identifier and 
store it in the designated artifact variable $\sdid$:
\[
\small
\exists \typedvar{c}{\cityid}
  \left(c\neq \nullv {}\land  \sdid=\nullv {}\land \sdid'=c    \right)
\]
Then, we place the extracted city identifier into a unary artifact 
relation $\sfdest   : \cityidx  \longrightarrow \cityid$, that is used to represent 
safe cities and where $\cityidx$ is its artifact sort.
\[
\small
  \begin{array}{@{}l@{}}
    \exists \typedvar{i}{\cityidx}\\
    \left(
      \begin{array}{@{}l@{}}
        \sfdest[i] = \nullv \land \sdid \neq \nullv \land  \sdid'=\nullv\\
        {}\land \sfdest' = \smtlambda{j}{
                    \smtif{c}{j=i}
                      {\sdid}
                      {
                        \smtif{t}{\sfdest[j]=\sdid}
                        {\nullv}
                        {
                          \sfdest[j]                        
                        }
                      }
                  }
      \end{array}
    \right)
  \end{array}
\]

Note that two previous transitions can be rewritten as a unique one, hence showing a more
compact way of specifying \ras transitions. This, in turn, can augment the performance of the 
verifier while working with large-scale cases. The unified transition actually looks as follows:

\[
\small
  \begin{array}{@{}l@{}}
    \exists \typedvar{c}{\cityid},\exists \typedvar{i}{\cityidx}\\
    \left(
      \begin{array}{@{}l@{}}
      c\neq \nullv {}\land
        \sfdest[i] = \nullv
        \\
        {}\land \sfdest' = \smtlambda{j}{
                    \smtif{c}{j=i}
                      {c}
                      {
                        \smtif{t}{\sfdest[j]=c}
                        {\nullv}
                        {
                          \sfdest[j]                        
                        }
                      }
                  }
      \end{array}
    \right)
  \end{array}
\]

Then, to register passengers with booked tickets on a flight, the airline needs to make sure that a corresponding flight destination is 
actually safe. 
To perform the passenger registration, the airline selects a flight identifier that is assigned to the route 
and uses it to populate entries in an unary artifact relation  $\regpasger  : \passidx  \longrightarrow \flightid$. 
Note that there may be more than one passenger taking the flight, and therefore, more than one entry in $\regpasger$
with the same flight identifier. 
\[
\small
  \begin{array}{@{}l@{}}
    \exists \typedvar{i}{\cityidx},\typedvar{f}{\flightid},\typedvar{p}{\passidx}\\
    \left(
      \begin{array}{@{}l@{}}
        f\neq \nullv \land \destination(f)=\sfdest[i] \land \regpasger[p]=\nullv\\
        {}\land \regpasger' = \smtlambda{j}{
                    \smtif{c}{j=p}
                      {f}
                      {
                          \regpasger[j]                        
                      }
                  }
      \end{array}
    \right)
  \end{array}
\]

We also assume that the airline owns aircraft of one type that can contain no more than $k$ passengers. 
In case there were more than $k$ passengers registered on the flight, the airline receives a notification about its 
overbooking and temporary suspends all passenger registrations associated to this flight. 
This is modelled by checking whether there are at least $k+1$ entries in $\regpasger$. If so, the flight identifier is
added to a unary artifact relation $\overbooked  : \flightidx  \longrightarrow \flightid$ and all the passenger registrations
in  $\regpasger$ that reference this flight identifier are nullified by updating unboundedly many entries in the corresponding artifact relation:\footnote{For simplicity of presentation, we simply remove such data from the artifact relation. In a real setting, this information would actually be transferred to a dedicated, historical table, so as to reconstruct the status of past, overbooked flights.}
\[
\small
  \begin{array}{@{}l@{}}
    \exists \typedvar{p_1}{\passidx},\ldots\typedvar{p_{k+1}}{\passidx}, \typedvar{m}{\flightidx}\\
    \left(
      \begin{array}{@{}l@{}}
        \bigwedge_{i,i' \in \set{1,\ldots,k+1}, i\neq i'}\left(p_i\neq p_{i'} {}\land \regpasger[p_i]\neq\nullv  
        \land  \regpasger[p_i]=\regpasger[p_{i'}]\right)\\
         {}\land \overbooked[m]=\nullv\\
        {}\land \regpasger' = \smtlambda{j}{
                    \smtif{c}{\regpasger[j]=\regpasger[p_1]}
                     {\nullv}{\regpasger[j]}             
                      
                  }\\
		{}\land \overbooked'[m] = \regpasger[p_1]                  
      \end{array}
    \right)
  \end{array}
\]
Notice that this transition is not local, since its guard contains literals of the form $\regpasger[p_i]=\regpasger[p_{i'}]$ (with 
$p_i\neq p_{i'}$), which involve more than one element of one artifact sort.

In case of any contingency, the airport dispatcher may change the city status from \emph{safe} to \emph{unsafe}. 
To do it, we first select one of the safe cities, make it unsafe (i.e., remove it from $\sfdest$ relation) and 
store its identifier in the artifact variable $\usdid$:
\[
\footnotesize
\exists \typedvar{i}{\cityidx}
  \left(\usdid=\nullv {}\land \sfdest[i]\neq\nullv  {}\land \usdid'=\sfdest[i] {}\land \sfdest'[i]=\nullv\right)
\]

Then, we use the remembered city identifier to cancel all the passenger registrations for flights that use this city as their destination:\footnote{Similarly to the previous case, the corresponding transition performs the intended action by updating unboundedly many entries in the artifact relation.}
\[
\small
  \begin{array}{@{}l@{}}
    \left(
      \begin{array}{@{}l@{}}
        \usdid \neq \nullv \land \usdid'=\nullv \\
        {}\land \regpasger' = \smtlambda{j}{
                    \smtif{c}{\destination(\regpasger[j])=\usdid}
                      {\nullv}
                      {
                        {
                          \regpasger[j]                        
                        }
                  }}
      \end{array}
    \right)
  \end{array}
\]

Also in this case, we can shrink the transitions into a single transition:
\[
\footnotesize
\exists \typedvar{i}{\cityidx}
  \begin{array}{@{}l@{}}
    \left(
      \begin{array}{@{}l@{}}
      \sfdest[i]\neq\nullv 
        {}\land \regpasger' = \smtlambda{j}{
                    \smtif{c}{\destination(\regpasger[j])=\sfdest[i]}
                      {\nullv}
                      {
                        {
                          \regpasger[j]                        
                        }
                  }}
      \end{array}
    \right)
  \end{array}
\]

However, as in the previous case, the transition turns out to be not local. Specifically, it is due to the literal $\destination(\regpasger[j])=\sfdest[i]$ that involves more than one element with different artifact sorts.

\section{Proofs and Complements for Section~3}
\label{app_sec2}

We fix a signature $\Sigma$ and a universal theory $T$ as in Definition~\ref{def:db}.

Observe that if $\Sigma$ is acyclic, there are only finitely many terms involving a single variable $x$: in fact, there are as many terms as paths in $G(\Sigma)$
starting from the sort of $x$. If $k_\Sigma$ is the maximum number of terms involving a single variable, then (since all function symbols are unary) there are at most
$k_\Sigma^n$ terms involving $n$ variables.

\vskip 2mm
\noindent
\textbf{Proposition~\ref{prop:fmp}}.~\emph{$T$ has the finite model property in  case $\Sigma$ is acyclic.}
\vskip 2mm

\begin{proof}
If $T:=\emptyset$, then congruence closure ensures that the finite model property holds and decides  constraint satisfiability in time $O(n\log n)$ \cite{bradley}.

Otherwise, we reduce the argument to the Herbrand Theorem.  Indeed, suppose to have a set $\Phi$ of universal formulae. Herbrand Theorem states that $\Phi$ has a model iff the set of ground instances of $\Phi$ has a model. These ground instances are finitely many by acyclicity, so we can reduce to the case where $T$ is empty.
%
%
%
\end{proof}

\begin{remark} {\rm
If $T$ is finite, Proposition~\ref{prop:fmp} ensures decidability of constraint satisfiability. In order to obtain a decision procedure, it is sufficient to instantiate the axioms of $T$ and the axioms of equality (reflexivity, transitivity, symmetry, congruence) and to use a SAT-solver to decide constraint satisfiability. Alternatively, one can decide constraint satisfiability via congruence closure \cite{bradley} and avoid instantiating the equality axioms.
}
\end{remark}

\begin{remark} {\rm
Acyclity is  a  strong condition, often too strong. However, some condition must be imposed (otherwise we have undecidability, and then failure of finite model property, by reduction to word problem for finite presentations of monoids).
In fact, the empty theory and the theory  axiomatized by axiom~\ref{eq:null} both have the finite model property even without acyciclity assumptions.
}
\end{remark}

\begin{remark} {\rm
 It is evident from the above proof that {Proposition~\ref{prop:fmp}} still holds whenever $n$-ary relation symbols are added to the signature, so it applies also to the extended DB-theories considered in Definition~\ref{def:extdb}.
 }
\end{remark}

\vspace{3mm}

 We recall some basic definitions and notions from logic and model theory. We focus on the definitions of diagram, embedding, substructure and amalgamation.

 We adopt the usual first-order syntactic notions of signature, term,
 atom, (ground) formula, sentence, and so on.

 Let $\Sigma$ be a first-order signature. The signature
 obtained from $\Sigma$ by adding to it a set $\ua$ of new constants
 (i.e., $0$-ary function symbols) is denoted by $\Sigma^{\ua}$. We indicate by $\vert \cA\vert$ the support of a $\Sigma$-structure $\cA$: this is the disjoint union of the sets $S^\cA$, varying $S$ among the sort symbols of $\cA$.
Analogously, given a $\Sigma$-structure $\cA$, the signature $\Sigma$ can be expanded to a new signature $\Sigma^{|\cA|}:=\Sigma\cup \{\bar{a}\ |\ a\in |\cA| \}$ by adding a set of new constants $\bar{a}$ (the \textit{name} for $a$), one for each element $a$ in $\cA$, with the convention that two distinct elements are denoted by different "name" constants. $\cA$ can be expanded to a $\Sigma^{|\cA|}$-structure $\cA^{\prime}:=(\cA, a)_{a\in \vert\cA\vert}$ just interpreting the additional costants over the corresponding elements. From now on, when the meaning is clear from the context, we will freely use the notation  $\cA$ and  $\cA^{\prime}$ interchangeably: in particular, given a $\Sigma$-structure $\cM$ and a $\Sigma$-formula $\phi(\ux)$ with free variables that are all in $\ux$, we will write, by abuse of notation,  $\cA\models \phi(\ua)$ instead of $\cA^{\prime}\models \phi(\bar{\ua})$.


A {\it $\Sigma$-homomorphism} (or, simply, a
homomorphism) between two $\Sigma$-structu\-res $\cM$ and
$\cN$ is any mapping $\mu: \vert \cM \vert \lra \vert \cN\vert $ among the
support sets $\vert \cM \vert $ of $\cM$ and $\vert \cN \vert$  of $\cN$ satisfying the condition
\begin{equation}\label{eq:emb}
\cM \models \varphi \quad \Rightarrow \quad \cN \models \varphi
\end{equation}
for all $\Sigma^{\vert \cM\vert}$-atoms $\varphi$ (here $\cM$ is regarded as a
$\Sigma^{\vert \cM\vert}$-structure, by interpreting each additional constant $a\in
\vert \cM\vert $ into itself and $\cN$ is regarded as a $\Sigma^{\vert \cM\vert}$-structure by
interpreting each additional constant $a\in \vert \cM\vert $ into $\mu(a)$).
In case condition \eqref{eq:emb} holds for all $\Sigma^{|\cM|}$-literals,
the homomorphism $\mu$ is said to be an {\it
        embedding} and if it holds for all
first order
formulae, the embedding $\mu$ is said to be {\it
        elementary}.
Notice the following facts:
\begin{description}
        \item[(a)] since we have equality in the
        signature, an embedding is an injective function;
        \item[(b)] an embedding $\mu:\cM\longrightarrow \cN$ must be an algebraic homomorphism,
        that is for every $n$-ary function symbol $f$ and for every $m_1,..., m_n$ in $|\cM|$, we must have
        $f^{\cN}(\mu(m_1), ... , \mu(m_n)) = \mu(f^{\cM}(m_1, ... , m_n))$;
        \item[(c)] for an $n$-ary predicate symbol $P$ we must have $(m_1, ... , m_n) \in P^{\cM}$ iff  $(\mu(m_1), ... ,
        \mu(m_n)) \in P^{\cN}$.
\end{description}
It is easily seen that an embedding  $\mu:\cM\longrightarrow \cN$ can be equivalently
defined as a map  $\mu:\vert\cM\vert\longrightarrow\vert \cN\vert$ satisfying the conditions (a)-(b)-(c) above.
If  $\mu: \cM \lra \cN$ is an embedding which is just the
identity inclusion $\vert \cM\vert\subseteq\vert \cN\vert$, we say that $\cM$ is a {\it
        substructure} of $\cN$ or that $\cN$ is an {\it extension} of
$\cM$. A $\Sigma$-structure $\cM$ is said to be \emph{generated by}
a set $X$ included in its support $\vert \cM\vert$ iff there are no proper substructures of $\cM$ including $X$.

The notion of substructure can be equivalently defined as follows: given a $\Sigma$-structure $\cN$ and a $\Sigma$-structure $\cM$ such that $|\cM| \subseteq |\cN|$, we say that $\cM$ is a $\Sigma$-\textit{substructure} of $\cN$ if:

\begin{itemize}
        \item for every function symbol $f$ inf $\Sigma$, the interpretation of $f$ in $\cM$ (denoted using $f^{\cM}$) is the restriction of the interpretation of $f$ in $\cN$ to $|\cM|$ (i.e. $f^{\cM}(m)=f^{\cN}(m)$ for every $m$ in $|\cM|$); this fact implies that a substructure $\cM$ must be a subset of $\cN$ which is closed under the application of $f^{\cN}$.
        \item  for every relation symbol $P$ in $\Sigma$ and every tuple $(m_1,...,m_n)\in |\cM|^{n}$, $(m_1,...,m_n)\in P^{\cM}$ iff $(m_1,...,m_n)\in P^{\cN}$, which means that the relation $P^{\cM}$ is the restriction of $P^{\cN}$ to the support of $\cM$.
\end{itemize}

We recall that a substructure \textit{preserves} and \textit{reflects} validity of ground formulae, in the following sense:
given a $\Sigma$-substructure $\cA_1$ of a $\Sigma$-structure $\cA_2$, a ground $\Sigma^{\vert\cA_1\vert}$-sentence $\theta$ is true in $\cA_1$ iff $\theta$ is true in $\cA_2$.

Let $\cA$ be a $\Sigma$-structure. The \textit{diagram} of $\cA$, denoted by $\Delta_{\Sigma}(\cA)$, is defined as the set of ground $\Sigma^{|\cA|}$-literals (i.e. atomic formulae and negations of atomic formulae) that are true in $\cA$.  For the sake of simplicity, once again by abuse of notation, we will freely say that  $\Delta_{\Sigma}(\cA)$ is the set of $\Sigma^{|\cA|}$-literals which are true in $\cA$.

An easy but nevertheless important basic result, called
\emph{Robinson Diagram Lemma}~\cite{CK},
says that, given any $\Sigma$-structure $\cB$,  the embeddings $\mu: \cA \longrightarrow \cB$ are in bijective correspondence with
expansions of $\cB$
to   $\Sigma^{\vert \cA\vert}$-structures which are models of
$\Delta_{\Sigma}(\cA)$. The expansions and the embeddings are related in the obvious way: $\bar a$ is interpreted as $\mu(a)$.

Amalgamation is a classical algebraic concept. We give the formal definition of this notion.
\begin{definition}[Amalgamation]
        \label{def:amalgamation}
        A theory $T$ has the \emph{amalgamation property} if for every couple of embeddings $\mu_1:\cM_0\longrightarrow\cM_1$, $\mu_2:\cM_0\longrightarrow\cM_2$ among models of $T$, there exists a model $\cM$ of
        $T$ endowed with embeddings $\nu_1:\cM_1 \longrightarrow \cM$ and
        $\nu_2:\cM_2 \longrightarrow \cM$ such that $\nu_1\circ\mu_1=\nu_2\circ\mu_2$
        \begin{center}
                \begin{tikzcd}
                        & \cM \arrow[rd, leftarrow, "\nu_2"] & \\
                        \cM_{1} \arrow[ru, "\nu_1"] \arrow[rd, leftarrow, "\mu_1"]	& & \cM_{2} &\\
                        & \cM_{0} \arrow[ur, "\mu_2"] &
                \end{tikzcd}
        \end{center}
\end{definition}
The triple   $(\cM, \mu_1,\mu_2)$ (or, by abuse,   $\cM$  itself) is said to be a $T$-amalgama of  $\cM_1,\cM_2$ over $\cM_0$

The following Lemma gives a useful folklore technique for finding model completions:

\begin{lemma}\label{lem:qe}
 Suppose that for every primitive $\Sigma$-formula $\exists x\, \phi(x, \uy)$ it is possible to find a quantifier-free formula $\psi(\uy)$ such that
 \begin{description}
  \item[{\rm (i)}] $T\models \forall x\, \forall \uy\,( \phi(x, \uy) \to \psi(\uy))$;
  \item[{\rm (ii)}] for every model $\cM$ of $T$, for every tuple of  elements $\ua$ from the support of $\cM$ such that $\cM\models \psi(\ua)$ it is possible to find
  another model $\cN$ of $T$ such that $\cM$ embeds into $\cN$ and $\cN\models \exists x \phi(x, \ua)$.
 \end{description}
  Then $T$ has a model completion $T^*$ axiomatized by the infinitely many sentences~\footnote{Notice that our $T$ is assumed to be universal according to Definition~\ref{def:db}, whereas $T^*$ turns out to be universal-existential. }
  \begin{equation}\label{eq:qe}
  \forall \uy \,(\psi(\uy) \to \exists x\, \phi(x, \uy))~.
  \end{equation}
\end{lemma}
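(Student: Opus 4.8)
The plan is to verify the two conditions in the definition of a model completion recalled earlier: that $T^*$ has quantifier elimination, and that every $\Sigma$-constraint satisfiable in a model of $T$ is also satisfiable in a model of $T^*$. The quantifier-elimination condition is essentially free. Recall (as noted in the text) that quantifier elimination follows once one can eliminate the quantifier from primitive formulae. So I would fix a primitive formula $\exists x\,\phi(x,\uy)$, let $\psi(\uy)$ be the quantifier-free formula supplied by the hypotheses, and show $T^*\models\forall\uy\,(\exists x\,\phi(x,\uy)\leftrightarrow\psi(\uy))$. The right-to-left direction is literally one of the axioms~\eqref{eq:qe} of $T^*$. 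The left-to-right direction follows from hypothesis~(i) together with $T\subseteq T^*$: since $x$ does not occur in $\psi$, $T\models\forall x\forall\uy\,(\phi(x,\uy)\to\psi(\uy))$ already gives $T\models\forall\uy\,(\exists x\,\phi(x,\uy)\to\psi(\uy))$. As $\psi$ is produced effectively, this is moreover an effective elimination procedure, and $T^*$ is consistent whenever $T$ is (the opposite case being vacuous).

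For the constraint-satisfiability condition I would actually prove the stronger statement that \emph{every model of $T$ embeds into a model of $T^*$}: the condition then follows because a $\Sigma$-constraint is a conjunction of literals and literals are preserved along embeddings, so a satisfying assignment in $\cM\models T$ stays one in any extension of $\cM$. The key step is a one-step extension lemma: for each primitive formula fix the associated quantifier-free formula $\psi_\phi$; then every $\cM\models T$ has an extension $\cM^+\models T$ such that, for every primitive $\exists x\,\phi(x,\uy)$ and every tuple $\ua$ from $\cM$ with $\cM\models\psi_\phi(\ua)$, there is $b\in|\cM^+|$ with $\cM^+\models\phi(b,\ua)$. To prove it, I would add a fresh constant $c_{\phi,\ua}$ for each such pair and show, by compactness, that $\Gamma:=T\cup\Delta_\Sigma(\cM)\cup\{\phi(c_{\phi,\ua},\ua) : \cM\models\psi_\phi(\ua)\}$ (over $\Sigma^{|\cM|}$ enlarged by the new constants) is consistent. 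A finite subset of $\Gamma$ mentions only finitely many pairs $(\phi_1,\ua_1),\dots,(\phi_k,\ua_k)$; applying hypothesis~(ii) $k$ times in succession starting from $\cM$ — legitimate at each step because each $\ua_j$ lies in $\cM$ and each $\psi_{\phi_j}$ is quantifier-free, so $\cM\models\psi_{\phi_j}(\ua_j)$ is inherited by the current extension — produces a model of $T$ extending $\cM$ with witnesses for $\phi_1,\dots,\phi_k$, which models that finite subset after interpreting the new constants appropriately. By the Robinson Diagram Lemma, a model of $\Gamma$ yields, on its $\Sigma$-reduct, the desired extension $\cM^+$ of $\cM$.

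Then I would iterate: given $\cM\models T$, form the chain $\cM=\cM_0\subseteq\cM_1\subseteq\cM_2\subseteq\cdots$ with $\cM_{n+1}:=\cM_n^+$ and set $\cM^*:=\bigcup_n\cM_n$. Since $T$ is universal, it is preserved under unions of chains, so $\cM^*\models T$. And $\cM^*\models T^*$: for an axiom $\forall\uy\,(\psi_\phi(\uy)\to\exists x\,\phi(x,\uy))$ and a tuple $\ua$ over $\cM^*$ with $\cM^*\models\psi_\phi(\ua)$, the finitely many entries of $\ua$ lie in some $\cM_n$, which then satisfies $\psi_\phi(\ua)$ because $\psi_\phi$ is quantifier-free; by construction $\cM_{n+1}$ contains a $b$ with $\cM_{n+1}\models\phi(b,\ua)$, and since $\phi$ is quantifier-free this persists in $\cM^*$. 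Hence $\cM$ embeds into the model $\cM^*$ of $T^*$, which yields the constraint-satisfiability condition and completes the proof.

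I do not expect a genuine obstacle. The one point that requires care is the iterated use of hypothesis~(ii) inside the one-step extension lemma: it goes through precisely because the formulae $\psi_\phi$ are quantifier-free (so that $\cM\models\psi_\phi(\ua)$ transfers to every extension of $\cM$ containing $\ua$) and because $T$ is universal (so that unions of chains of its models are again models of $T$).
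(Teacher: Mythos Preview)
Your argument is correct and follows the same overall strategy as the paper: quantifier elimination is immediate from~(i) and the axioms~\eqref{eq:qe}, and the constraint-satisfiability condition is obtained by showing that every model of $T$ embeds into a model of $T^*$ via an $\omega$-iterated chain of one-step extensions, using that $T$ is universal for the limit.

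The one technical difference is in how the one-step extension $\cM\subseteq\cM^+$ is built. The paper well-orders the set of relevant pairs $(\ua,\exists x\,\phi(x,\ua))$ and iterates hypothesis~(ii) transfinitely along that well-order (taking unions at limits), whereas you package the same iteration into a single compactness argument. Both are standard and equivalent here; your route has the minor advantage of avoiding transfinite induction, while the paper's route avoids any appeal to compactness and keeps the construction entirely constructive modulo the well-ordering. Neither buys anything essential over the other.
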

\begin{proof}
 From {\rm (i)} and (\ref{eq:qe}) we clearly get that $T^{\star}$ admits quantifier elimination: in fact, in order to prove that a theory enjoys quantifier elimination, it is sufficient to teliminate quantifiers from \textit{primitive} formulae (then the quantifier elimination for all formulae can be easily shown by an induction over their complexity). This is exactly what is guaranteed by {\rm (i)} and (\ref{eq:qe}).

 Let $\cM$ be a model of $T$. We show (by using a chain argument) that there exists a model $\cM^{\prime}$ of $T^{\star}$ such that $\cM$ embeds into $\cM^{\prime}$. For every primitive formula $\exists x \phi(x,\uy)$, consider the set $\{(\ua, \exists x\phi(x, \ua))\}$ such that $\cM\models \psi(\ua)$ (where $\psi$ is related to $\phi$ as in {\rm (i)-(ii)}). 
 By Zermelo's Theorem, the set  $\{(\ua, \exists \ue\,\phi(\ue, \ua))\}$ can be well-ordered: let $\{(\ua_i,\exists \ue\,\phi_i(\ue, \ua_i))\}_{i\in I}$ be such a well-ordered set (where $I$ is an ordinal). By transfinite induction on this well-order, we define $\cM_{0}:=\cM$ and, for each $i\in I$, 
 $\cM_{i}$ as the extension of $\bigcup_{j<i}\cM_j$ such that $\cM_{i}\models\exists \ue\, \phi_i(\ue, \uy)$, which exists for {\rm (ii)}   since $\bigcup_{j<i}\cM_j\models \psi_i(\ua)$ (remember that validity of ground formulae is preserved passing through substructures and superstructures, and $\cM_0\models \psi_i(\ua)$).

 Now we take the chain union $\cM^{1}:=\bigcup_{i\in I} \cM_{i}$: since $T$ is universal, $\cM^{1}$ is again a model of $T$, and it is possible to construct an analogous chain $\cM^{2}$ as done above, starting from $\cM^{1}$ instead of $\cM$. Clearly, we get $\cM_{0}:=\cM\subseteq \cM^{1} \subseteq \cM^2$ by construction. At this point, we iterate the same argument countably many times, so as to define a new chain of models of $T$: $$\cM_{0}:=\cM\subseteq \cM^{1}\subseteq...\subseteq \cM^n\subseteq...$$
 
 Defining $\cM^{\prime}:=\bigcup_n \cM^{n}$, we trivially get that $\cM^{\prime}$ is a model of $T$ such that $\cM\subseteq \cM^{\prime}$ and satisfies all the sentences of type (\ref{eq:qe}). The last fact can be shown using the following finiteness argument.
 
 Fix $\phi, \psi$ as in~\eqref{eq:qe}.
 For every tuple $\ua^{\prime}\in \cM^{\prime}$ such that $\cM^{\prime}\models \psi(\ua^{\prime})$, by definition of $\cM^{\prime}$ there exists a natural number $k$ such that $\ua^{\prime}\in \cM^{k}$: since $\psi(\ua^{\prime})$ is a ground formula, we get that also $\cM^{k}\models \psi(\ua^{\prime})$. Therefore, we consider the step $k$ of the countable chain: there, we have that the pair $(\ua^{\prime},\psi(\ua^{\prime}))$ appears in the enumeration given by the well-ordered set 
 of pairs
 $\{(\ua_i,\exists \ue\, \phi_i(\ue,\ua_i))\}_{i\in I}$ (for some ordinal $I$) such that $\cM^{k}\models \psi_i(\ua)$. Hence, by construction and since $\psi(\ua^{\prime})$ is a ground formula, we have that there exists a $j\in I$ such that
 $\cM^{k}_{j}\models \exists \ue\,\phi(\ue,\ua^{\prime})$.
 In conclusion, since the existential formulae are preserved passing to extensions, we obtain $\cM^{\prime}\models\exists \ue\,\phi(\ue,\ua^{\prime})$, as wanted.
\end{proof}

\vskip 2mm
\noindent
\textbf{Proposition~\ref{prop:mc}}.~\emph{$T$ has a model completion in  case it is axiomatized by universal one-variable formulae and $\Sigma$ is  acyclic.}
\vskip 2mm

\begin{proof}
We freely take inspiration from an analogous result in~\cite{wheeler}.  We preliminarly show that $T$ is amalgamable. Then, for a suitable choice of $\psi$ suggested by the acyclicity assumption, the amalgamation property will be used to prove the validy of the condition {\rm (ii)} of Lemma~\ref{lem:qe}: this fact (together with condition {\rm (i)}) yields that $T$ has a model completion which is axiomatized by the infinitely many sentences (\ref{eq:qe}).

Let $\cM_1$ and $\cM_2$ two models of $T$ with a submodel $\mathcal{M}_0$ of $T$ in common (we suppose for simplicity that $|\cM_1|\cap|\cM_2|=|\cM_0|)$. We define a $T$-amalgam $\cM$ of $\cM_1, \cM_2$ over $ \cM_0$ as follows (we use in an essential way the fact that $\Sigma$ contains only \textit{unary} function symbols).\footnote{ 
Adding $n$-ary relations symbols would not compromize the argument either. 
}
Let the support of $\cM$ be the set-theoretic union of the supports of $\cM_1$ and $\cM_2$, i.e. $|\cM|:= |\cM_1|\cup |\cM_2|$. $\cM$ has a natural $\Sigma$-structure inherited by the $\Sigma$-structures $\cM_1$ and $\cM_2$: for every function symbol $f$ in $\Sigma$, we define, for each $m_i\in |\cM_i| (i=1,2)$, $f^{\cM}(m_i):=f^{\cM_1}(m_i)$, i.e. the interpretation of $f$ in $\cM$ is the restriction of the interpretation of $f$ in $\cM_i$ for every element $m_i\in |\cM_i|$. This is well-defined since, for every $a\in |\cM_1|\cap|\cM_2|=|\cM_0|$, we have that $f^{\cM}(a):=f^{\cM_1}(a)=f^{\cM_0}(a)=f^{\cM_2}(a)$. It is clear that $\cM_1$ and $\cM_2$ are substructures of $\cM$, and their inclusions agree on $\cM_0$.

We show that the $\Sigma$-structure $\cM$, as defined above, is a model of $T$. By hypothesis, $T$ is axiomatized by universal one-variable formulae: so, we can consider $T$ as a theory formed by axioms $\phi$ which are universal closures of clauses with just one variable, i.e. $\phi:=\forall x (A_1(x)\wedge...\wedge A_n(x)\rightarrow B_1(x)\vee...\vee B_m(x))$, where $A_j$ and $B_k$ ($j=1,...,n$ and $k=1,...,m$) are atoms.

We show that $\cM$ satisfies all such formulae $\phi$. In order to do that, suppose that, for every $a\in |\cM|$, $\cM\models A_j(a)$ for all $j=1,...,n$. If $a\in |\cM_i|$, then $\cM\models A_j(a)$ implies $\cM_i\models A_j(a)$, since $A_j(a)$ is a ground formula. Since $\cM_i$ is model of $T$ and so $\cM_i\models \phi$, we get that $\cM_i\models B_k(a)$ for some $k=1,...,m$, which means that $\cM\models B_k(a)$, since $B_k(a)$ is a ground formula. Thus, $\cM\models\phi$ for every axiom $\phi$ of $T$, i.e. $\cM\models T$ and, hence, $\cM$ is a $T$-amalgam of $\cM_1, \cM_2$ over $ \cM_0$, as wanted

Now, given a primitive formula $\exists x \phi(x,\uy)$, we find a suitable $\psi$ such that the hypothesis of Lemma~\ref{lem:qe} holds. We define $\psi(\uy)$ as the conjunction of the set of all quantifier-free $\chi(\uy)$-formulae such that $\phi(x,\uy)\rightarrow \chi(\uy)$ is a logical consequences of $T$ (they are finitely many - up to $T$-equivalence - because $\Sigma$ is acyclic). By definition, clearly we have that {\rm (i)} of Lemma~\ref{lem:qe} holds.

 We show that also condition {\rm (ii)} is satisfied. Let $\cM$ be a model of $T$ such that $\cM\models \psi(\ua)$ for some tuple of elements $\ua$ from the support of $\cM$. Then, consider the $\Sigma$-substructure $\cM[\ua]$ of $\cM$ generated by the elements $\ua$: this substructure is finite (since $\Sigma$ is acyclic), it is a model of $T$ and we trivially have that $\cM[\ua]\models \psi(\ua)$, since $\psi(\ua)$ is a ground formula. In order to prove that there exists an extension $\cN^{\prime}$ of $\cM[\ua]$ such that $\cN\models \exists x\phi(x,\ua)$, it is sufficient to prove (by the Robinson Diagram Lemma) that the $\Sigma^{|\cM[\ua]|\cup\{e\}}$-theory $\Delta(\cM[\ua])\cup\{\phi(e,\ua)\}$ is $T$-consistent. For reduction to absurdity, suppose that the last theory is $T$-inconsistent. Then, there are finitely many literals $l_1(\ua),..., l_m(\ua)$ from $\Delta(\cM[\ua])$ (remember that $\Delta(\cM[\ua])$ is a finite set of literals since $\cM[\ua]$ is a finite structure) such that $\phi(e,\ua)\models_T \neg(l_1(\ua)\wedge...\wedge l_m(\ua))$. Therefore, defining $A(\ua):=l_1(\ua)\wedge...\wedge l_m(\ua)$, we get that $\phi(e,\ua)\models_T \neg A(\ua)$, which implies that $\neg A(\ua)$ is one of the $\chi(\uy)$-formulae appearing in $\psi(\ua)$. Since $\cM[\ua]\models \psi(\ua)$, we also have that $\cM[\ua]\models \neg A(\ua)$, which is a contraddiction: in fact, by definition of diagram, $\cM[\ua]\models A(\ua)$ must hold. Hence, there exists an extension $\cN^{\prime}$ of $\cM[\ua]$ such that $\cN^{\prime}\models\exists x\phi(x,\ua)$.
Now, by amalgamation property, there exists a $T$-amalgam $\cN$ of $\cM$ and $\cN^{\prime}$ over $\cM[\ua]$: clearly, $\cN$ is an extension of $\cM$ and, since $\cN^{\prime}\hookrightarrow \cN$ and $\cN^{\prime}\models\exists x\phi(x,\ua)$, also $\cN\models\exists x\phi(x,\ua)$ holds, as required.

\end{proof}

\begin{remark}\label{rem:qe}
  {\rm The proof of Proposition~\ref{prop:mc} gives an algorithm for quantifier elimination
in the model completion. The algorithm works as follows (see the formula~\eqref{eq:qe}): to eliminate the quantifier $x$ from
$\exists x\,\phi(x, \uy)$ take the conjunction of
  the  clauses  $\chi(\uy)$ implied by $\phi(x, \uy)$. This algorithm is far from optimal from two points
of view. First, contrary to what happens in linear arithmetics, the quantifier
elimination needed to prove Proposition~\ref{prop:mc} has a much better
behaviour (from the complexity point of view) if obtained via a suitable
version of the Knuth-Bendix procedure~\cite{BaNi98}. Since these aspects
concerning quantifier elimination are rather delicate, we address them in a
dedicated paper~\cite{new} (our \mcmt implementation, however, already
partially takes into account such future development).

Secondly, the algorithm presented in Appendix~\ref{app_sec2} uses the
acyclicity assumption, whereas such assumption is in general not needed for
Proposition~\ref{prop:mc} to hold: for instance, when $T:=\emptyset$ or when
$T$ contains only Axiom~\eqref{eq:null}, a model completion can be proved to
exist, even if $\Sigma$ is not acyclic, by using the Knuth-Bendix version of
the quantifier elimination algorithm.}
\end{remark}

\section{Proofs of Theorem~5.1}\label{app_sas}

In this section we present Theorems~\ref{thm:basic-1} and ~\ref{thm:basic-2} that constitute the proof of Theorem~\ref{thm:basic} from Section~\ref{sec:termination}. 

First, we specify the definition of \ras in the particular case of \sas. Given a DB schema
$\tup{\Sigma,T}$ and a tuple $\ux=x_1,\dots,x_n$ of variables, we consider the
following classes of $\Sigma$-formulae:
\begin{compactitem}[\textbf{--}]
\item a \emph{state formula} is a quantifier-free $\Sigma$-formula $\phi(\ux)$;
\item an \emph{initial formula} is a conjunction of equalities of the form
  $\bigwedge_{i=1}^n x_i=c_i$, where each $c_i$ is a
  constant;\footnote{Typically, $c_i$ is an $\nullv$ constant mentioned above.}
\item a \emph{transition formula} $\hat\tau$ is an existential formula
  \begin{equation}
    \label{eq:transition1}
    \textstyle
    \exists \uy \left(G(\ux, \uy) \wedge \bigwedge_{i=1}^n x'_i = F_i(\ux,
      \uy)\right)
  \end{equation}
  where $\ux'$ are renamed copies of $\ux$, $G$ is quantifier-free and
  $F_1,\ldots,F_n$ are case-defined functions.  We call $G$ the \emph{guard}
  and $F_i$ the \emph{updates} of Formula~\eqref{eq:transition1}.
\end{compactitem}

In view of Definition~\ref{def:ras}, we have:
\begin{definition}
  A \emph{\SAS} (\sas) has the form
  \[
    \cS ~=~\tup{\Sigma, T, \ux, \iota(\ux), \tau(\ux, \ux')}
  \]
  where:
  \begin{inparaenum}[\itshape (i)]
  \item $\tup{\Sigma,T}$ is a (read-only) DB schema,
  \item $\ux=x_1, \dots, x_n$ are variables (called \emph{artifact variables}),
  \item $\iota$ is an initial formula, and
  \item $\tau$ is a disjunction of transition formulae.
  \end{inparaenum}
\end{definition}

\vskip 2mm
\noindent
\begin{theorem}\label{thm:basic-1}
Let $\tup{\Sigma,T}$ be a DB schema. Then, for any a \sas $\mathcal S$ with $\tup{\Sigma,T}$ as its DB schema, backward search algorithm is effective and partially correct for solving safety problems for $\mathcal S$. If, in addition, $\Sigma$ is acyclic, backward search terminates and decides safety problems for $\mathcal S$.
\end{theorem}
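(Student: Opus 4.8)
The plan is to treat the two assertions separately: the first is essentially a corollary of the general result already proved for \ras{s}, while the second will follow from the local finiteness that acyclicity of $\Sigma$ provides.

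\emph{Effectiveness and partial correctness.} A \sas is just a \ras whose artifact extension adds no new sorts (hence no artifact components), so this claim is a special case of Theorem~\ref{thm:nonsimple}; it is nonetheless worth recording that in this restricted setting the argument collapses to something elementary, which I would present as follows. In a \sas every state formula is a quantifier-free $\Sigma$-formula over the fixed tuple $\ux$. Given such a $\phi(\ux')$ and a transition formula $\hat\tau=\exists\uy\,(G(\ux,\uy)\wedge\bigwedge_i x_i'=F_i(\ux,\uy))$, the preimage $\mathit{Pre}(\hat\tau,\phi)$, once the case-defined symbols $F_i$ have been rewritten away as in Section~\ref{sec:preliminaries}, is an existential formula $\exists\uy\,\phi'(\ux,\uy)$ all of whose bound variables $\uy$ range over \emph{basic} sorts. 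Hence the subprocedure $\mathsf{QE}(T^*,\cdot)$ in Line~6 of Algorithm~\ref{alg1} is simply the effective quantifier-elimination procedure of the model completion $T^*$ (which exists by Assumption~\ref{ass}), returning a quantifier-free formula equivalent to $\mathit{Pre}(\tau,\phi)$ modulo $T^*$; so the working formula $\phi$ stays quantifier-free throughout the run. Consequently the tests of Lines~2--3 are $T$-satisfiability tests for Boolean combinations of $\Sigma$-atoms, which reduce, via disjunctive normal form, to finitely many $\Sigma$-constraint satisfiability checks, decidable by Assumption~\ref{ass}. Partial correctness then rests on the loop invariant that, at the end of iteration $k$, $B\vee\phi$ describes exactly the states from which $\upsilon$ is reachable in at most $k$ steps; its verification uses only that quantifier-free satisfiability agrees over models of $T$ and over models of $T^*$ (the nontrivial direction being clause~(i) of the definition of model completion), and that, by the finite model property, it agrees as well with satisfiability over DB instances — the semantics in which the safety problem of Section~\ref{sec:artifact} is phrased (cf.\ Lemma~\ref{lem:sat}).

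\emph{Termination for acyclic $\Sigma$.} Here I would invoke local finiteness. Since $\Sigma$ is acyclic and all its function symbols are unary, the set of $\Sigma$-terms built from the variables $\ux$ is finite — there is at most one such term per directed path in $G(\Sigma)$ leaving the sort of some $x_i$. Therefore, modulo propositional reasoning, every quantifier-free $\Sigma$-formula over $\ux$ is a Boolean combination of the finitely many atoms $t_1=t_2$ with $t_1,t_2$ among these terms, so there are only finitely many such formulae up to logical equivalence. By the previous paragraph the working formula $\phi$ never leaves this finite pool, while the accumulator $B$ grows monotonically inside it (Line~4 only adds disjuncts); a strictly ascending chain in a finite poset is finite, so after finitely many iterations the test $\phi\wedge\neg B$ of Line~2 becomes $T$-unsatisfiable and the loop exits. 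Together with partial correctness, this gives decidability.

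\emph{Main obstacle.} I expect the genuinely delicate part to be not the finiteness counting but the bookkeeping around the three relevant notions of satisfiability — in arbitrary models of $T$, in DB instances (with finite id sorts), and in models of the model completion $T^*$ — and the proof that the tests the algorithm performs after quantifier elimination, which naturally live in $T^*$, faithfully answer the safety question, which is posed over DB instances. For \sas{s} this boils down to the coincidence of quantifier-free satisfiability across the three, which follows from the model-completion clauses together with Proposition~\ref{prop:fmp}; still, it is precisely this chain of equivalences that must be set up with care, and it is the ingredient that does \emph{not} transfer verbatim to the full \ras case, where one has to reason about $\exists\forall$-sentences instead. A further point, which I would defer to Theorem~\ref{thm:basic-2}, is that termination as argued here gives no control on the size of the intermediate formulae; obtaining the \PSPACE bound of Theorem~\ref{thm:basic} additionally requires representing the backward-reachable set compactly over the fixed finite set of terms and organising the search so that it runs in polynomial space.
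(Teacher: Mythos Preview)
Your proposal is correct and follows essentially the same route as the paper: reduce the Line~2--3 tests to quantifier-free $T$-satisfiability via the $T^*$-quantifier-elimination of the basic-sorted existentials produced by $\mathit{Pre}$, use the chain of equivalences (DB instance $\Leftrightarrow$ model of $T$ $\Leftrightarrow$ model of $T^*$) for existential/quantifier-free formulae that Assumption~\ref{ass} provides, and for termination invoke the finiteness of $\Sigma$-terms over $\ux$ under acyclicity to bound the lattice of quantifier-free formulae. The paper's invariant is phrased as $T\models B_n\leftrightarrow\bigvee_{j<n}\phi_j$ together with $T^*\models\phi_{j+1}\leftrightarrow\mathit{Pre}(\tau,\phi_j)$, which is exactly your ``$B\vee\phi$ describes the states reaching $\upsilon$ in at most $k$ steps'' unpacked; the only cosmetic difference is that the paper gives the \sas proof self-contained rather than also pointing out, as you do, that it is the degenerate case of Theorem~\ref{thm:nonsimple}.
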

\begin{proof} In the case of \sas, formula~\eqref{eq:smc1} has the following form
\begin{equation}\label{eq:smc}
           \iota(\ux^0) \wedge \tau(\ux^0, \ux^1) \wedge \cdots \wedge\tau(\ux^{k-1}, \ux^k)\wedge \upsilon(\ux^k)~~.
\end{equation}
By definition, $\mathcal S$ is unsafe iff for some $n$, the formula~\eqref{eq:smc} is satisfiable in a DB-instance of $\tup{\Sigma,T}$. Thanks to Assumption~\ref{ass}, $T$ has the finite model property and consequently, as~\eqref{eq:smc} is an existential $\Sigma$-formula,  $\mathcal S$ is unsafe iff for some $n$, formula~\eqref{eq:smc} is satisfiable in a model of $T$; furthermore, again by Assumption~\ref{ass},  $\mathcal S$ is unsafe iff for some $n$, formula~\eqref{eq:smc} is satisfiable in a model of $T^*$.
Thus, we shall concentrate on satisfiability in models of $T^*$ in order to prove the Theorem.

Let us call $B_n$ (resp. $\phi_n$) the status of the variable $B$ (resp. $\phi$) after $n$ executions in line 4 (resp. line 6) of Algorithm ~\ref{alg1}.
Notice that we have
 $T^*\models \phi_{j+1}\leftrightarrow Pre(\tau, \phi_j)$ for all $j$ and that
\begin{equation}\label{eq:invariant}
 T\models B_n \leftrightarrow \bigvee_{0\leq j<n} \phi_j
\end{equation}
 is an invariant of the algorithm.

Since we are considering satisfiability in models of $T^*$, we can apply quantifier elimination and so the satisfiability of~\eqref{eq:smc} is equivalent to the satisfiability of
$\iota\wedge \phi_n$: this is a quantifier-free formula (because in line 6 of Algorithm~\ref{alg1}), whose satisfiability (wrt $T$ or equivalently wrt $T^*$)\footnote{$T$-satisfiability and $T^*$-satisfiability
are equivalent, by the definition of $T^*$, as far as existential (in particular, quantifier-free) formulae are concerned.}
is decidable by Assumption 1, so if Algorithm~\ref{alg1} terminates with an
 $\mathsf{unsafe}$ outcome, then $\mathcal S$ is really unsafe.

 Now consider the satisfiability test in line 2. This is again a satisfiability test for a quantifier-free formula, thus it is decidable. In case of a $\mathsf{safe}$ outcome, we have that $T\models \phi_n\to B_n$; this means that, if we could continue executing the loop of Algorithm~\ref{alg1}, we would nevertheless get
 $T^*\models B_m\leftrightarrow B_n$ for all $m\geq n$.\footnote{ In more detail:
 recall the invariant~\eqref{eq:invariant} and that
 $T^*\models \phi_{j+1}\leftrightarrow Pre(\tau, \phi_j)$ holds for all $j$.
 Thus, from $T\models \phi_n\to B_n$, we get $T\models\phi_{n+1} \to Pre(\tau,B_n)$; since $Pre$ commutes with disjunctions, we have $T^*\models Pre(\tau,B_n)\leftrightarrow \bigvee_{1\leq j\leq n} \phi_j$. Now (using $T\models \phi_n\to B_n$ again), we get  $T^*\models\phi_{n+1} \to B_n$, that is $T^*\models B_{n+1}\leftrightarrow B_n$.
 Since then $T^*\models \phi_{n+1} \to B_{n+1}$, we can repeat the argument for all $m\geq n$.
 } This would entail that $\iota\wedge \phi_m$ is always unsatisfiable (because of~\eqref{eq:invariant} and because $\iota\wedge \phi_j$ was unsatisfiable
 for all $j<n$), which is the same (as remarked above) as saying that all formulae~\eqref{eq:smc} are unsatisfiable. Thus $\mathcal S$ is safe.

 In case $\Sigma$ is acyclic, there are only finitely many 
 quantifier-free formulae (in which the finite set of variables $\ux$ occur), so it is evident that the algorithm must terminate:
 because of~\eqref{eq:invariant},
 the unsatisfiability test of Line 2 must eventually succeed, if the unsatisfiability test of Line 3 never does so. 
\end{proof}

\vskip 2mm
For complexity questions, we have the following result:

\begin{theorem}\label{thm:basic-2}
  Let $\Sigma$ be an acyclic DB signature and $\tup{\Sigma,T}$ a DB schema
  built on top of it. Then, for every \sas
  $\mathcal S=\langle \Sigma,T,\ux,\iota,\tau\rangle$, deciding safety problems
  for $\mathcal S$ is in PSPACE in the size of $\ux$, of $\iota$ and of $\tau$.
\end{theorem}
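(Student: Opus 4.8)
The plan is \emph{not} to run Algorithm~\ref{alg1} verbatim: it terminates (Theorem~\ref{thm:basic-1}), but the formula $B$ it accumulates is a disjunction of up to exponentially many atomic types over $\ux$, too large for polynomial space. Instead I would recast \emph{unsafety} as a reachability question in a finite directed graph whose vertices are polynomially large and whose edges are checkable in \textsc{NP}, and then conclude using that nondeterministic polynomial space equals polynomial space (Savitch's theorem) and that \PSPACE is closed under complement, so that it suffices to put unsafety in nondeterministic polynomial space.

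\textbf{The type graph.} Since $\Sigma$ is fixed, acyclic, and built from unary functions, constants and equality only, a single variable admits at most a constant number $k_\Sigma$ of terms (one per path out of its sort in $G(\Sigma)$), so the set of atoms $t_1(v_1)=t_2(v_2)$ with $v_1,v_2$ ranging over $\ux$ and the constants of $\Sigma$ has size $O(|\ux|^2)$. I would take as \emph{vertices} the complete types over this atom set, i.e.\ the choices, for each atom, of the atom or of its negation: there are $2^{O(|\ux|^2)}$ of them, each stored in $O(|\ux|^2)$ bits. Put an edge $\sigma\to\sigma'$ iff $\sigma(\ux)\wedge\hat\tau(\ux,\ux')\wedge\sigma'(\ux')$ is $T$-satisfiable for some transition disjunct $\hat\tau$ of $\tau$; after expanding the case-defined updates (into disjunctions of constraints) this amounts to $T$-satisfiability of an existential $\Sigma$-formula of size polynomial in $|\ux|+|\tau|$. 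By the finite model property (Proposition~\ref{prop:fmp}) such a formula, if $T$-satisfiable, is so on a structure of $O(|\ux|+|\tau|)$ elements (close the witnessed values under the unary functions), so $T$-satisfiability is decided by guessing such a structure with an assignment and checking the polynomially many ground instances of the (finitely many) axioms of $T$ and of equality. Hence edge-checking, and likewise the tests ``$\sigma(\ux)\wedge\iota(\ux)$ is $T$-satisfiable'' and ``$\sigma(\ux)\wedge\upsilon(\ux)$ is $T$-satisfiable'', lie in \textsc{NP} in $|\ux|+|\iota|+|\tau|+|\upsilon|$.

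\textbf{Faithfulness via the backward iterates.} Let $\phi_0:=\upsilon$ and $\phi_{k}:=\mathsf{QE}(T^*,\mathit{Pre}(\tau,\phi_{k-1}))$ be the formulae computed in Line~6 of Algorithm~\ref{alg1}. By Assumption~\ref{ass} each $\phi_k$ is quantifier-free --- a Boolean combination of atoms over $\ux$, hence its truth is decided by every vertex --- and $\phi_k\equiv_{T^*}\mathit{Pre}^{k}(\tau,\upsilon)$; so, exactly as in the proof of Theorem~\ref{thm:basic-1}, $\mathcal S$ is unsafe iff~\eqref{eq:smc} is $T^*$-satisfiable for some length $k$, iff $\iota\wedge\phi_k$ is $T^*$-satisfiable for some $k$, iff for some $k$ some $T$-satisfiable vertex $\sigma$ contains the literals of $\iota$ and satisfies $\phi_k$. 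The crux I would then establish is that ``$\sigma$ is $T$-satisfiable and satisfies $\phi_k$'' has a one-step recursive description over the type graph: for $k=0$ it reads ``$\sigma(\ux)\wedge\upsilon(\ux)$ is $T$-satisfiable'', and for $k>0$ it holds iff there is a vertex $\sigma'$ with an edge $\sigma\to\sigma'$ such that $\sigma'$ is $T$-satisfiable and satisfies $\phi_{k-1}$. The nontrivial implication ``$\Leftarrow$'' is where quantifier elimination does the real work: a model of $T^*$ witnessing the edge $\sigma\to\sigma'$ realizes the \emph{complete} type $\sigma'$ on $\ux'$, and since $\sigma'$ decides every atom, that same model already satisfies the quantifier-free $\phi_{k-1}$ on $\ux'$ --- so one transition step suffices and no amalgamation of the two witnessing models is needed. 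Iterating, $\mathcal S$ is unsafe iff the graph has a path from a $T$-satisfiable, $\iota$-compatible vertex to a target vertex, and such a path may be taken simple, hence of length $<2^{O(|\ux|^2)}$.

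\textbf{Conclusion and the hard part.} Unsafety is then decided nondeterministically: guess a $T$-satisfiable vertex containing the literals of $\iota$; while a step counter of $O(|\ux|^2)$ bits is not exhausted, accept if the current vertex is a target, else guess a disjunct $\hat\tau$ and a successor vertex, verify the edge, and continue from it. This uses only polynomial space --- two vertices, the counter, and the polynomial workspace of the \textsc{NP} subtests --- so unsafety is in nondeterministic polynomial space; by Savitch's theorem and closure of \PSPACE under complement, deciding safety is in \PSPACE in $|\ux|+|\iota|+|\tau|+|\upsilon|$, as claimed. The main obstacle is precisely the faithfulness step: I must nail down the polynomial bound on the number of atoms (so the vertex set has size $2^{\mathrm{poly}(|\ux|)}$ and the counter fits in polynomial space) and, above all, verify that the quantifier elimination of $T^*$ genuinely collapses membership of a type in $\phi_k$ to the one-step recursion above, so that a path in the type graph witnesses a genuine unsafe computation without ever having to glue the models certifying the individual edges.
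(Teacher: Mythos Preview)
Your proposal is correct and lands in the same place as the paper's argument: both replace the full backward search by a nondeterministic walk over single conjunctions of literals on $\ux$ (at most $2^{O(|\ux|^2)}$ of them, by acyclicity), bound the walk with a polynomial-bit counter, discharge each step by polynomial-size $T$-satisfiability tests, and close up with Savitch's theorem. The paper stays closer to Algorithm~\ref{alg1}: it runs \emph{backward}, and at Line~6 nondeterministically guesses a conjunction $\alpha$ and checks that $T\models\alpha\to\mathsf{QE}(T^*,\phi)$ without computing the QE, namely by verifying $T\models\alpha\to C$ for every clause $C$ with $T\models\phi\to C$. You instead make the abstraction explicit as a graph on \emph{complete} types and walk \emph{forward}, checking each edge by a single $T$-satisfiability query on $\sigma\wedge\hat\tau\wedge\sigma'$; completeness of the types is precisely what lets you replace the paper's clause-by-clause implication check by this one-shot test, and it is also what powers your faithfulness recursion (the edge-witnessing model already decides the quantifier-free $\phi_{k-1}$ at $\sigma'$, so no gluing of models is needed). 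One small remark: you flag $\Leftarrow$ as the direction where quantifier elimination does the real work, but the $\Rightarrow$ direction needs it just as much---you must pass from a $T$-model of $\sigma$ to a $T^*$-model so that $\phi_k$ unfolds into $\mathit{Pre}(\tau,\phi_{k-1})$ and a successor type $\sigma'$ can be read off; both directions rely on $\phi_{k-1}$ being quantifier-free and on the $T^*$-equivalence $\phi_k\leftrightarrow\mathit{Pre}(\tau,\phi_{k-1})$.
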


\begin{proof}
We need to modify Algorithm~\ref{alg1} (we make
it nondeterministic and use Savitch's Theorem saying that PSPACE $=$ NPSPACE).

Since $\Sigma$ is acyclic, there are only finitely many terms involving a single variable, let this number be $k_\Sigma$ (we consider $T,\Sigma$ and hence $k_\Sigma$ constant for our problems).
Then, since all function symbols are unary, it is clear that we have at most $2^{O(n^2)}$ conjunctions of sets of literals involving at most $n$ variables and that if the system is unsafe, unsafety can be detected
with a run whose length is at most $2^{O(n^2)}$. Thus we introduce a counter to be incremented during the main loop (lines 2-6) of  Algorithm~\ref{alg1}.
The fixpoint test in line 2 is removed and loop is executed only until the maximum length of an unsafe run is not exceeded (notice that an exponential counter requires polynomial space).

Inside the loop, line 4 is removed (we do not need anymore the variable $B$) and line 6 is modified as follows.
%
We replace line 6 of the algorithm by
\[
6'.~~~   \phi\longleftarrow \alpha(\ux);
\]
where $\alpha$ is a non-deterministically chosen conjunction of literals
implying $\mathsf{QE}(T^*,\phi)$. Notice that to check the latter, there is no need to compute $\mathsf{QE}(T^*,\phi)$: recalling the proof of Proposition~\ref{prop:mc}   and Remark~\ref{rem:qe} it is sufficient to check that $T\models \alpha\to C$ holds for every clause $C(\ux)$ such that
$T\models\phi \to C$.

The algorithm is now in PSPACE, because all the satisfiability
tests we need are, as a consequence of the proof of Proposition~\ref{prop:fmp}, in NP: all such tests are reducible to $T$-satisfiability tests for quantifier-free $\Sigma$-formulae involving the variables $\ux$ and the additional (skolemized) quantified variables occurring in the transitions \footnote{ For the
test in line 3,
we just need replace in $\phi$ the $\ux$ by their values given by $\iota$, conjoin the result with all the ground instances of the axioms of $T$ and finally decide satisfiability with congruence closure algorithm of a polynomial size ground conjunction of literals.}. In fact, all these satisfiability tests are applied to formulae whose length is polynomial in the size of $\ux$, of $\iota$ and of $\tau$.
\end{proof}
\vskip 2mm

The proof of Theorem~\ref{thm:basic} shows that, whenever $\Sigma$ is not
acyclic, backward search is still a semi-decision procedure: if the system is
unsafe, backward search always terminates and discovers it; if the system is
safe, the procedure can diverge (but it is still correct).

\section{Proof of Theorem~4.2}\label{app_sec4b}

The technique used for proving Theorem~\ref{thm:nonsimple} is similar to that used in~\cite{ijfcs} (but here we have to face some additional complications, due to the fact that our quantifier elimination is not directly available, it is only indirectly available via model completions).

When introducing our transition formulae in~\eqref{eq:trans1} we made use of definable extensions
 and also of some function definitions via $\lambda$-abstraction. We already observed that such uses are due to notational convenience and do not really go beyond first-order logic. We are clarifying one more point now, before going into formal proofs.
 The lambda-abstraction definitions in~\eqref{eq:trans1} will make the proof of Lemma~\ref{lem:eq1} below smooth.
 Recall that an expression like
 \[
 b = \lambda y. F(y,\uz)
 \]
 can be seen as a mere abbreviation of $\forall y~b(y)=F(y,\uz)$.
 However, the use of such abbreviation makes clear that e.g. a formula like
 \[
  \exists b~( b = \lambda y. F(y,\uz) \wedge \phi(\uz, b))
 \]
 is equivalent to
 \begin{equation}\label{eq:aux}
 \phi(\uz, \lambda y. F(y,\uz)/b)~~.
 \end{equation}
 Since our $\phi(\uz, b)$ is in fact a first-order formula, our $b$ can occur in it only in terms like $b(t)$, so that in~\eqref{eq:aux} all
 occurrences of $\lambda$ can be eliminated by the so-called $\beta$-conversion: replace $\lambda y F(y,\uz)(t)$ by $F(t, \uz)$. Thus, in the end, either we use definable extensions or definitions via lambda abstractions, \emph{the formulae we manipulate can always be converted into plain first-order $\Sigma$- or $\ext{\Sigma}$-formulae}.

Let us call \emph{extended state formulae} the formulae of the kind
$ \exists \ue~ \phi(\ue, \ux,\ua)$,
where $\phi$ is quantifier-free and the $\ue$ are individual variables of both artifact and basic sorts.

\begin{lemma}\label{lem:eq1}
 The preimage of an extended state formula is logically equivalent to an extended state formula.
\end{lemma}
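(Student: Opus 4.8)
The plan is to obtain the claim by an explicit computation of the preimage, performing only equivalence-preserving rewritings. First I would reduce to a single transition formula: the preimage operator commutes with disjunctions (so $\mathit{Pre}(\bigvee_k\hat\tau_k,\cdot)=\bigvee_k\mathit{Pre}(\hat\tau_k,\cdot)$), and a finite disjunction of extended state formulae is again an extended state formula (rename the existential variables apart and prenex). Hence it suffices to treat $\tau=\hat\tau$ of the form~\eqref{eq:trans1}.

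So fix an extended state formula $\exists\ud\,\phi(\ud,\ux,\ua)$, with $\phi$ quantifier-free and $\ud$ a tuple of individual variables of basic and artifact sorts, renamed apart from the variables $\ue$ occurring in $\hat\tau$. Unwinding the definitions, $\mathit{Pre}(\hat\tau,\exists\ud\,\phi)$ equals $\exists\ux'\,\exists\ua'\,\big(\hat\tau(\ux,\ua,\ux',\ua')\wedge\exists\ud\,\phi(\ud,\ux',\ua')\big)$, and the next step is to eliminate the primed quantifiers. The quantifiers $\exists\ux'$ disappear by the one-point rule, replacing each $x'_i$ throughout by the term $F_i(\ue,\ux,\ua)$. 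For the function variables $\ua'$ I would use the observation recorded around~\eqref{eq:aux}: $\exists a'_j\,\big(a'_j=\lambda y.\,G_j(y,\ue,\ux,\ua)\wedge\chi(a'_j)\big)$ is equivalent to $\chi$ with $\lambda y.\,G_j(y,\ue,\ux,\ua)$ substituted for $a'_j$, and since in the first-order body $a'_j$ can occur only in subterms $a'_j(t)$, $\beta$-conversion turns each such subterm into $G_j(t,\ue,\ux,\ua)$. After these substitutions I am left with a formula $\exists\ue\,\exists\ud\,\big(\gamma(\ue,\ux,\ua)\wedge\phi^{*}\big)$ where $\phi^{*}$ is quantifier-free but still mentions the case-defined symbols $F_i,G_j$.

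The final step is to eliminate the case-defined functions using the recipe from the Preliminaries: replace an atom $A$ containing an occurrence $F(\uv)$ by $\bigvee_k(\kappa_k(\uv)\wedge A(t_k(\uv)))$, applied exhaustively in the appropriate order, which yields a genuine quantifier-free $\ext{\Sigma}$-formula $\theta(\ud,\ue,\ux,\ua)$ equivalent to $\gamma\wedge\phi^{*}$ modulo the defining axioms. Since $\ud$ and $\ue$ consist of individual variables of basic and artifact sorts, $\exists\ue\,\exists\ud\,\theta$ is an extended state formula logically equivalent to $\mathit{Pre}(\hat\tau,\exists\ud\,\phi)$, and combining this with the disjunctive reduction proves the lemma. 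I expect the only delicate point to be the bookkeeping around the second-order equations $a'_j=\lambda y.\,G_j(\dots)$ — keeping track of variable renamings and carrying out the $\beta$-reduction correctly — which is exactly what the $\lambda$-abstraction notation in~\eqref{eq:trans1} was designed to streamline, so in practice this is a careful but essentially routine unwinding of definitions rather than a genuine obstacle.
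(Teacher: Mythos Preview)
Your proposal is correct and follows essentially the same approach as the paper's own proof: reduce to a single disjunct, then eliminate the primed variables $\ux',\ua'$ by the one-point rule and $\beta$-conversion (what the paper tersely calls ``repeated substitutions''), arriving at exactly the formula $\exists\ue\,\exists\ue_0\,(\gamma\wedge\phi[\uF/\ux',\lambda y.\uG/\ua'])$. Your additional explicit step of unfolding the case-defined functions $F_i,G_j$ back to genuine $\ext{\Sigma}$-syntax is something the paper leaves implicit (via the identification of $T$ with its case-definable extensions stated in the Preliminaries), but it is a legitimate and harmless clarification.
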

\begin{proof} We manipulate the formula
\begin{equation}\label{eq:pre}
 \exists \ux'\,\exists \ua'\, (\tau(\ux, \ua, \ux', \ua') \wedge \exists \ue~ \phi(\ue, \ux',\ua'))
\end{equation}
up to logical equivalence, where $\tau$ is given by\footnote{Actually, $\tau$ is a disjunction of such formulae, but it easily seen that disjunction can be
accommodated by moving existential quantifiers back-and-forth through them.}
\begin{equation}\label{eq:trans2}
         \exists \ue_0\left(\gamma(\ue_0, \ux, \ua) \wedge
            \ux'= \uF(\ue_0, \ux, \ua) \wedge \ua'=\lambda y. \uG(y,\ue_0,\ux, \ua)
           \right)
\end{equation}
(here we used plain equality  for conjunctions of equalities, e.g. $\ux'= \uF(\ue_0, \ux, \ua)$ stands for $\bigwedge_i x'_i= F_i(\ue, \ux, \ua)$).
Repeated substitutions  show that~\eqref{eq:pre} is equivalent to
\begin{equation}
 \exists \ue\,\exists \ue_0\, \left(\gamma(\ue_0, \ux, \ua) \wedge \phi(\ue, \uF(\ue_0, \ux,\ua)/\ux',\lambda y.\uG(y,\ue_0,\ux, \ua)/\ua' )\right)
\end{equation}
which is an extended state formula.
\end{proof}
\vskip 2mm

\begin{lemma}\label{lem:eq2}
 For every extended state formula there is a state formula equivalent to it in all $\ext{\Sigma}$-models of $T^*$.
\end{lemma}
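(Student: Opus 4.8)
The plan is to reduce to the quantifier elimination guaranteed by Assumption~\ref{ass} for $T^*$, the subtlety being that this elimination only removes quantified variables of \emph{basic} sort. First I would write the given extended state formula as $\exists\ue_1\,\exists\ue_2\,\phi(\ue_1,\ue_2,\ux,\ua)$, where $\ue_1$ collects the variables of $\ue$ of artifact sort and $\ue_2$ those of basic sort; it suffices to eliminate $\ue_2$ modulo $T^*$, since the remaining formula $\exists\ue_1\,(\dots)$ is then a state formula. The key structural observation is that in $\ext{\Sigma}$ no function symbol has an artifact sort as target (the symbols of $\Sigma$ map among basic sorts, and each artifact component in $\ua$ maps an artifact sort to a basic sort); hence every term of artifact sort occurring in $\phi$ is a bare variable taken from $\ue_1$. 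Consequently each atom of $\phi$ is either an equality $e_i=e_{i'}$ between variables of $\ue_1$, or a $\Sigma$-atom whose basic-sort terms are $\Sigma$-terms applied to basic-sort variables (from $\ue_2,\ux$), to constants, or to subterms of the form $a_j(e_i)$ with $a_j\in\ua$ and $e_i\in\ue_1$.

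Next I would \emph{freeze} these last subterms: for each pair $(a_j,e_i)$ introduce a fresh variable $z_{j,i}$ of the corresponding basic sort, collect them into a tuple $\uz$, and let $\tilde\phi(\ue_1,\ue_2,\ux,\uz)$ be obtained from $\phi$ by replacing every occurrence of $a_j(e_i)$ by $z_{j,i}$. This is a purely notational move, so $\phi$ coincides verbatim with $\tilde\phi$ after the back-substitution $\uz:=a(\ue_1)$. Now $\tilde\phi$ is a Boolean combination of equalities among $\ue_1$ and of $\Sigma$-atoms over the basic-sort variables $\ue_2,\ux,\uz$. Putting $\tilde\phi$ in disjunctive normal form and separating, in each disjunct, the literals about $\ue_1$ from the $\Sigma$-literals, I get a logical equivalence $\tilde\phi\equiv\bigvee_k\big(\delta_k(\ue_1)\wedge\eta_k(\ue_2,\ux,\uz)\big)$ valid in all structures, with each $\delta_k$ a conjunction of (in)equalities among $\ue_1$ and each $\eta_k$ a $\Sigma$-constraint; hence $\exists\ue_2\,\tilde\phi\equiv\bigvee_k\big(\delta_k(\ue_1)\wedge\exists\ue_2\,\eta_k(\ue_2,\ux,\uz)\big)$.

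Then I would apply quantifier elimination for $T^*$ to each primitive $\Sigma$-formula $\exists\ue_2\,\eta_k(\ue_2,\ux,\uz)$, obtaining a quantifier-free $\Sigma$-formula $\eta'_k(\ux,\uz)$ with $T^*\models\exists\ue_2\,\eta_k\leftrightarrow\eta'_k$; therefore $T^*\models\exists\ue_2\,\tilde\phi\leftrightarrow\bigvee_k\big(\delta_k(\ue_1)\wedge\eta'_k(\ux,\uz)\big)$. Since this is a $T^*$-consequence it holds under every assignment to the free variables, and in particular stays valid after substituting the terms $a_j(e_i)$ for the $z_{j,i}$ (a $T^*$-consequence is closed under instantiation of free variables by terms). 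Performing this back-substitution and recalling that $\tilde\phi[\uz:=a(\ue_1)]$ is $\phi$, I obtain $T^*\models\exists\ue_2\,\phi\leftrightarrow\psi$ with the quantifier-free formula $\psi(\ue_1,\ux,\ua):=\bigvee_k\big(\delta_k(\ue_1)\wedge\eta'_k(\ux,a(\ue_1))\big)$. Existentially quantifying $\ue_1$ on both sides yields $T^*\models\exists\ue\,\phi\leftrightarrow\exists\ue_1\,\psi(\ue_1,\ux,\ua)$, and the right-hand side is a state formula, as required.

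I expect the only real content to be this frozen-subterm device together with the remark that $T^*$ eliminates quantifiers over basic sorts only: it works here exactly because the quantified artifact variables of $\ue_1$ never serve as arguments of the functions being reasoned about, appearing solely inside the protected subterms $a_j(e_i)$, which behave as basic-sort parameters and so are harmless for the quantifier-elimination procedure. The points needing care are the verification that no artifact-sort term other than a variable can occur in $\phi$, and the justification that a $T^*$-valid equivalence survives the substitution $\uz:=a(\ue_1)$; everything else — conversion to DNF, commuting $\exists\ue_2$ past $\vee$, and requantifying $\ue_1$ — is routine.
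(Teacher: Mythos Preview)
Your proof is correct and follows essentially the same approach as the paper's: separate the existentially quantified variables by sort, observe that artifact-sort variables occur only in equalities among themselves or as arguments of artifact components, abstract the terms $a_j(e_i)$ by fresh basic-sort placeholders, put the result in DNF, apply the quantifier elimination of $T^*$ to the $\Sigma$-part, and substitute back. Your ``freezing'' device and explicit justification of the back-substitution are exactly what the paper does more tersely via the notation $\ua[\ue]/\uz$.
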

\begin{proof} Let $ \exists \ue\,\exists \uy~ \phi(\ue,\uy, \ux,\ua)$, be an extended state formula, where $\phi$ is quantifier-free, the $\ue$ are variables whose sort is an artifact sort and the $\uy$ are variables whose sort is a basic sort.

Now observe that, according to our definitions,  the artifact components have an artifact sort as source sort and a basic sort as target sort; since equality
is the only predicate, the literals in $\phi$ can be divided into equalities/inequalities between variables from $\ue$ and literals where the $\ue$ can only occur as arguments of an artifact component.
Let $\ua[\ue]$ be the tuple of the terms among the terms of the kind $a_j[e_s]$ which are well-typed; using disjunctive normal forms,
our extended state formula can be written as a disjunction of formulae of the kind
\begin{equation}\label{eq:s}
 \exists \ue\,\exists \uy~(\phi_1(\ue) \wedge \phi_2(\uy, \ux,\ua[\ue]/\uz))
\end{equation}
where $\phi_1$ is a conjunction of equalities/inequalities, $\phi_2(\uy,\ux,\uz)$ is a quantifier-free $\Sigma$-formula and $\phi_2(\uy,\ux, \ua[\ue]/\uz)$ is obtained from  $\phi_2$ by replacing the variables $\uz$ by the terms $\ua[\ue]$. Moving inside the existential quantifiers $\uy$, we can rewrite~\eqref{eq:s} to
\begin{equation}\label{eq:s1}
 \exists \ue~(\phi_1(\ue) \wedge \,\exists \uy\,\phi_2(\uy, \ux,\ua[\ue]/\uz))
\end{equation}
Since $T^*$ has quantifier elimination, we have that
there is $\psi(\ux,\uz)$ which is equivalent to $\exists \uy\,\phi_2(\uy, \ux,\uz))$ in all models of $T^*$;
thus in all $\ext{\Sigma}$-models of $T^*$, the formula~\eqref{eq:s1} is equivalent to
\[
\exists \ue~(\phi_1(\ue) \wedge \,\psi( \ux,\ua[\ue]/\uz))
\]
which is a state formula.
\end{proof}
\vskip 2mm
We underline that Lemmas~\ref{lem:eq1} and~\ref{lem:eq2} both give an explicit effective procedure
for computing equivalent (extended) state formulae. Used one after the other, such procedures extends the procedure $QE(T^*, \phi)$ in line 6 of Algorithm~\ref{alg1} to (non simple) artifact systems. Thanks to such procedure, the only formulae we need to test for satisfiability in lines 2 and 3 of  the backward reachability algorithm are
the $\exists\forall$-formulae introduced below.

Let us call $\exists\forall$-formulae the formulae of the kind 
\begin{equation}\label{eq:s02}
 \exists \ue\; \forall \ui \; \phi(\ue, \ui, \ux, \ua)
\end{equation}
where the variables $\ue, \ui$ are variables whose sort is an artifact sort and $\phi$ is quantifier-free.
The crucial point for the following lemma to hold is that the \emph{universally} quantified variables in $\exists\forall$-formulae are all of artifact sorts:

\begin{lemma}\label{lem:sat}
 The satisfiability of a $\exists\forall$-formula in a $\ext{\Sigma}$-model of $T$ is decidable; moreover, a $\exists\forall$-formula is satisfiable in a $\ext{\Sigma}$-model of $T$ iff it is satisfiable in a DB-instance of $\tup{\ext{\Sigma},T}$ iff it is satisfiable in a $\ext{\Sigma}$-model of $T^*$.
\end{lemma}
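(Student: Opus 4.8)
The statement to prove is Lemma~\ref{lem:sat}: for an $\exists\forall$-formula $\exists\ue\,\forall\ui\,\phi(\ue,\ui,\ux,\ua)$ (where $\ue,\ui$ range over artifact sorts), satisfiability in a $\ext{\Sigma}$-model of $T$ is decidable, and it is equivalent to satisfiability in a DB-instance of $\tup{\ext{\Sigma},T}$, which is equivalent to satisfiability in a $\ext{\Sigma}$-model of $T^*$. The plan is to prove the three-way equivalence first, and then read decidability off the DB-instance characterization together with Assumption~\ref{ass}.

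**Key steps.** First I would reduce the universal quantifier to a finite conjunction. Fix a model $\cM$ (of $T$, or of $T^*$, or a DB-instance) together with a witnessing assignment $\alpha$ for the existential block $\ue$; say $\ue=e_1,\dots,e_k$. Since $\phi$ is quantifier-free and the only predicate is equality, and since the variables $\ui$ have artifact sorts, every literal of $\phi$ in which a variable $i_t\in\ui$ occurs either is an (in)equality between two artifact-sorted variables, or has $i_t$ occurring only as the argument of an artifact component $a_j[i_t]$ — exactly the syntactic analysis already used in the proof of Lemma~\ref{lem:eq2}. The point is that, in any such model, $\forall\ui\,\phi(\ue,\ui,\dots)$ is equivalent to the instantiation of $\ui$ over the finitely many terms available: the artifact-sorted ground terms built from $\ua$, $\ux$, $\ue$ (which, under acyclicity of $\tilde\Sigma$, are finitely many), together with the case $i_t$ equal to one of the $e_s$, and one ``fresh'' generic value per artifact sort to account for indices distinct from all named ones. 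This instantiation argument is the standard Herbrand-style move and is the technical heart; I would phrase it by noting that whether $\forall\ui\,\phi$ holds depends only on the finite ``diagram'' of the artifact part together with the values of the artifact components at the named entries, so it suffices to test $\ui$ against the named entries plus finitely many fresh ones. After this reduction, $\exists\ue\,\forall\ui\,\phi$ becomes equisatisfiable (over $T$, over $T^*$, over DB-instances — uniformly) with an \emph{existential} formula $\exists\ue\,\exists(\text{fresh indices})\,\phi'$ over $\ext{\Sigma}$.

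**Transferring satisfiability between $T$, DB-instances, and $T^*$.** Once we are down to an existential $\ext{\Sigma}$-formula $\chi$, the three equivalences follow from the results already in the paper. A model of $T^*$ is a model of $T$, so $T^*$-satisfiability implies $T$-satisfiability trivially. For $T$-satisfiability $\Rightarrow$ DB-instance-satisfiability: a model $\cM$ of $T$ satisfying $\chi$ has, for each witness, only finitely many elements of id and artifact sorts actually ``touched'' by the finitely many terms of $\chi$; restricting the id sorts (and value sorts, if one wishes, via the finite-model-property argument of Proposition~\ref{prop:fmp}) to the generated substructure yields a DB-instance still satisfying $\chi$, because quantifier-free/existential formulae are preserved under passing to substructures in one direction and to the original model in the other, and because $T$ is universal (hence closed under substructures). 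Formally this is the finite model property of $T$ for constraint satisfiability (Assumption~\ref{ass}) applied to the constraint obtained by Skolemizing the existential quantifiers of $\chi$. For DB-instance-satisfiability $\Rightarrow$ $T^*$-satisfiability: a DB-instance is in particular a model of $T$, and by the defining property~(i) of a model completion, every $\Sigma$-constraint satisfiable in a model of $T$ is satisfiable in a model of $T^*$; applied to the Skolemization of $\chi$ this gives a model of $T^*$ satisfying $\chi$. Chaining these gives the full equivalence.

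**Decidability and the main obstacle.** Decidability is then immediate: by the equivalence, testing satisfiability of the $\exists\forall$-formula in a model of $T$ reduces to testing satisfiability of the existential $\ext{\Sigma}$-formula $\chi$ in a model of $T$, which by the finite model property and Assumption~\ref{ass} (decidable constraint satisfiability for $T$) is decidable — here one uses that adding free $n$-ary relation symbols (the artifact sorts contribute no new functions beyond the unary $\ua$, so $\tilde\Sigma$ stays within the ``DB extended-signature'' regime) preserves these properties, as noted in the remarks after Assumption~\ref{ass}. I expect the main obstacle to be making the instantiation step in the first paragraph fully rigorous: one must argue carefully that only \emph{finitely many} instances of the universally quantified artifact variables need to be checked, which rests on (a) acyclicity of $\tilde\Sigma$ bounding the number of artifact-sorted terms, and (b) a genuine case analysis showing that an index not equal to any named one behaves ``generically'', so that a single fresh representative per artifact sort suffices. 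Everything after that reduction is bookkeeping with preservation theorems and the two defining clauses of a model completion.
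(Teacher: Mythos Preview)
Your overall architecture is right and matches the paper: reduce the $\forall\ui$ block to a finite conjunction, land on an existential formula, then use Assumption~\ref{ass} to get the three-way equivalence and decidability. The second half (the transfer between $T$, DB-instances, and $T^*$) is essentially correct. The gap is in the instantiation step, and it is exactly the obstacle you flagged.

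You write that ``in any such model, $\forall\ui\,\phi(\ue,\ui,\dots)$ is equivalent to the instantiation of $\ui$ over the finitely many terms available,'' and that truth of $\forall\ui\,\phi$ ``depends only on \ldots\ the values of the artifact components at the named entries.'' Both claims are false as in-model statements: in a model where an artifact sort has elements beyond those named by $\ue$ (and your one fresh index), $\forall\ui\,\phi$ depends on $a_j[i]$ for \emph{every} such $i$, and can fail even though all your finitely many instances hold. What you actually need is not an in-model equivalence but \emph{equisatisfiability}, and the mechanism is not Herbrand-style instantiation but a model-shrinking argument: given $\cM$ satisfying the instantiated conjunction $\exists\ue\,(\mathrm{Diff}(\ue)\wedge\bigwedge_\sigma\phi(\ue,\ui\sigma,\ux,\ua))$ with $\sigma$ ranging over substitutions $\ui\to\ue$, \emph{restrict} the interpretation of every artifact sort to exactly the elements named by $\ue$. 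In the resulting structure $\cM'$ the quantifier $\forall\ui$ ranges only over those elements, so the universal statement follows from the conjunction; and $\cM'$ is still a $\ext{\Sigma}$-model of $T$ (resp.\ a DB-instance, resp.\ a model of $T^*$) because the restriction touches only artifact sorts and leaves the $\Sigma$-reduct intact. This restriction step is the missing idea. With it you see that (i) no fresh indices are needed at all---instantiating $\ui$ over $\ue$ alone already suffices; and (ii) no acyclicity hypothesis is needed either: since no function symbol has an artifact sort as target, the only artifact-sorted terms are the variables $\ue,\ui$ themselves, so your appeal to acyclicity of $\tilde\Sigma$ is a red herring (and indeed the lemma is stated and proved without assuming it). After this correction the remainder goes through as you sketched: replace each well-typed term $a_j[e_s]$ by a fresh basic-sorted variable to obtain a quantifier-free $\Sigma$-formula, and invoke Assumption~\ref{ass} directly.
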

\begin{proof}
 First of all, notice that a $\exists\forall$-formula~\eqref{eq:s02} is equivalent to a disjunction of formulae of the kind
 \begin{equation}\label{eq:s03}
 \exists \ue\; ({\rm Diff}(\ue) \wedge \forall \ui \; \phi(\ue, \ui, \ux, \ua))
\end{equation}
where ${\rm Diff}(\ue)$ says that any two variables of the same sort from the $\ue$ are distinct (to this aim, it is sufficient to guess a partition and to keep, via a substitution,
only one element for each equivalence class).\footnote{In the MCMT implementation, state formulae are always maintained so that all existential variables occurring in  them are differentiated, so that there is no need of this expensive computation step.} So we can freely assume that $\exists\forall$-formulae are all of the kind~\eqref{eq:s03}.

Now, by the way $\ext{\Sigma}$ is built, the only atoms occurring in $\phi$ whose arguments involve terms of artifact sorts are of the kind $e_s=e_j$, so all such atoms can be 
replaced either by $\top$ or by $\bot$ (depending on whether we have $s=j$ or not). So we can assume that there are no such atoms in $\phi$ and as a result, the variables $\ue$, $\ui$ can only occur as arguments of the $\ua$.

Let us consider now the set of all (sort-matching) substitutions $\sigma$ mapping the $\ui$ to the $\ue$.
The formula~\eqref{eq:s03} is satisfiable (respectively: in a $\ext{\Sigma}$-model of $T$, in a DB-instance of $\tup{\ext{\Sigma},T}$, in a $\ext{\Sigma}$-model of $T^*$) iff so it is the formula
 \begin{equation}\label{eq:s04}
 \exists \ue\; ({\rm Diff}(\ue) \wedge \bigwedge_{\sigma}\phi(\ue, \ui\sigma, \ux, \ua))
\end{equation}
(here $\ui\sigma$ means the componentwise application of $\sigma$ to the $\ui$): this is because, if~\eqref{eq:s04} is satisfiable in $\cM$, then we can take as $\cM'$ the
same $\ext{\Sigma}$-structure as $\cM$, but with the interpretation of the artifact sorts restricted only to the elements named by the $\ue$ and get in this way a $\ext{\Sigma}$-structure $\cM'$ satisfying~\eqref{eq:s03} (notice that $\cM'$ is still a DB-instance of $\tup{\ext{\Sigma}, T}$ or a $\ext{\Sigma}$-model of $T^*$, if so was $\cM$).
Thus, we can freely concentrate on the satisfiability problem of formulae of the kind~\eqref{eq:s04} only.

Let now $\ua[\ue]$ be the tuple of the terms among the terms of the kind $a_j[e_s]$ which are well-typed. Since in~\eqref{eq:s04} the $\ue$ can only occur as arguments of the artifact components, as observed above, the formula~\eqref{eq:s04} is in fact of the kind
\begin{equation}\label{eq:s05}
 \exists \ue\; ({\rm Diff}(\ue) \wedge \psi(\ux, \ua[\ue]/\uz))
\end{equation}
where $\psi(\ux,\uz)$ is a quantifier-free $\Sigma$-formula and $\psi(\ux, \ua[\ue]/\uz)$ is obtained from  $\psi$ by replacing the variables $\uz$ by the terms $\ua[\ue]$
(notice that the $\uz$ are of basic sorts because the target sorts of the artifact components are basic sorts).

It is now evident that~\eqref{eq:s05} is satisfiable  (respectively: in a $\ext{\Sigma}$-model of $T$, in a DB-instance of $\tup{\ext{\Sigma},T}$, in a $\ext{\Sigma}$-model of $T^*$) iff the formula
\begin{equation}\label{eq:s06}
  \psi(\ux,\uz)
\end{equation}
is satisfiable (respectively: in a $\Sigma$-model of $T$, in a DB-instance of $\tup{\Sigma,T}$, in a $\Sigma$-model of $T^*$). In fact, if we are given a $\Sigma$-structure
$\cM$ and an assignment satisfying~\eqref{eq:s06}, we can easily expand $\cM$  to a $\ext{\Sigma}$-structure by taking the $e$'s themselves as the elements of the interpretation of the artifact sorts;  in the so-expanded $\ext{\Sigma}$-structure, we can interpret the artifact components $\ua$ by taking  the $\ua[\ue]$  to be the elements assigned to the $\uz$ 
 in the satisfying assignment for~\eqref{eq:s06}.

Thanks to Assumption~\ref{ass},
the satisfiability of~\eqref{eq:s06}  in a $\Sigma$-model of $T$, in a DB-instance of $\tup{\Sigma,T}$, or in a $\Sigma$-model of $T^*$
are all equivalent and decidable.
\end{proof}
\vskip 2mm
The instantiation algorithm of Lemma~\ref{lem:sat} can be used to discharge the satisfiability tests in lines 2 and 3 of Algorithm~\ref{alg1} because the conjunction of a state formula and of the negation of a state formula is a $\exists\forall$-formula (notice that
$\iota$ is itself the negation of a state formula, according to the definition of an \textit{initial} formula in \ras.

\vskip 2mm\noindent
\textbf{Theorem~\ref{thm:nonsimple}}~\emph{The backward search algorithm (cf. Algorithm~\ref{alg1}), applied to artifact systems, is effective and partially correct.}
\vskip 2mm
\begin{proof}
 Recall that $\mathcal S$ is unsafe iff there is
 no DB-instance $\cM$ of $\tup{\ext{\Sigma}, T}$, no $k\geq 0$ and no assignment in $\cM$ to the variables $\ux^0,\ua^0 \dots, \ux^k, \ua^k$ such that the formula~\eqref{eq:smc1}
         \[
           \iota(\ux^0, \ua^0) \wedge \tau(\ux^0,\ua^0, \ux^1, \ua^1) \wedge \cdots \wedge\tau(\ux^{k-1},\ua^{k-1},
           \ux^k,\ua^{k})\wedge \upsilon(\ux^k,\ua^{k})
         \]
         is true in $\cM$. It is sufficient to show that this is equivalent to saying that there is
 no $\ext{\Sigma}$-model $\cM$ of $T^*$, no $k\geq 0$ and no assignment in $\cM$ to the variables $\ux^0,\ua^0 \dots, \ux^k, \ua^k$ such that~\eqref{eq:smc1} is true in $\cM$
 (once this is shown, the proof goes in the same way as the proof of Theorem~\ref{thm:basic}).

 Now, the formula~\eqref{eq:smc1} is satisfiable in a $\ext{\Sigma}$-structure $\cM$ under a suitable assignment iff the formula
         \begin{eqnarray*}
           \iota(\ux^0, \ua^0) ~~\wedge
           & \exists \ua^1\exists\ux^1 (\tau(\ux^0,\ua^0, \ux^1, \ua^1) \wedge \cdots ~~~~~~~~~~~~~~~~~~~~~~~~~~~~~~~~~~~~~~~~
           \\ &
           \cdots
           \wedge \exists \ua^k\exists \ux^k(\tau(\ux^{k-1},\ua^{k-1},
           \ux^k,\ua^{k})\wedge \upsilon(\ux^k,\ua^{k}))\cdots)
         \end{eqnarray*}
         is satisfiable in $\cM$ under a suitable assignment; by Lemma~\ref{lem:eq1}, the latter is equivalent to a formula of the kind
         \begin{equation}\label{eq:a}
           \iota(\ux, \ua)~\wedge~\exists \ue\,\exists \uz\,\phi(\ue,\uz,\ux, \ua)
         \end{equation}
         where $\exists \ue\,\exists \uz\,\phi(\ue,\uz,\ux, \ua)$ is an extended state formula (thus $\phi$ is quantifier-free, the $\ue$
         are variables of artifact sorts and the $\uz$ are variables of  basic sorts  - we renamed $\ux^0, \ua^0$ as $\ux, \ua$).
         However the satisfiability of~\eqref{eq:a} is the same as the satisfiability of
         $\exists \ue\,(\iota(\ux,\ua) \wedge\phi(\ue,\uz,\ux, \ua))$;
         the latter, in view of the definition of \textit{initial} formula in \ras, is
         a $\exists\forall$-formula and so
         Lemma~\ref{lem:sat} applies and shows that its satisfiability in a DB-instance of $\tup{\ext{\Sigma}, T}$ is the same as its satisfiability in
         a $\ext{\Sigma}$-model of $T^*$.
\end{proof}

We remark that all the results in this Section (in particular, Theorem~\ref{thm:nonsimple}) \emph{hold also in case the read-only database is modeled via an extended DB-theory} (see Definition~\ref{def:extdb}) satisfying  Assumption~\ref{ass}.

\section{Proof of Termination Results: local updates and tree-like settings}
\label{app_sec5}

We begin by recalling some basic facts about well-quasi-orders.  Recall that a
\emph{well-quasi-order} (wqo) is a set $W$ endowed with a reflexive-transitive
relation $\leq$ having the following property: for every infinite succession
\[
w_0, w_1, \dots, w_i, \dots
\]
of elements from $W$ there are $i, j$ such that $i<j$ and $w_i\leq w_j$.

The fundamental result about wqo's is the following, which is a consequence of
the well-known Kruskal's Tree Theorem~\cite{kruskal}:

\begin{theorem} If $(W,\leq)$ is a  wqo, then so is the partial order of the finite lists over $W$, ordered by componentwise subword comparison (i.e. $w\leq w'$ iff there is a subword $w_0$ of $w'$ of the same length as $w$, such that the i-th entry of $w$ is
        less or equal to---in the sense of $(W,\leq)$---the $i$-th entry of $w_0$, for all $i=0, \dots \vert w\vert$).
\end{theorem}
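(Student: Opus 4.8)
The plan is to derive the statement from Kruskal's Tree Theorem~\cite{kruskal} by regarding a finite list as a degenerate, unary-branching labelled tree, and then invoking the elementary fact that any sub-order of a wqo is again a wqo. Concretely, to a finite list $w=\langle a_1,\dots,a_n\rangle$ over $W$ I would associate the rooted tree $P(w)$ that is a single path of $n$ vertices, the $i$-th vertex (counting from the root) carrying the label $a_i\in W$, with the empty list sent to the empty tree. By Kruskal's Tree Theorem, since $(W,\leq)$ is a wqo, the set of finite $W$-labelled rooted trees is a wqo under the label-respecting homeomorphic embedding order.

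The next step is to check that, restricted to paths, this order \emph{is exactly} the componentwise subword order. A label-respecting homeomorphic embedding $P(w)\hookrightarrow P(w')$ amounts to an injective, order-preserving (hence strictly increasing) map $f$ from the chain of vertices of $P(w)$ into the chain of vertices of $P(w')$ such that the label of $f(v)$ dominates the label of $v$; the meet-preservation clause in Kruskal's notion of embedding is vacuous here, since a path has no branching. Letting $w_0$ be the sublist of $w'$ formed by the labels in the image of $f$, listed in path order, exhibits exactly a witness as in the statement, and conversely each such witness yields such an embedding. Hence the finite lists over $W$ with the componentwise subword order form an induced sub-order of a wqo, and are therefore a wqo.

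An alternative, self-contained route avoids Kruskal and uses Nash-Williams' minimal bad sequence argument. Assume the list order is not a wqo and pick a \emph{minimal bad sequence} $w_0,w_1,\dots$, i.e.\ one in which $w_i$ has least possible length among all lists that extend $w_0,\dots,w_{i-1}$ to a bad sequence. No $w_i$ is empty, since the empty list is $\leq$ every list; write $w_i=a_i\cdot w_i'$ with head $a_i\in W$ and tail $w_i'$. As $(W,\leq)$ is a wqo, the sequence $a_0,a_1,\dots$ has an infinite increasing subsequence $a_{i_0}\leq a_{i_1}\leq\cdots$ with $i_0<i_1<\cdots$. Consider $w_0,\dots,w_{i_0-1},\,w_{i_0}',\,w_{i_1}',\,w_{i_2}',\dots$; since $w_{i_0}'$ is strictly shorter than $w_{i_0}$, minimality forces this sequence to be good. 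A good pair cannot lie inside the prefix $w_0,\dots,w_{i_0-1}$ (that prefix is part of the original bad sequence); it cannot have the form $w_j\leq w_{i_k}'$ with $j<i_0$ either, because $w_{i_k}'\leq w_{i_k}$ would then give $w_j\leq w_{i_k}$, contradicting badness of the original; so it must be $w_{i_k}'\leq w_{i_l}'$ for some $k<l$, whence $w_{i_k}=a_{i_k}\cdot w_{i_k}'\leq a_{i_l}\cdot w_{i_l}'=w_{i_l}$ (prepending dominated heads preserves the subword order), again contradicting badness. Thus no bad sequence exists.

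The main obstacle in the first route is the order-theoretic bookkeeping: one must verify that $w\mapsto P(w)$ is not merely order-preserving but \emph{order-reflecting}, so that comparability of $P(w)$ and $P(w')$ under Kruskal's order is equivalent to subword comparability of $w$ and $w'$; this is precisely where the meet-preserving, label-respecting definition of homeomorphic tree embedding enters. In the second route the delicate point is the passage to an infinite increasing subsequence of heads, which relies on the standard fact that in a wqo every infinite sequence admits an infinite increasing subsequence (provable by iterating the wqo property or via Ramsey's theorem), together with keeping the indices straight when splicing in the tails.
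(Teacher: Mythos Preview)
Your proposal is correct, and your first route---encoding lists as unary-branching labelled trees and invoking Kruskal---is exactly the approach the paper takes, though the paper gives no details at all: it simply states the theorem as ``a consequence of the well-known Kruskal's Tree Theorem'' and moves on. Your second route, the Nash-Williams minimal bad sequence argument, is a genuinely different and self-contained proof (this is essentially Higman's original argument for Higman's Lemma); it has the advantage of not relying on the much heavier Kruskal machinery, at the cost of being slightly more delicate in the index bookkeeping you already flagged.
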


Various  wqo's can be recognized by applying the above theorem; in particular, the theorem implies that the cartesian product of wqo's is a wqo.
As an application, notice that $\mathbb N$ is a wqo, hence the following
corollary (known as Dikson Lemma) follows:

\begin{corollary}
 The cartesian product of $k$-copies of $\mathbb N$ (and also of $\mathbb N \cup \{\infty\}$), with componentwise ordering, is a wqo.
\end{corollary}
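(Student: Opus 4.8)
The plan is to obtain the corollary as an immediate instance of the fact recorded just above — that a cartesian product of wqo's is again a wqo — once we check that the two base posets are wqo's. First I would observe that $(\mathbb{N},\leq)$ and $(\mathbb{N}\cup\{\infty\},\leq)$, with the convention $n\leq\infty$ for all $n$, are \emph{linearly ordered} and \emph{well-founded}, hence wqo's: in a linear order an infinite sequence $w_0,w_1,\dots$ admits indices $i<j$ with $w_i\leq w_j$ unless it is strictly decreasing, and a strictly decreasing infinite sequence contradicts well-foundedness. (Equivalently: the range of such a sequence is a non-empty subset of a well-order, so it has a least element, attained at some index $i$, whence $w_i\leq w_j$ for every $j$, in particular for one such $j>i$.)

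Next I would invoke the statement that the cartesian product of wqo's is a wqo, applied to $k$ copies of $W\in\{\mathbb{N},\,\mathbb{N}\cup\{\infty\}\}$; since $k$ is finite this reduces, by an obvious induction on $k$, to the binary case. For completeness I would also recall how the binary case follows from the Theorem: a $k$-tuple in $W^k$ is exactly a list over $W$ of length $k$, and the subword comparison of the Theorem, once restricted to lists of one and the same length $k$, is precisely the componentwise order — the only subword of a length-$k$ list that again has length $k$ is the list itself. Thus $(W^k,\ \text{componentwise }\leq)$ is a sub-poset of the poset of finite lists over $W$; by the Theorem the latter is a wqo, and any sub-poset of a wqo is trivially a wqo (an infinite sequence lying inside the subset is in particular an infinite sequence in the ambient wqo, and the required pair $i<j$ with $w_i\leq w_j$ then lies inside the subset). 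Applying this with $W=\mathbb{N}$ and with $W=\mathbb{N}\cup\{\infty\}$ yields both claims of the corollary.

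There is essentially no hard step here: everything rests on the already-established Theorem. The only place deserving a line of care is the bridge between ``finite lists under subword embedding'' and ``$k$-tuples under componentwise order'' — i.e.\ checking that fixing the length collapses subword comparison to coordinatewise comparison, and noting the (easy) permanence of the wqo property under passage to sub-posets. If one instead preferred to re-prove the binary product fact from scratch, the step worth isolating is the folklore lemma that every infinite sequence in a wqo has an infinite non-decreasing subsequence — proved by observing that only finitely many indices $n$ can be ``terminal'' in the sense that no later term dominates $w_n$, for otherwise the terminal indices would form a bad sequence — after which one extracts such a subsequence in the first coordinate, then in the second, and so on through all $k$ coordinates, reaching a pair of consecutive terms that are comparable componentwise.
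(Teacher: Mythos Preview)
Your proposal is correct and follows essentially the same approach as the paper: the paper simply notes that $\mathbb{N}$ is a wqo and invokes the consequence of the Theorem that cartesian products of wqo's are wqo's. Your write-up is more detailed (you spell out the bridge between length-$k$ lists under subword embedding and $k$-tuples under componentwise order, handle $\mathbb{N}\cup\{\infty\}$ explicitly, and sketch the alternative direct argument via increasing subsequences), but the underlying route is the same.
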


Let us now turn to the terminology introduced in Section\ref{sec:termination}  and in particular to the numbers $k_1(\cM), \dots, k_N(\cM)\in \mathbb N\cup\{\infty\}$ counting the numbers of elements generating (as singletons) the cyclic substructures $\cC_1, \dots, \cC_N$, respectively (we assume the acyclicity of $\Sigma$ and consequently also of $\tilde \Sigma$).

\begin{lemma}\label{lem:localinv}
  Let $\cM, \cN$ be $\tilde \Sigma$-structures. If the inequalities
  \[
    k_1(\cM)\leq k_1(\cN), \dots, k_N(\cM)\leq k_N(\cN)
  \]
  hold, then all local formulae true in $\cM$ are also true in $\cN$.
\end{lemma}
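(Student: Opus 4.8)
The plan is to reduce truth of a local formula in a $\tilde\Sigma$-structure to a purely combinatorial count of the ``local types'' of its artifact-sort elements, and then to observe that the hypothesis $k_j(\cM)\le k_j(\cN)$ is precisely what lets one re-match witnesses from $\cM$ inside $\cN$. Here $\tilde\Sigma=\ext{\Sigma}\cup\{\ua,\ux\}$ is as in the proof of Theorem~\ref{thm:term1}, and for an artifact-sort element $m$ of a $\tilde\Sigma$-structure I write $\langle m\rangle$ for the substructure generated by $\{m\}$; since $\langle m\rangle$ is cyclic it is isomorphic to exactly one of $\cC_1,\dots,\cC_N$, and I call that index the \emph{type} of $m$. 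Recall that $k_j(\cM)$ counts the artifact-sort elements of $\cM$ of type $\cC_j$.

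The first step I would carry out is a small lemma: for a quantifier-free $\tilde\Sigma$-formula $\phi(e)$ in a single variable $e$ of artifact sort, whether $\phi(m)$ holds depends only on the type of $m$. The reason is that $\tilde\Sigma$ has only unary function symbols and constants, so every $\tilde\Sigma$-term in $e$ is either ground or of the form $f_1(\cdots f_l(e)\cdots)$; evaluated at $e:=m$ it therefore denotes an element of $\langle m\rangle$, which contains $m$, all constant interpretations, and is closed under the functions. Since $\phi$ is a Boolean combination of equalities between such terms, and substructures preserve and reflect ground atoms (as recalled in Appendix~\ref{app_sec2}), $\phi(m)$ holds in the ambient structure iff it holds in $\langle m\rangle$; and an isomorphism preserves truth, so this is equivalent to the corresponding sentence holding in $\cC_j$, where $\cC_j$ is the type of $m$. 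One uses here that $\tilde\Sigma$ has no function symbol with an artifact sort as target, so $m$ is pinned down as the unique non-constant artifact-sort element of $\langle m\rangle$, making the correspondence with the distinguished generator of $\cC_j$ unambiguous.

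With this in hand, the transfer argument is routine. A local formula is a disjunction of sentences
\[
  D\;:=\;\exists e_1\cdots\exists e_k\,\bigl(\delta(e_1,\dots,e_k)\wedge\bigwedge_{i=1}^k\phi_i(e_i)\bigr),
\]
with $\delta$ a conjunction of equalities and inequalities among the artifact-sort variables $e_i$ and the $\phi_i$ quantifier-free; it suffices to show $\cM\models D$ implies $\cN\models D$. Given a witnessing assignment $e_i\mapsto m_i$ in $\cM$, let $n_1,\dots,n_p$ be the distinct values among the $m_i$, let $\cC_{j_q}$ be the type of $n_q$, and let $r_j$ be the number of $q$ with $j_q=j$. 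Then $r_j\le k_j(\cM)\le k_j(\cN)$, so $\cN$ contains at least $r_j$ artifact-sort elements of type $\cC_j$; picking $r_j$ of them for each $j$ — these being automatically pairwise distinct across different $j$, since an element has a unique type — gives pairwise distinct $n_1',\dots,n_p'$ in $\cN$ with $n_q'$ of type $\cC_{j_q}$. Setting $m_i':=n_q'$ whenever $m_i=n_q$, the assignment $e_i\mapsto m_i'$ realizes exactly the same equality/inequality pattern as $e_i\mapsto m_i$, hence satisfies $\delta$; and each $m_i'$ has the same type as $m_i$, hence satisfies $\phi_i$ by the lemma. So $e_i\mapsto m_i'$ witnesses $\cN\models D$, and the claim follows.

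The step I expect to require the most care is the preliminary lemma: stating precisely that a one-variable quantifier-free formula is governed by the pointed isomorphism type of the one-generated substructure, and getting the bookkeeping around constants right (substructures must contain all constant interpretations, and $\langle m\rangle$ must be checked to be a genuine finite cyclic structure, which is where acyclicity of $\tilde\Sigma$ enters). Everything after that — in particular the verification that the re-matched assignment still satisfies $\delta$ — is immediate once the $n_q'$ are chosen pairwise distinct.
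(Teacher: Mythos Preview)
Your proposal is correct and follows essentially the same approach as the paper: both reduce the truth of each $\phi_i(e_i)$ to the isomorphism type of the cyclic $\tilde\Sigma$-substructure generated by the witness, and then use the inequalities $k_j(\cM)\le k_j(\cN)$ to re-instantiate witnesses in $\cN$. Your version is in fact more explicit than the paper's on two points the paper leaves implicit---the handling of the equality/inequality pattern $\delta$ via collapsing to distinct values, and the observation that the generator is pinned down inside its cyclic substructure because no function symbol targets an artifact sort---so no correction is needed.
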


\begin{proof}
Notice that local formulae (viewed in $\tilde\Sigma$) are sentences, because they do not have free variable occurrences - the $\ua, \ux$ are now constant function symbols and individual constants, respectively.  The proof of the lemma is fairly obvious: notice that, once we assigned some $\alpha(e_i)$ in $\cM$ to the variable $e_i$, the truth of a formula like $\phi(e_i, \ux, \ua)$ under such an assignment depends only on the $\tilde \Sigma$-substructure generated by $\alpha(e_i)$, because  $\phi$ is quantifier-free and $e_i$ is the only $\tilde \Sigma$-variable occurring in it. 
In fact, if a local state formula $ \exists e_1\cdots \exists e_k \left( \delta(e_1,\dots, e_k)  \land
  \bigwedge_{i=1}^k \phi_i(e_i,\ux,\ua)\right)$  is true in $\cM$, then there
exist elements $\bar{e}_1,\cdots, \bar{e}_k$
(in the interpretation of some artifact sorts),  each of which makes $\phi_i$ true. Hence, $\phi_i$ is also true  in the corresponding cyclic structure generated by $\bar{e}_i$. Since $k_1(\cM)\leq k_1(\cN), \dots, k_N(\cM)\leq k_N(\cN)$ hold, then also in $\cN$ there are at least as many elements in the interpretation of artifact sorts as there are in $\cM$ that validate all the $\phi_i$ . Thus, we get that $\exists e_1\cdots \exists e_k \left( \delta(e_1,\dots, e_k)  \land
    \bigwedge_{i=1}^k \phi_i(e_i,\ux,\ua)\right)$ is true also in $\cN$, as wanted.
\end{proof}

\vskip 2mm\noindent
\textbf{Theorem~\ref{thm:term1}}~\emph{If $\Sigma$ is acyclic,
the backward search algorithm (cf. Algorithm~\ref{alg1}) terminates when applied to a local safety formula in a \ras, whose transition formula is a disjunction of local transition formulae. }
\vskip 2mm
\begin{proof}
 Suppose the algorithm does not terminate. Then the fixpoint test of Line 2 fails infinitely often. Recalling that the $T$-equivalence of
 $B_n$  and of $\bigvee_{0\leq j<n} \phi_j$ is an invariant of the algorithm (here $\phi_n, B_n$ are the status of the variables 
  $\phi, B$ after $n$ execution of the main loop), this means that there are models
 \[
 \cM_0, \cM_1, \dots, \cM_i, \dots
 \]
 such that for all $i$, we have that  $\cM_i\models \phi_i$ and $\cM_i \not \models \phi_j$ (all $j<i$). But the $\phi_i$ are all local formulae, so considering the
 tuple of cardinals $k_1(\cM_i), \dots, k_N(\cM_i)$ and Lemma~\ref{lem:localinv}, we get a contradiction, in view of Dikson Lemma. This is because, by Dikson Lemma, $(\mathbb N \cup \{\infty\})^N$ is a wqo, so there exist $i$, $j$ such that $j<i$ and $k_1(\cM_j)\leq k_1(\cM_i), \dots, k_N(\cM_j)\leq k_N(\cM_i)$. Using Lemma~\ref{lem:localinv}, we get that $\phi_j$, which is local and true in $\cM_j$, is also true in $\cM_i$, which is a contradiction.
\end{proof}
\vskip 2mm

Proving termination for \ras with a tree-like artifact setting
 is more complex, but follows a similar schema as in the case of local transition formulae.

	If $(W,\leq)$ is a partial order, we consider the set $M(W)$ of finite multisets of $W$  as a partial order in the following way:\footnote{This is not the canonical ordering used for multisets, as introduced  eg in~\cite{BaNi98}.} say that $M\leq N$ holds iff there is an injection $p:M\longrightarrow N$ such that $m\leq p(m)$ holds for all $m\in M$ 
	(in other words, $p$ associates with
	every occurrence of $m$ an occurrence $p(m)$ of an element  of $N$ so that different occurrences are associated to different occurrences). 
	
	\begin{corollary}\label{coro:multiset}
		If $(W,\leq)$ is a wqo, then so is $(M(W), \leq)$ as defined above.
	\end{corollary}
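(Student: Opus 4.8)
The plan is to reduce to the result already stated above (the consequence of Kruskal's Tree Theorem) that the set of finite lists over a wqo, ordered by componentwise subword comparison, is again a wqo. First I would fix, for each finite multiset $M\in M(W)$, an arbitrary enumeration of its elements (with repetitions), thereby obtaining a finite list $\ell(M)$ over $W$; the particular enumeration chosen will play no role.

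The key observation is that the subword ordering on finite lists refines the multiset ordering $\leq$ on $M(W)$: if $\ell(M)$ is below $\ell(N)$ in the subword order, then by definition there is a subword of $\ell(N)$ of the same length as $\ell(M)$ whose $i$-th entry is $\geq$ (in $W$) the $i$-th entry of $\ell(M)$ for every $i$. Reading off the positions occupied by this subword yields an injection $p\colon M\longrightarrow N$ which sends each occurrence $m$ to an occurrence $p(m)$ with $m\leq p(m)$, distinct occurrences going to distinct occurrences; this is exactly the condition defining $M\leq N$ in $M(W)$.

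With this in hand the corollary is immediate: given any infinite sequence $M_0,M_1,\dots$ in $M(W)$, the associated sequence of lists $\ell(M_0),\ell(M_1),\dots$ has, by the list-wqo theorem, indices $i<j$ with $\ell(M_i)$ below $\ell(M_j)$ in the subword order, and hence $M_i\leq M_j$ by the observation above. Thus $(M(W),\leq)$ is a wqo. I do not expect a genuine obstacle here; the only point requiring (minimal) care is verifying that a subword witness really induces an order-respecting injection on occurrences, which is precisely the content of the definition of subword comparison. One could alternatively invoke Higman's lemma directly, but routing through the finite-list result already available in the paper is the shortest path.
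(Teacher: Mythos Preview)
Your proposal is correct and follows essentially the same approach as the paper: convert each multiset to a list via an arbitrary enumeration, observe that the subword order on lists refines the multiset order, and invoke the list-wqo result. The paper's proof is a one-line sketch of exactly this argument; you have simply spelled out the details more explicitly.
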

	
	\begin{proof}
		This is due to the fact that one can convert a multiset $M$ to a list $L(M)$ so that if $L(M)\leq L(N)$ holds, then also $M\leq N$ holds (such a conversion $L$ can be obtained by ordering the occurrences of elements in $M$ in any arbitrarily chosen way).
	\end{proof}
	
	\vskip 2mm
	
	We assume that the graph $G(\tilde \Sigma)$ associated to $\tilde \Sigma$ is a tree (the generalization to the case where such a graph is a forest is trivial). 
	This means in particular that  each sort is the domain of at most one function symbol and that there just one sort which is not the domain of any function symbol
	(let us call it the \emph{root sort} of $\tilde \Sigma$ and let us denote it with $S_r$).
	
	By induction on the height\footnote{This is defined as the length of the longest path from $S$ to a leaf.} of a sort $S$ in the above graph, we  define a wqo $w(S)$ (in the definition we use the fact the cartesian product of wqo's is a wqo and Corollary~\ref{coro:multiset}).
	Let $S_1, \dots, S_n$ be the sons of $S$ in the tree; put
	\begin{equation}\label{eq:sortwqo}
	w(S)~:=~M(w(S_1))\times \cdots \times  M(w(S_n))
	\end{equation}
	(thus, if $S$ is a leaf, $w(S)$ is the trivial one-element wqo - its only element is the empty tuple).
	
	Let now $\cM$ be a finite $\tilde\Sigma$-structure; we indicate with $S^\cM$ the interpretation in $\cM$ of the sort $S$ (it is a finite set). For $a\in S^\cM$, we define the multiset $M_{\cM}(a)\in w(S)$, again by induction on the height of $S$. Suppose that
	$S_1, \dots, S_n$ are the sons of $S$ and that the arc from $S_i$ to $S$ is labeled by the function symbol $f_i$; then we put
	\begin{align*}
	M_{\cM}(a)~:=~ & \langle\{ M_{\cM}(b_1) \mid b_1\in S_1^\cM ~{\rm and}~f^\cM_1(b_1)=a\}, \dots
	\\ & \dots,
	\{ M_{\cM}(b_n) \mid b_n\in S_n^\cM ~{\rm and}~f^\cM_n(b_n)=a\}\rangle
	\end{align*}
	where $f_i^\cM$ ($i=1, \dots, n$) is the interpretation of the symbol $f_i$ in $\cM$.
	
	Moreover, for every sort $S$, we let
	\begin{equation}\label{eq:setwqo}
	M_{\cM}(S)~:=~\{ M_{\cM}(a) \mid a\in S^\cM\}~~~~.
	\end{equation}
	Finally, we define
	\begin{equation}\label{eq:modelwqo}
	M(\cM)~:=~M_{\cM}(S_r)~~~.
	\end{equation}
	For termination, the relevant lemma is the following:
	\begin{lemma} Suppose that $\tilde \Sigma$ is tree-like and does not contain constant symbols;
		given two finite $\tilde \Sigma$-structures $\cM$ and $\cN$, we have that if $M(\cM)\leq M(\cN)$, then $\cM$ embeds into $\cN$.
	\end{lemma}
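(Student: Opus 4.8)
The plan is to exploit the fact that in a tree-like $\tilde\Sigma$ with only unary function symbols (and, by hypothesis, no constant or relation symbols), every element of a finite $\tilde\Sigma$-structure sits on a unique chain of function applications leading to the root sort $S_r$, so that the structure decomposes as a disjoint union of the ``subtrees hanging below'' the elements of $S_r^\cM$. Concretely, for a sort $S$ and $a\in S^\cM$ let $D_\cM(a)$ be the set of all $d$ whose sort lies in the subtree of $G(\tilde\Sigma)$ rooted at $S$ and such that iterating the (unique) function applications from $d$ yields $a$ once sort $S$ is reached. First I would record the elementary facts: (i) $D_\cM(a)=\{a\}\;\sqcup\;\bigsqcup_i\bigsqcup_{f_i^\cM(c)=a}D_\cM(c)$, where $f_i:S_i\to S$ ranges over the functions into $S$ (i.e.\ over the sons $S_i$ of $S$); (ii) $D_\cM(a)\cap D_\cM(a')=\emptyset$ for distinct $a,a'$ of the same sort, since chains are unique; (iii) $|\cM|=\bigsqcup_{a\in S_r^\cM}D_\cM(a)$; and (iv) for every function symbol $f:S'\to S''$ and every $c\in S'^\cM$, the elements $c$ and $f^\cM(c)$ lie in the same $D_\cM(a)$. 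Since $\tilde\Sigma$ has no relation and no constant symbols, an embedding is exactly an injective homomorphism, so it suffices to produce an injective, sort-preserving $\mu:|\cM|\to|\cN|$ commuting with all function symbols.

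The core is an induction on the height of $S$ proving: if $a\in S^\cM$, $b\in S^\cN$ and $M_\cM(a)\le M_\cN(b)$ in $w(S)$, then there is an injective, sort-preserving map $\mu_{a,b}:D_\cM(a)\to D_\cN(b)$ with $\mu_{a,b}(a)=b$ that commutes with every function symbol whose domain and codomain both lie in the subtree rooted at $S$. The base case ($S$ a leaf) is immediate: $D_\cM(a)=\{a\}$, $w(S)$ is trivial, and there is no internal function. For the inductive step, write $M_\cM(a)=\langle\mathcal A_1,\dots,\mathcal A_n\rangle$ and $M_\cN(b)=\langle\mathcal B_1,\dots,\mathcal B_n\rangle$ with $\mathcal A_i=\{M_\cM(c)\mid f_i^\cM(c)=a\}$ and similarly for $\mathcal B_i$; componentwise comparison of the product order on $w(S)$ gives, for each son $S_i$, a multiset injection witnessing $\mathcal A_i\le\mathcal B_i$, which I would read off as an injective map $q_i$ from $\{c\in S_i^\cM\mid f_i^\cM(c)=a\}$ to $\{c'\in S_i^\cN\mid f_i^\cN(c')=b\}$ with $M_\cM(c)\le M_\cN(q_i(c))$. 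Applying the induction hypothesis to each pair $c,\,q_i(c)$ and then setting $\mu_{a,b}(a):=b$ and $\mu_{a,b}|_{D_\cM(c)}:=\mu_{c,q_i(c)}$ yields the desired map; well-definedness follows from the disjoint decomposition (i), injectivity from the injectivity of each $q_i$, of each inner map, fact (ii), and the disjointness of the subtrees of distinct sons, and the commutation property from the induction hypothesis together with the fact that $q_i$ lands inside $\{c'\mid f_i^\cN(c')=b\}$. Finally, applied at $S=S_r$ (where every function symbol is ``internal'', since $S_r$ has no outgoing function), combined with the injection $S_r^\cM\to S_r^\cN$ extracted from $M(\cM)=M_\cM(S_r)\le M_\cN(S_r)=M(\cN)$ and with facts (iii)--(iv), the pieces glue to the required embedding $\cM\hookrightarrow\cN$.

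I expect the main obstacle to be purely bookkeeping: keeping straight the correspondence between occurrences in the multisets $M_\cM(S)=\{M_\cM(a)\mid a\in S^\cM\}$ and the actual elements of the structure (two distinct elements may carry the same $w(S)$-value, producing repeated occurrences of one value), and checking that the piecewise-defined gluing is globally injective and globally homomorphic. Both hinge on the disjointness statements (i)--(iii), which in turn rely crucially on tree-likeness and on all function symbols being unary. No deep idea is needed beyond the observation that tree-likeness turns each finite structure into a disjoint union of rooted subtrees indexed by $S_r^\cM$, after which the definition of $\le$ on $w(S)$ and the induction do the rest.
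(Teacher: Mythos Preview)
Your proposal is correct and follows essentially the same strategy as the paper's proof: both argue by induction on the height of a sort $S$ in the tree, decompose the structure below each element of $S$ into the disjoint fibers over its sons, read off from the multiset comparison an injection of elements that respects the $w(S_i)$-order, apply the induction hypothesis to each matched pair, and glue. The only cosmetic difference is that your inductive statement is phrased at the level of single elements ($M_\cM(a)\le M_\cN(b)$ implies $D_\cM(a)\hookrightarrow D_\cN(b)$), whereas the paper states it at the level of full $S$-subsignature models ($M(\cM)\le M(\cN)$ implies $\cM\hookrightarrow\cN$); these unwind to the same computation.
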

	\begin{proof}
		Again, we make an induction on the height of $S$, proving the claim for the subsignature of $\tilde \Sigma$ having $S$ as a root
		(let us call this the $S$-subsignature).
		
		Let $\cM$ be a model over the $S$-subsignature. For every $a\in S^{\cM}$, and for every
		$f_i:S_i\longrightarrow S$, if we restrict
		$\cM$ to the elements in the $f_i$-fibers of $a$, we get a model $\cM_{f_i,a}$ for the $S_i$-subsignature (an element
		$c\in \tilde S^\cM$ is in the $f_i$-fiber of $a$ if, taking the term $t$ corresponding to the composition of the functions
		symbols going from
		$\tilde S$ to $S_i$, we have that $f_i^\cM(t^\cM(c))=a$). In addition, if $M_{\cM}(a)=(M_1, \dots, M_n)$, then
		$M_i=M(\cM_{f_i,a})$ by definition. Finally, observe that the restriction of $\cM$ to the $S_i$-subsignature is the disjoint union of the $f_i$-fibers models $\cM_{f_i,a}$, varying $a\in S^\cM$.

		Suppose now that $\cM,\cN$ are models over the $S$-subsignature such that $M(\cM)\leq M(\cN)$; this means that we can find an injective map $\mu$ mapping $S^{\cM}$ into
		$S^{\cN}$ so that $M_\cM(a)\leq M_\cN(\mu(a))$. If $M_{\cM}(a)=(M_1, \dots, M_n)$ and $M_{\cN}(\mu(a))=(N_1, \dots, N_n)$,
		we then have that $M_i\leq N_i$ for every $i=1, \dots, n$. Considering that, as noticed above, $M_i=\cM_{f_i,a}$ and
		$N_i=\cN_{f_i,\mu(a)}$, by induction hypothesis, we have  embeddings $\nu_{i,a}$ for  the $f_i$-fibers models of
		$a$ and $\mu(a)$ (for every $a\in S^\cM$ and $i=1, \dots,n$). Glueing these embeddings to the disjoint union (varying $i,a$)
		and adding them $\mu$ as $S$-component, we get the desired embedding of $\cM$ into $\cN$.
	\end{proof}
	
	\begin{proposition}\label{prop:wqotreelike}
		If $\tilde \Sigma$ is tree-like and does not contain constant symbols, then the finite $\tilde\Sigma$-structures are a wqo with respect to the embeddability quasi-order.
	\end{proposition}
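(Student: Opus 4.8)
The plan is to reduce the statement to the lemma established just above — which says that $M(\cM)\le M(\cN)$ implies that $\cM$ embeds into $\cN$ — together with the observation that the codomain of the map $\cM\mapsto M(\cM)$, namely $M(w(S_r))$, is itself a well-quasi-order. Once this codomain is known to be a wqo, the proposition is immediate: an infinite sequence of finite $\tilde\Sigma$-structures is sent by $M(-)$ to an infinite sequence in a wqo, which therefore contains a pair of indices $i<j$ with $M(\cM_i)\le M(\cM_j)$, and the lemma turns this into an embedding $\cM_i\hookrightarrow\cM_j$. So the only thing that genuinely needs an argument is that $M(w(S_r))$ is a wqo.

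First I would check, by induction on the height of a sort $S$ in the tree $G(\tilde\Sigma)$, that the order $w(S)$ defined in~\eqref{eq:sortwqo} is a wqo. The base case is a leaf $S$, for which $w(S)$ is the one-element order and is trivially a wqo. For the inductive step, if $S_1,\dots,S_n$ are the sons of $S$, then by the induction hypothesis each $w(S_i)$ is a wqo; hence each $M(w(S_i))$ is a wqo by Corollary~\ref{coro:multiset}, and therefore the finite cartesian product $w(S)=M(w(S_1))\times\cdots\times M(w(S_n))$ is a wqo as well, using the fact (recalled above as a consequence of Kruskal's Tree Theorem~\cite{kruskal}) that a finite cartesian product of wqo's is a wqo. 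Applying this to the root sort $S_r$ yields that $w(S_r)$ is a wqo, and one further application of Corollary~\ref{coro:multiset} shows that the poset $M(w(S_r))$ of finite multisets over $w(S_r)$ — ordered by the multiset order of that corollary — is a wqo.

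Then I would conclude as follows. Embeddability is reflexive (the identity map is an embedding) and transitive (embeddings compose), so it is a quasi-order on the set of isomorphism classes of finite $\tilde\Sigma$-structures. Given any infinite sequence $\cM_0,\cM_1,\dots$ of finite $\tilde\Sigma$-structures, I pass to the sequence $M(\cM_0),M(\cM_1),\dots$ in $M(w(S_r))$, where $M(\cM)$ is defined in~\eqref{eq:modelwqo}. Since $M(w(S_r))$ is a wqo, there are $i<j$ with $M(\cM_i)\le M(\cM_j)$, and then the lemma above gives that $\cM_i$ embeds into $\cM_j$. Hence the finite $\tilde\Sigma$-structures form a wqo with respect to embeddability.

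I expect no real obstacle in this step: the substantive content has already been absorbed into the lemma above, whose proof decomposes a structure over the $S$-subsignature into the disjoint union of the $f_i$-fibers over the elements of $S$ and glues the embeddings furnished by the induction hypothesis — and this is exactly where the hypothesis that $\tilde\Sigma$ has no constant symbols is used, since a constant would be a distinguished element lying outside every fiber. The only points requiring care here are purely bookkeeping: the inductive construction of the orders $w(S)$ along the tree, and the fact that the relevant multiset order is the embedding-style order of Corollary~\ref{coro:multiset} rather than the canonical one — both already pinned down by~\eqref{eq:sortwqo} and by that corollary.
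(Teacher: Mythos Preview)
Your proposal is correct and follows exactly the paper's approach: the paper's proof is the single line ``An immediate consequence of the previous lemma,'' and what you have written is precisely the unpacking of that line, namely the inductive verification that $w(S)$ (and hence $M(w(S_r))$) is a wqo together with the application of the lemma to pull back the wqo property along $M(-)$.
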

	
	\begin{proof}
		An immediate consequence of the previous lemma. 
	\end{proof}

\vskip 2mm\noindent
\textbf{Theorem~\ref{thm:term2}}~\emph{ Backward search (cf.\ Algorithm~\ref{alg1}) terminates when applied to a
  safety problem in a \ras with a tree-like artifact setting.}
  \vskip 2mm
	\begin{proof} For simplicity, we give the argument for the case where we do not have constants and artifact variables (but see the footnote below for the general case).   Similarly to the proof of Theorem~\ref{thm:term1}, suppose the algorithm does not terminate. Then the fixpoint test of Line 2 fails infinitely often. Recalling that the $T$-equivalence of
		$B_n$  and of $\bigvee_{0\leq j<n} \phi_j$ is an invariant of the algorithm (here $\phi_n, B_n$ are the status of the variables 
		$\phi, B$ after $n$ execution of the main loop), this means that there are models
		\[
		\cM_0, \cM_1, \dots, \cM_i, \dots
		\]
		such that for all $i$, we have that  $\cM_i\models \phi_i$ and $\cM_i \not \models \phi_j$ (all $j<i$).
		The models can be taken to be all finite, by Lemma~\ref{lem:sat}.
		But the $\phi_i$ are all existential
		sentences in $\tilde \Sigma$, so this is
		incompatible to the fact that, by Proposition~\ref{prop:wqotreelike}, there are $j<i$ with $\cM_j$ embeddable into $\cM_i$.\footnote{
			The following observation shows how to extend the proof to the case where we have constants and artifact variables. Recall that in $\tilde \Sigma$ the artifact variables are seen as constants, so we need to consider only the case of constants. Let $\tilde \Sigma^+$ be $\tilde \Sigma$ where each constant symbol $c$ of sort $S$ is replaced by a new sort $S_c$ and a new function symbol $f_c: S_c \longrightarrow S$. Now every model $\cM$  of $\tilde \Sigma$ can be transformed into a model $\cM^+$ of $\tilde \Sigma^+$ by interpreting $S_c$
			as a singleton set $\{*\}$ and $f_c$  as the map sending $*$ to $c^\cM$. This transformation has the following property: $\tilde \Sigma$-embeddings of $\cM$ into $\cN$ are in bijective correspondence with $\tilde \Sigma^+$-embeddings of $\cM^+$ into $\cN^+$.
			Since $\tilde \Sigma^+$ is still tree-like and does not have constant symbols, this shows that Theorem~\ref{thm:term2} holds for $\tilde \Sigma$ too.
		}
	\end{proof}

\section{Complements for Section~5}
\label{app_sec4}

 Fix an acyclic signature $\Sigma$ and an artifact setting $(\ux, \ua)$ over it.
 In this section we analyze in our setting the transition formulae studied in~\cite{verifas}\footnote{For simplicity, since we are not considering hierarchical aspects, we assume that there is no input variable in the sense of \cite{verifas}} 
 (deletion, insertion and propagation updates). In addition, we discuss some modifications of the previous transitions and introduce new kinds of updates (like bulk updates). We prove that all these transitions are strongly local transitions.

\subsection{Deletion Updates}
\label{sec:deletion-updates}
We want to remove a tuple $\underline{t}:=(t_{1},...,t_{m})$ from an $m$-ary artifact relation $R$ and assign the values $t_{1},...,t_{m}$ to some of the artifact variables (let $\ux:=\ux_1,\ux_2$, where  $\ux_{1}:=(x_{i_{1}},...,x_{i_{m}})$ are the variables where we want to transfer the tuple $\underline{t}$). This operation has to be applied only if the current artifact variables $\ux$ satisfy the pre-condition $\pi(\ux_1, \ux_2)$ and the updated artifact variables $\ux^{\prime}:=\ux_1^{\prime}, \ux_2^{\prime}$ satisfy the post-condition $\psi(\ux_1^{\prime}, \ux_2^{\prime})$ ($\pi$ and $\psi$ are quantifier-free formulae). The variables $\ux_2$ are not propagated, i.e. they are non deterministically reassigned.
Let $\underline{r}:=r_1,...,r_m$ be the artifact components of $R$.
 Such an update can be formalized in a symbolic way as follows:
\begin{equation}\label{eq:del}
\exists \underline{d}\,\exists e \begin{pmatrix}
\pi(\ux_1, \ux_2)\;\wedge\; \psi(\ux_1^{\prime}, \ux_2^{\prime})
\; \wedge r_1[e]\neq \nullv\wedge...\\ \wedge\; r_n[e]\neq \nullv
\wedge (\ux_1^{\prime}:=\underline{r}[e]\;\wedge \;\ux_2^{\prime}:=\underline{d}\wedge \underline{s}^{\prime}:=\underline{s}\;\wedge \\
\;\wedge\; \underline{r}^{\prime}:=\lambda j.(\mathtt{ if }\  j=e \mathtt{~ then ~\nullv ~else ~} \underline{r}[j]))\end{pmatrix}
\end{equation}
where $\underline{s}$ are the artifact components of the artifact relations different from $R$. Notice that the $\underline{d}$ are non deterministically produced values for the updated $\ux^{\prime}_2$. In the terminology of \cite{verifas}, notice that no artifact variable is propagated in a deletion update.

Notice that in place of the condition $r_1[e]\neq \nullv\wedge... \wedge\; r_n[e]\neq \nullv$ one can consider the modified deletion update that is fired only if \textit{some} (and not all) artifact components are not $\nullv$, or even the case when the transition is fired if \textit{at least one} artifact component is not $\nullv$: the latter case can be expressed using a disjunction of transitions $\tau_i$ that, instead of $r_1[e]\neq \nullv\wedge... \wedge\; r_n[e]\neq \nullv$, involve only the literal $r_i[e]\neq \nullv$ (for $i=1,...,n$). These modified deletion updates can be proved to be strongly local transitions by using trivial adaptations of the arguments shown below.

The formula \eqref{eq:del} is not in the format \eqref{eq:trans1} but can be easily converted into it as follows:

\begin{equation}\label{eq:del1}
\exists \underline{d}\,\exists e \begin{pmatrix}\pi(\ux_1, \ux_2)\; \wedge \; \psi(\underline{r}[e], \underline{d}) \; \wedge \; r_1[e]\neq \nullv\; \wedge...\\ \wedge \; r_n[e]\neq \nullv
\; \wedge \; (\ux_1^{\prime}:=\underline{r}[e]\; \wedge\;  \ux_2^{\prime}:=\underline{d}\; \wedge\; \underline{s}^{\prime}:=\underline{s}\; \wedge \\ \wedge\;  \underline{r}^{\prime}:=\lambda j.(\mathtt{ if }\  j=e \mathtt{~ then ~ \nullv ~ else ~} \underline{r}[j]))\end{pmatrix}
\end{equation}

 We prove that the preimage along \eqref{eq:del1} of a strongly local formula is strongly local.
 Consider a strongly local formula
 \[
   K:=\psi^{\prime}(\ux)\wedge\exists \underline{e} \left(
     \text{Diff}(\underline{e})  \wedge \bigwedge_{e_r \in \underline{e}}
     \phi_{e_r}( \underline{r}[e_r])\wedge\Theta\right)
   \]
 where  $\Theta$ is a formula involving the artifact components $\underline{s}$ (which are not updated) such that no $e_r$ occurs in it.

 \begin{remark}{\rm
 Notice that equality is the only predicate, so a quantifier-free formula $\phi(e,\ua)$ involving a single variable $e$ must be obtained from atoms of the kind $b[e]=b'[e]$ (for $b, b'\in \ua$) by applying the Boolean connectives only: this is why we usually display such a formula as $\phi(\ua[e])$. In addition,  since the source sorts of the different artifact relations are different, we cannot employ the same variable as argument of artifact components of different artifact relations: in other words, we cannot employ the same variable $e$ in terms like $r_i[e]$ and $s_j[e]$,  in case  $r_i$ and $s_j$ are components of two different artifact relation  $R$ and $S$ (because $e$ must have either type $R$ or type $S$). Thus, the quantifier-free subformula $\phi_i(\ua[e_i])$ in a local formula involving only the variable $e_i$ must be of the kind $\phi_i(\underline{r}[e_i])$, for some artifact relation $R$  (here $\underline{r}$ are the artifact components of $R$). These observations will be often used in the sequel.}
 \end{remark}

 We compute the preimage $Pre(\ref{eq:del1},K)$

 \[
   \exists \underline{d} \,\exists e, \underline{e}\,\exists \ux_1^{\prime}, \ux_2^{\prime}\,\exists \underline{r}^{\prime}\begin{pmatrix}
        \pi(\ux_1,\ux_2)\; \wedge \; \psi(\underline{r}[e], \underline{d}) \; \wedge\; \psi^{\prime}(\ux_1^{\prime}, \ux_2^{\prime}) \; \wedge \\ \wedge\; \ux_1^{\prime}:=\underline{r}[e]\; \wedge \; \ux_2^{\prime}:=\underline{d}\; \wedge\;  \text{Diff}(\underline{e}) \; \wedge \; \bigwedge_{e_r \in \underline{e}} \phi_{e_r}( \underline{r}^{\prime}[e_r])\; \wedge \\
 \wedge \; \underline{r}^{\prime}:=\lambda j.(\mathtt{ if ~} j=e  \mathtt{ ~then ~ \nullv ~else ~} \underline{r}[j])\wedge\Theta\end{pmatrix}
 \]
 which can be rewritten as a disjunction of the following  formulae:

\begin{itemize}
\item
$\exists \underline{d} \,\exists e, \underline{e} \left(\begin{array}{@{}l@{}}
\text{Diff}(\underline{e},e) \; \wedge\; \pi(\ux_1,\ux_2)\; \wedge\;  \psi(\underline{r}[e], \underline{d}) \; \wedge\\ \wedge\; \psi^{\prime}(\underline{r}[e], \underline{d}) \; \wedge \; \bigwedge_{e_r \in \underline{e}} \phi_{e_r}(\underline{r}[e_r])\; \wedge \;  \Theta\end{array}\right)
$ \\
covering the case where $e$  is different from all $e_j\in\underline{e}$

 \item $
\exists \underline{d} \,\exists\underline{e} \left(\begin{array}{@{}l@{}}
\text{Diff}(\underline{e})\;  \wedge\pi(\ux_1,\ux_2)\; \wedge \; \psi(\underline{r}[e_j], \underline{d})\;  \wedge\; \psi^{\prime}(\underline{r}[e_j], \underline{d}) \; \wedge\\ \wedge\;  \bigwedge_{ e_r \in \underline{e}, e_r\neq e_j} \phi_{e_r}(\underline{r}[e_r])\wedge \phi_{e_j}(\nullv) \wedge\Theta\end{array}\right)
$ \\
covering the case where $e=e_j$, for some $e_j\in\underline{e}$
 \end{itemize}

  We can now move the existential quantifier $\exists \underline{d}$ in front of $\psi\wedge\psi^{\prime}$. We eliminate the quantifiers (applying the quantifier elimination procedure for $T^{\star}$) from the subformula $\exists \underline{d}\left(\psi(\underline{r}[e], \underline{d}) \wedge\psi^{\prime}(\underline{r}[e], \underline{d})\right)$ (or $\exists \underline{d}\left(\psi(\underline{r}[e], \underline{d}) \wedge\psi^{\prime}(\underline{r}[e], \underline{d})\right)$, resp.) obtaining a formula of the kind $\theta(\underline{r}[e])$ (or $\theta(\underline{r}[e_j]$).

 The final result is the disjunction of the formulae

 \begin{itemize}
 \item $\exists e, \underline{e} \left(\begin{array}{@{}l@{}}
\text{Diff}(\underline{e},e)\;  \wedge\; \pi(\ux_1,\ux_2)\; \wedge \; \theta(\underline{r}[e])\; \wedge \; \bigwedge_{e_r \in \underline{e}} \phi_{e_r}(\underline{r}[e_r])\; \wedge\;   \Theta
\end{array}\right)
$
\item $
\exists \underline{d}\, \exists\underline{e} \left(\begin{array}{@{}l@{}}
\text{Diff}(\underline{e})\;  \wedge\; \pi(\ux_1,\ux_2)\; \wedge  \; \theta(\underline{r}[e_j]) \; \wedge\\ \wedge  \bigwedge_{ e_r \in \underline{e}, e_r\neq e_j} \phi_{e_r}(\underline{r}[e_r])\; \wedge \; \phi_{e_j}(\nullv)\;  \wedge\; \Theta
\end{array}\right)$

\end{itemize}
 which is a strongly local formula.

 Analogous arguments show that:
 \begin{description}
 \item[(i)] transitions like Formula~\eqref{eq:del}, where the literals $ r_1[e]\neq \nullv\; \wedge... \wedge \; r_n[e]\neq \nullv$ are replaced with a generic constraint $\chi(\underline{r}[e])$;
 \item[(ii)] transitions that remove a tuple from an artifact relation (without transferring its values to the corresponding artifact variables);
 \item[(iii)] transitions that copy the the content of a tuple contained in an artifact relation to some artifact variables, non-deterministically reassigning the values of the other artifact variables;
 \item[(iv)] transitions that combine \textbf{(i)} and \textbf{(iii)}
\end{description}
  are also strongly local.

   \begin{remark}{\rm
 Notice that deletion updates with the propagation of some artifact variables $\ux_1$ (which are not allowed in~\cite{verifas} and in~\cite{DeLV16}) are \emph{not} strongly local, since the preimage of a strongly local formula can produce formulae of the form $\psi(\underline{r}[e], \ux_1)$. This preimage is \emph{still} local: however, the preimage of a local state formula through a deletion update can generate formulae of the form  $\psi(\underline{r}[e], \underline{r}[e'])$, with $e\neq e'$, destroying locality. Hence, the safety problem for a \ras equipped containing deletion updates with propagation in its transitions, is not guaranteed to terminate.
 }
 \end{remark}

\subsection{Insertion Updates}
\label{sec:insertion-updates}
We want to insert a tuple of values $\underline{t}:=(t_{1},...,t_{m})$ from the artifact variables $\ux_{1}:=(x_{i_{1}},...,x_{i_{m}})$ (let $\ux:=\ux_1,\ux_2$ as above) into an $m$-ary artifact relation $R$. This operation has to be applied only if the current artifact variables $\ux$ satisfy the pre-condition $\pi(\ux_1, \ux_2)$ and the updated artifact variables $\ux^{\prime}:=\ux_1^{\prime}, \ux_2^{\prime}$ satisfy the post-condition $\psi(\ux_1^{\prime}, \ux_2^{\prime})$. The variables $\ux$ are all not propagated, i.e. they are non deterministically reassigned. Let $\underline{r}:=r_1,...,r_m$ be the artifact components of $R$.
 Such an update can be formalized in a symbolic way as follows:
\begin{equation}\label{eq:ins}
\exists \underline{d}_1,\underline{d}_2\,\exists e \begin{pmatrix}
\pi(\ux_1, \ux_2)\; \wedge \; \psi(\ux_1^{\prime}, \ux_2^{\prime})
\; \wedge\: \underline{r}[e]=\nullv\\
\wedge \; (\ux_1^{\prime}:=\underline{d}_1\; \wedge\;  \ux_2^{\prime}:=\underline{d}_2\; \wedge\;  \underline{s}^{\prime}:=\underline{s}\; \wedge \\
\wedge\; \underline{r}^{\prime}:=\lambda j.(\mathtt{ if }\  j=e \mathtt{~ then ~} \ux_1\mathtt{~else ~} \underline{r}[j]))\end{pmatrix}
\end{equation}
where $\underline{s}$ are the artifact components of the artifact relations different from $R$. Notice that $\underline{d}_1, \underline{d}_2$ are non deterministically produced values for the updated $\ux^{\prime}_1,\ux^{\prime}_2$. In the terminology of \cite{verifas}, notice that no artifact variable is propagated in a insertion update. Notice that the following arguments remain the same even if $\underline{r}[e]=\nullv$ is replaced with a conjunction of \textit{some} literals of the form $r_j[e]=\nullv$, for some $j=1,...,m$, or even if $\underline{r}[e]=\nullv$ is replaced with a generic constraint $\chi(\underline{r}[e])$.

In this transition, the insertion of the same content in correspondence to different entries is allowed. If we want to avoid this kind of multiple insertions, the update $r^{\prime}$ must be modified as follows:

\[
  \underline{r}^{\prime}:=\lambda j.\left(\begin{array}{@{}l@{}}\mathtt{ if }\
      j=e \mathtt{~ then ~} \ux_1\mathtt{~else ~}\\ \mathtt{(if~}
      \underline{r}[j]=\ux_1 \mathtt{~ then ~ \nullv ~ else~ } \underline{r}[j]
      )\end{array}\right)
\]

The formula \eqref{eq:ins} is not in the format \eqref{eq:trans1} but can be easily converted into it as follows:

\begin{equation}\label{eq:ins1}
\exists \underline{d}_1,\underline{d}_2 \,\exists e \begin{pmatrix}
\pi(\ux_1, \ux_2)\; \wedge \; \psi(\underline{d}_1, \underline{d}_2)
\; \wedge \;\underline{r}[e]=\nullv\\
\wedge \; (\ux_1^{\prime}:=\underline{d}_1\; \wedge \; \ux_2^{\prime}:=\underline{d}_2\; \wedge \; \underline{s}^{\prime}:=\underline{s}\; \wedge \\
\wedge\; \underline{r}^{\prime}:=\lambda j.(\mathtt{ if }\  j=e \mathtt{~ then ~} \ux_1\mathtt{~else ~} \underline{r}[j]))\end{pmatrix}
\end{equation}

 We prove that the preimage along \eqref{eq:ins1} of a strongly local formula is strongly local.
 Consider a strongly local formula
 \[
   K:=\psi^{\prime}(\ux)\wedge\exists \underline{e} \left(
     \text{Diff}(\underline{e})  \wedge \bigwedge_{e_r \in \underline{e}}
     \phi_{e_r}( \underline{r}[e_r])\wedge\Theta\right)
 \]
 where  $\Theta$ is a formula involving the artifact relations $\underline{s}$ (which are not updated) such that no $e_r$ occurs in it.

 We compute the preimage $Pre(\ref{eq:ins1},K)$

\[
        \exists \underline{d}_1,\underline{d}_2 \,\exists e, \underline{e}\,\exists \ux_1^{\prime}, \ux_2^{\prime}\,\exists \underline{r}^{\prime} \begin{pmatrix}
        \pi(\ux_1,\ux_2)\; \wedge\;  \psi(\underline{d}_1, \underline{d}_2) \; \wedge\; \psi^{\prime}(\ux_1^{\prime}, \ux_2^{\prime}) \; \wedge \; \underline{r}[e]=\nullv \\  \wedge\; \ux_1^{\prime}:=\underline{d}_1\; \wedge \; \ux_2^{\prime}:=\underline{d}_2\; \wedge\;  \text{Diff}(\underline{e}) \; \wedge\;  \bigwedge_{e_r \in \underline{e}} \phi_{e_r}( \underline{r}^{\prime}[e_r])\; \wedge \\
 \wedge \; \underline{r}^{\prime}:=\lambda j.(\mathtt{ if ~} j=e_1  \mathtt{ ~then ~}\ux_1 \mathtt{ ~else ~} \underline{r}[j])\wedge\Theta\end{pmatrix}
 \]
 which can be rewritten as a disjunction of the following  formulae:
\begin{itemize}
\item
$\exists \underline{d}_1,\underline{d}_2\, \exists e, \underline{e} \left(\begin{array}{@{}l@{}}
\text{Diff}(\underline{e},e) \; \wedge\; \pi(\ux_1,\ux_2)\; \wedge \; \psi(\underline{d}_1, \underline{d}_2) \; \wedge \; \psi^{\prime}(\underline{d}_1, \underline{d}_2)\\  \wedge\; \underline{r}[e]=\nullv\; \wedge\; \bigwedge_{e_r \in \underline{e}} \phi_{e_r}(\underline{r}[e_r])\; \wedge \; \Theta\end{array}\right)
$ \\
covering the case where $e$  is different from all $e_j\in\underline{e}$

 \item $
\exists \underline{d}_1,\underline{d}_2\, \exists\underline{e} \left(\begin{array}{@{}l@{}}
\text{Diff}(\underline{e})\;  \wedge\; \pi(\ux_1,\ux_2)\; \wedge\;  \psi(\underline{d}_1, \underline{d}_2)\;  \wedge\; \psi^{\prime}(\underline{d}_1, \underline{d}_2)\;\wedge\\  \wedge\;\underline{r}[e]=\nullv\;  \wedge\;  \bigwedge_{ e_r \in \underline{e}, e_r\neq e_j} \phi_{e_r}(\underline{r}[e_r])\wedge \phi_{e_j}(\ux_1) \; \wedge\; \Theta\end{array}\right)
$ \\
covering the case where $e=e_j$, for some $e_j\in\underline{e}$.

 \end{itemize}

  We can move the existential quantifiers $\exists \underline{d}_1, \underline{d}_2$ in front of $\psi\wedge\psi^{\prime}$. We eliminate the quantifiers  (applying the quantifier elimination procedure for $T^{\star}$) from the subformula $\exists \underline{d}_1\underline{d}_2\left(\psi(\underline{d}_1, \underline{d}_2) \wedge\psi^{\prime}(\underline{d}_1, \underline{d}_2)\right)$  obtaining a ground formula $\theta$.

 The final result is a disjunction of formulae fo the kind

 \begin{itemize}
 \item $\exists e, \underline{e} \left(\begin{array}{@{}l@{}}
\text{Diff}(\underline{e},e)\;  \wedge\; \pi(\ux_1,\ux_2)\; \wedge \; \underline{r}[e]=\nullv\; \wedge\;\theta\;  \wedge \; \bigwedge_{e_r \in \underline{e}} \phi_{e_r}(\underline{r}[e_r])\; \wedge \;  \Theta
\end{array}\right)
$
\item $
\exists\underline{e} \left(\begin{array}{@{}l@{}}
\text{Diff}(\underline{e}) \; \wedge\;  \pi(\ux_1,\ux_2)\; \wedge \; \phi_{e_j}(\ux_1)\; \wedge  \; \underline{r}[e]=\nullv \;\wedge \;\theta \; \wedge\bigwedge_{ e_r \in \underline{e}, e_r\neq e_j} \phi_{e_r}(\underline{r}[e_r])\; \wedge \; \Theta
\end{array}\right)$

\end{itemize}
 which is a strongly local formula.

 Analogous arguments show that transitions that insert a tuple of values $\underline{t}:=(t_{1},...,t_{m})$ (where the values $t_j$ are taken  from the content of the artifact variables $\ux_{1}:=(x_{i_{1}},...,x_{i_{m}})$ or are \emph{constants}) into an $m$-ary artifact relation $R$ are also strongly local; in addition,  it is easy to see that ``propagation'' of variables $\ux_1$ (in the sense of the following subsection) is allowed in order to preserve strong locality of all those transitions. Notice that the transition introduced in Example~\ref{ex:hr-short}:

\[
\small
  \begin{array}{@{}l@{}}
    \exists \typedvar{i}{\appidx}\\
    \!\!\left(
      \begin{array}{@{}l@{}}
        \pstatev = \enab \land \recstate = \appreceived
        \\{}\land \appuser[i]=\nullv
        \\{}\land \pstatev' = \enab \land \recstate' = \nullv \land \cid'=\nullv
        \\{}\land
        \appjob' =
                  \smtlambda{j}{
                    \smtifinline{j=i}
                      {\jid}
                      {\appjob[j]}
                  }
        \\{}\land
        \appuser' =
                  \smtlambda{j}{
                    \smtifinline{j=i}
                      {\uid}
                      {\appuser[j]}
                  }
        \\{}\land
        \appresp' =
                  \smtlambda{j}{
                     \smtifinline{j=i}
                      {\eid}
                      {\appresp[j]}
                  }
        \\{}\land
        \appscore' =
                  \smtlambda{j}{
                    \smtifinline{j=i}
                      {\constant{-1}}
                      {\appscore[j]}
                  }
        \\{}\land
        \appres' =
                  \smtlambda{j}{
                    \smtifinline{j=i}
                      {\nullv}
                      {\appres[j]}
                  }
        \\{}\land \jid' = \nullv \land \uid' = \nullv \land \eid'=\nullv
      \end{array}
    \right)
  \end{array}
\]
 presents the described format.

We close this section with an important remark. In Appendix~\ref{sec:hiring-example}, we have seen that to forbid the insertion at different indexes of multiple identical tuples in an artifact relation, transitions break the strong locality requirement. A way to restore locality is to simply admit such repeated insertions. Notably, if one focuses on the fragment of strongly local \ras  that coincides with the model in \cite{DeLV16,verifas}, it can be shown, exactly reconstructing the same line of reasoning from \cite{DeLV16}, that \emph{verification problems (in the restricted common fragment) for artifact systems working over sets (i.e., insertions are performed over working memory without possible repetitions) and those working over multisets, are indeed equivalent}.


\subsection{Propagation Updates}

We want to propagate a tuple  $\underline{t}:=(t_{1},...,t_{m})$ of values contained in the artifact variables $\ux_{1}:=(x_{i_{1}},...,x_{i_{m}})$ (let $\ux:=\ux_1,\ux_2$) to the corresponding updated artifact variables $\ux_{1}^{\prime}$. This operation has to be applied only if the current artifact variables $\ux$ satisfy the pre-condition $\pi(\ux_1, \ux_2)$ and the updated artifact variables $\ux^{\prime}:=\ux_1^{\prime}, \ux_2^{\prime}$ satisfy the post-condition $\psi(\ux_1^{\prime}, \ux_2^{\prime})$.
Notice that in this transition no update of artifact component is involved.

\vspace{2mm}

 Such an update can be formalized in a symbolic way as follows:
\begin{equation}\label{eq:pro}
\exists \underline{d}\left(
\pi(\ux_1, \ux_2)\; \wedge \; \psi(\ux_1^{\prime}, \ux_2^{\prime})
\wedge (\ux_1^{\prime}:=\ux_1\; \wedge \; \ux_2^{\prime}:=\underline{d}\; \wedge\;  \underline{s}^{\prime}:=\underline{s})\right)
\end{equation}
where $\underline{s}$ stands for all the artifact components. Notice that the $\underline{d} $ are non deterministically produced values for the updated $\ux^{\prime}_2$. In the terminology of \cite{verifas}, notice that the artifact variables $\ux_1$ are propagated.

The formula \eqref{eq:ins} is not in the format \eqref{eq:trans1} but can be easily converted into it as follows:

\begin{equation}\label{eq:pro1}
\exists \underline{d}\left(
\pi(\ux_1, \ux_2)\; \wedge\;  \psi(\ux_1, \underline{d})
\wedge (\ux_1^{\prime}:=\ux_1\; \wedge \; \ux_2^{\prime}:=\underline{d}\; \wedge \; \underline{s}^{\prime}:=\underline{s})\right)
\end{equation}

 We prove that the preimage along \eqref{eq:pro1} of a strongly local formula is strongly local.
 Consider a strongly local formula
 \[
   K:=\psi^{\prime}(\ux)\wedge\exists \underline{e} \left(
     \text{Diff}(\underline{e}) \wedge\Theta\right)
 \]
 where  $\Theta$ is a formula involving the all artifact relations $\underline{s}$ (which are not modified in a propagation update), such that $K$ fits the format of~\eqref{eq:localstrong}.

 We compute the preimage $Pre(\ref{eq:pro},K)$

\[
        \exists \underline{d}\,\exists \ux_1^{\prime}, \ux_2^{\prime}\begin{pmatrix}
        \pi(\ux_1,\ux_2)\; \wedge \; \psi(\ux_1, \underline{d}) \; \wedge\; \psi^{\prime}(\ux_1, \ux_2^{\prime})\;  \wedge \\ \wedge \; \ux_1^{\prime}:=\ux_1\; \wedge \; \ux_2^{\prime}:=\underline{d}\; \wedge \; \text{Diff}(\underline{e}) \; \wedge\; \Theta\end{pmatrix}
 \]
 which can be rewritten as follows:

 \[
   \exists \underline{d}\,\exists \underline{e} \begin{pmatrix}
\text{Diff}(\underline{e}) \wedge\pi(\ux_1,\ux_2)\wedge \psi(\ux_1, \underline{d})\;  \wedge\\ \wedge\; \psi^{\prime}(\ux_1, \underline{d}) \; \wedge  \; \Theta\end{pmatrix}
\]

  We can move the existential quantifier $\exists \underline{d}$ in front of $\psi\wedge\psi^{\prime}$. We eliminate the quantifiers (applying the quantifier elimination procedure for $T^{\star}$) from the subformula $\exists \underline{d}( \psi(\ux_1, \underline{d}) \wedge\psi^{\prime}(\ux_1 \underline{d}))$  obtaining a formula of the kind $\theta(\ux_1)$.

 The final result is

 \[
   \exists\underline{e} \left(\begin{array}{@{}l@{}}
\text{Diff}(\underline{e})\;  \wedge\; \pi(\ux_1,\ux_2)\; \wedge \; \theta(\ux_1) \; \wedge \;  \Theta\end{array}\right)
\]
 which is a strongly local formula.

 Consider a transition that inserts constants or a non-deterministically generated new value $d^{\prime}$ (or a tuple of new values $\underline{d}^{\prime}$)
into an artifact component $r_i$ (or more than one) of an $m$-ary artifact relation $\underline{r}$, propagating all the other components and the artifact variables $\ux_1$ (with $\ux:=\ux_1,\ux_2$). Formally, this transition can be written in the following way:

\begin{equation}\label{eq:pro2}
\exists \underline{d}, d^{\prime}\, \exists e \begin{pmatrix}
\pi(\ux_1, \ux_2)\; \wedge\;  \psi(\ux_1^{\prime}, \ux_2^{\prime})\; \wedge\; \chi_1(d^{\prime})\;\wedge\;\chi_2(\underline{r}[e])\; \wedge \\ \wedge\; (\ux_1^{\prime}:=\ux_1\; \wedge \; \ux_2^{\prime}:=\underline{d}\; \wedge \; r^{\prime}_i= \lambda j.(\mathtt{ if}~ j=e~\mathtt{then}~ d^{\prime}~ \mathtt{else}~ r[j])\; \wedge\; \underline{s}^{\prime}:=\underline{s})\end{pmatrix}
\end{equation}
where $\underline{s}$ stands for all the artifact components different from $r_i$, and $\chi_1$ and $\chi_2$ are quantifier-free formulae. Notice that the $\underline{d} $ are non deterministically produced values for the updated $\ux^{\prime}_2$. In the terminology of \cite{verifas}, notice that the artifact variables $\ux_1$ are propagated.

The formula \eqref{eq:pro2} is not in the format \eqref{eq:trans1} but can be easily converted into it as follows:

\begin{equation}\label{eq:pro3}
\exists \underline{d}, d^{\prime}\, \exists e \begin{pmatrix}
\pi(\ux_1, \ux_2)\; \wedge \; \psi(\ux_1, \underline{d})\; \wedge\; \chi_1(d^{\prime})\;\wedge\;\chi_2(\underline{r}[e])
\; \wedge\\ \wedge \; (\ux_1^{\prime}:=\ux_1\; \wedge \; \ux_2^{\prime}:=\underline{d}\;  \wedge \;r^{\prime}_i= \lambda j.(\mathtt{if}~ j=e ~ \mathtt{then}~ d^{\prime} ~\mathtt{else}~ r[j])\; \wedge\; \underline{s}^{\prime}:=\underline{s})\end{pmatrix}
\end{equation}

Since $d^{\prime}$ does not occur in literals involving artifact variables, arguments analogous to the previous ones show that this transition is strongly local.

Notice that the transition (described in Example~\ref{ex:hr-short}):
\[
  \begin{array}{@{}l@{}}
  \exists \typedvar{i}{\joidx}, \typedvar{s}{\scoreval}\\
  \left(
    \begin{array}{@{}l@{}}
    \pstatev = \enab\\ {}\land \appuser[i] \neq \nullv \land \appscore[i] = \constant{-1}\\
       \recstate=\nullv \land  \recstate'=\nullv
        \land s \geq 0 \\
    {}\land \pstatev' = \enab \land \appscore'[i] = s
    \end{array}
  \right)
  \end{array}
\]
that assesses a Score to an applicant presents the structure of \eqref{eq:pro3}, so it is a strongly local transition. The same conclusion holds for the transition:
\[
\small
  \begin{array}{@{}l@{}}
  \exists \typedvar{u}{\userid}, \typedvar{j}{\jobcatid},\typedvar{e}{\empid},\typedvar{c}{\compinid}\\
  \left(
    \begin{array}{@{}l@{}}
    \pstatev = \enab \land \recstate = \nullv\\
    {}\land u \neq \nullv \land j \neq \nullv \land e \neq \nullv \land c \neq \nullv\\
    {}\land \compemp(c) = e \land \compjob(c) = j
    \\
    {}\land \pstatev' = \enab \land \recstate' = \appreceived\\
    {}\land \uid' = u \land \jid' = j \land \eid' = e \land \cid'=c
    \end{array}
  \right)
  \end{array}
\]
presented in Example~\ref{ex:hr-short}.

\subsection{Bulk Updates}


We want to unboundedly (bulk) update one (or more than one) artifact component(s) $r_i$ of one (or more than one) artifact relation(s) $\underline{r}$: if some conditions over the artifacts are satisfied for some entries, a global update that involves all those entries (inserting some constant $c_1$) is fired. In our symbolic formalism, we write:

\begin{equation}\label{eq:bulk}
\exists \underline{d}\, \begin{pmatrix}
\pi(\ux_1, \ux_2)\;\wedge\; \psi(\ux_1^{\prime}, \ux_2^{\prime})
\;
\wedge (\ux_1^{\prime}:=\ux_1\;\wedge \;\ux_2^{\prime}:=\underline{d}\; \wedge \; \underline{s}^{\prime}:=\underline{s}\;\wedge \\  \wedge\; r_1^{\prime}:=r_1\; \wedge...
\wedge\; r_i^{\prime}:=\lambda j.(\mathtt{ if }\  \kappa_1(\underline{r}[j]) \mathtt{~ then ~} c_1 \mathtt{~else ~} r_i[j]))\;\wedge...\wedge\; r_n^{\prime}:=r_n)\end{pmatrix}
\end{equation}
where $\underline{r}$ are the artifact components of an artifact relation $R$, $\underline{s}$ are the remaining artifact components, $\kappa_1$ is a quantifier-free formula\footnote{From the computations below, it is clear that strong locality holds also in case $\kappa_1$ depends also on the variables $\ux$, on the condition that $\kappa_1(\ux,\underline{r}[j])$ has the form $h_0(\ux)\wedge h_1(\underline{r}[j])$, with $h_0$ and $h_1$ quantifier-free formulae}, $c_1$ is a constant. The artifact component $r_i$ is updated in a global, unbounded way: we call this kind of update "bulk update". 

The formula \eqref{eq:bulk} is not in the format \eqref{eq:trans1} but can be easily converted into it as follows:

\begin{equation}\label{eq:bulk1}
\exists \underline{d}\, \begin{pmatrix}
\pi(\ux_1, \ux_2)\;\wedge\; \psi(\ux_1, \underline{d})
\;
\wedge (\ux_1^{\prime}:=\ux_1\;\wedge \;\ux_2^{\prime}:=\underline{d}\; \wedge \; \underline{s}^{\prime}:=\underline{s}\;\wedge \\
\wedge\; r_1^{\prime}:=r_1\; \wedge...\wedge\;r_i^{\prime}:=\lambda j.(\mathtt{ if }\  \kappa_1(\underline{r}[j]) \mathtt{~ then ~} c_1 \mathtt{~else ~} r_i[j]))\;\wedge...\wedge\; r_n^{\prime}:=r_n)\end{pmatrix}
\end{equation}

 We prove that the preimage along \eqref{eq:bulk1} of a strongly local formula is strongly local.
 Consider a strongly local formula
 \[
   K:=\psi^{\prime}(\ux)\wedge\exists \underline{e} \left(
     \text{Diff}(\underline{e})  \wedge \bigwedge_{e_r \in \underline{e}}
     \phi_{e_r}( \underline{r}[e_r])\wedge\Theta\right)
 \]
where  $\Theta$ is a formula involving the artifact relations $\underline{s}$ (which are not updated) such that no $e_r$ occurs in it.

We compute the preimage $Pre(\ref{eq:bulk1},K)$

\begin{equation}
\small
\exists \underline{d}\, \exists\underline{e} \begin{pmatrix}
\text{Diff}(\underline{e})\;\wedge	\;\pi(\ux_1, \ux_2)\;\wedge\; \psi(\ux_1, \underline{d})
        \;\wedge \psi^{\prime}(\ux_1, \underline{d})\;
        \wedge (\ux_1^{\prime}:=\ux_1\;\wedge \;\ux_2^{\prime}:=\underline{d}\; \wedge\;  \underline{s}^{\prime}:=\underline{s}\;\wedge \\ \bigwedge_{e_r \in \underline{e}} \phi_{e_r}( \underline{r}^{\prime}[e_r])
        \;  \wedge\;\Theta\;\wedge\; r_1^{\prime}:=r_1\; \wedge...\wedge\;r_i^{\prime}:=\lambda j.(\mathtt{ if }\  \kappa_1(\underline{r}[j]) \mathtt{~ then ~} c_1 \mathtt{~else ~} r_i[j]))\;\wedge... \wedge\; r_n^{\prime}:=r_n)\end{pmatrix}
\end{equation}
 which can be rewritten as a disjunction of the following formulae indexed by a function $f$ that associates to every $e_r$ a boolean value  in ${0,1}$:

 \begin{equation}
 \exists \underline{d},\,\exists\underline{e} \begin{pmatrix}
 \text{Diff}(\underline{e})\;\wedge	\;\pi(\ux_1, \ux_2)\;\wedge\; \psi(\ux_1, \underline{d})
 \;\wedge \psi^{\prime}(\ux_1, \underline{d})\;\wedge\;\\
 \bigwedge_{e_r \in \underline{e}} (\epsilon_f(e_r) \kappa_1(\underline{r}[e_r])\;  \wedge\; \phi(r_1[e_r],...\delta_f(e_r),...,r_n[e_r]))\;\wedge\;\Theta\;\end{pmatrix}
 \end{equation}
   where  $\epsilon_f(e_r):=\neg$ if $f(e_r)=0$, otherwise   $\epsilon_f(e_r):=\emptyset$, and $\delta_f(e_r):= c_1$ if $f(e_r)=0$, otherwise $\delta_f(e_r):=r_i[e_r]$.

  We can conclude as above (cf. propagation updates), by eliminating the existentially quantified variable $\underline{d}$, that this formula is strongly local.

  Notice that the previous arguments remain the same if $r_i^{\prime}:=\lambda j.(\mathtt{ if }\  \kappa_1(\underline{r}[j]) \mathtt{~ then ~} c_1 \mathtt{~else ~} r_i[j]))$ in Formula~\eqref{eq:bulk} is replaced by $r_i^{\prime}:=\lambda j.(\mathtt{ if }\  \kappa_1(\underline{r}[j]) \mathtt{~ then ~} c_1 \mathtt{~else ~} c_2)$, with $c_2$ a constant. Even in this case, the modified bulk transition is strongly local.

Analogous arguments show that transitions involving more than one artifact relations which are updated like $r_i$ are also strongly local.

The transition introduced in Example~\ref{ex:hr-short}

\[
\small
  \begin{array}{@{}l@{}}
    \pstatev = \final \land \pstatev'=\notified\\
    \recstate=\nullv \land  \recstate'=\nullv
    {}\land \appres' =    \smtlambda{j}{
                              \smtif{c}{\appscore[j] > \constant{80}}
                            {\win}{\los}
                          }
  \end{array}
\]
is a bulk update transition in the format described in this subsection, so it is a strongly local transition.

\section{Experiments}
\label{app_experiments}

\begin{table}
  \centering
  \caption{Summary of the experimental examples}
  \label{tab:benchmark}
        \begin{tabular}{llrrr}\toprule
       \multicolumn{2}{l}{\textbf{Example}} & \textbf{\#AC} & \textbf{\#AV} &\textbf{\#T} \\\midrule
            E1& JobHiring &9 & 18 & 15\\
            E2& Acquisition-following-RFQ &6 &13 &28\\
            E3& Book-Writing-and-Publishing & 4 & 14 & 13\\
            E4& Customer-Quotation-Request & 9 & 11 & 21\\
            E5& Patient-Treatment-Collaboration & 6 & 17 & 34\\
            E6& Property-and-Casualty-Insurance-Claim-Processing & 2 & 7 &15\\
            E7& Amazon-Fulfillment &  2 & 28 & 38\\
            E8& Incident-Management-as-Collaboration & 3 & 20 & 19
            \\\bottomrule
        \end{tabular}
    \end{table}

We base our experimental evaluation on the already existing benchmark provided in \cite{verifas}, that samples 32 real-world BPMN workflows published at the official BPM website (\url{http://www.bpmn.org/}).
Specifically, inspired by the specification approach adopted by the authors of \cite{verifas} in their experimental setup (\url{https://github.com/oi02lyl/has-verifier}), we select seven examples of varying complexity (see Table~\ref{tab:benchmark}) and provide their faithful encoding\footnote{Our encoding considers semantics of the framework studied in \cite{verifas}.} in the array-based specification using MCMT Version~2.8 (\url{http://users.mat.unimi.it/users/ghilardi/mcmt/}). Moreover, we enrich our experimental set with an extended version of the running example from Appendix~\ref{sec:hiring-example}. Each example has been checked against at least one safe and one unsafe conditions.
Experiments were performed on a machine with Ubuntu~16.04, 2.6\,GHz Intel
Core~i7 and 16\,GB RAM.


\begin{table}
  \centering
  \caption{Experimental results for safety properties}
  \label{tab:exp_extended}
        \begin{tabular}{cccrrrr}\toprule
            \textbf{Example} & \textbf{Property}&\textbf{Result}&\textbf{Time} &\textbf{\#N}  & \textbf{depth} &\textbf{\#SMT-calls}
            \\\midrule
            E1& E1P1 & \safe & 0.06 & 3 &  3 & 1238\\
             & E1P2& \unsafe & 0.36 & 46 & 10 & 2371\\
             & E1P3 & \unsafe &0.50 & 62 & 11 & 2867 \\
              & E1P4 & \unsafe & 0.35 & 42 & 10 & 2237 \\
            E2 & E2P1&  \safe & 0.72 & 50 & 9 & 3156\\
              & E2P2 &  \unsafe & 0.88 & 87 & 10 & 4238\\
              & E2P3& \unsafe &1.01 & 92 & 9 & 4811\\
             & E2P4 &  \unsafe & 0.83  & 80 & 9 & 4254\\
            E3 &  E3P1&  \safe & 0.05 & 1 & 1 & 700\\
             &E3P2 &  \unsafe & 0.06 & 14 & 3 & 899\\
            E4  & E4P1&  \safe &0.12 & 14 & 6 & 1460\\
              & E4P2&  \unsafe & 0.13 & 18 & 8 & 1525\\
           E5  & E5P1&  \safe &4.11 & 57 & 9 & 5618\\
              & E5P2&  \unsafe & 0.17& 13 & 3 & 2806\\
            E6   & E6P1& \safe & 0.04& 7 & 4 & 512\\
               & E6P2&  \unsafe &0.08 & 28 & 10 & 902\\
            E7  & E7P1 &  \safe & 1.00& 43 & 7 & 5281\\
             & E7P2&  \unsafe & 0.20 & 7 & 4 & 3412\\
            E8  & E8P1&  \safe & 0.70& 77 & 11 & 3720\\
              & E8P2 &  \unsafe & 0.15 & 25 & 7 & 1652
            \\\bottomrule
        \end{tabular}
\end{table}

%

Here \textbf{\#AV}, \textbf{\#AC} and \textbf{\#T} represent, respectively, the number of artifact variables, artifact components and transitions used in the example specification, while \textbf{Time} is the \mcmt execution time. The most critical measures are \textbf{\#N}, \textbf{depth} and \textbf{\#SMT-calls} that respectively define the number of nodes and the depth of the tree used for the backward reachability procedure adopted by \mcmt, and the number of the SMT-solver calls.
 Indeed, \mcmt  computes the iterated preimages of the formula describing the unsafe states along the various transitions. Such computation produces a tree, whose nodes are labelled by formulae describing sets of states that can reach an unsafe state and whose arcs are labelled by a transition.
In other words, an arc $t:\phi \rightarrow \psi$ means that $\phi$ is equal to $Pre(t,\psi)$. The tool applies forward and backward simplification strategies, so that whenever a node $\phi$ is deleted, this means that $\phi$ entails the disjunction of the remaining (non deleted) nodes. All nodes (both deleted and undeleted) can be visualized via the available online options (it is also possible to produce a Latex file containing their detailed description)

To stress test our encoding, we came up with a few formulae describing unsafe configurations (sets of ``bad'' states), that is, the configurations that the system should not incur throughout its execution.
\textbf{Property} references encodings of examples endowed with specific (un)safety properties done in \mcmt, whereas \textbf{Result} shows their verification outcome that can be of the two following types: \safe and \unsafe.
The \mcmt tool returns \safe, if the undesirable property it was asked to verify represents a configuration that the system cannot reach.
At the same time, the result is \unsafe  if there exists a path of the system execution that reaches ``bad'' states.
One can see, for example, that the job hiring \ras has been proved by \mcmt to be \safe w.r.t. the property defined in Example~\ref{ex:hr-prop}. The details about the successfully completed verification task can be seen in the first row of Table~\ref{tab:exp_extended}: the tool constructed a tree with 3 nodes and a depth of 3, and returned \safe in 0.06 seconds.
For the same job hiring \ras, if we slightly modify the safe condition discussed in Example~\ref{ex:hr-prop} by removing, for instance,
the check that a selected applicant is not a winning one, we obtain a description (see below) of a configuration in which it is still the case that an applicant could win:
\[
  \begin{array}{@{}l@{}}
   \exists  \typedvar{i}{\appidx}
    \left(  \begin{array}{@{}l@{}} \pstatev=\notified \land \appuser[i]\neq \nullv
    {}\land  \appres[i]\neq \los
    \end{array}
    \right)
  \end{array}
\]
In this case, the job hiring process analyzed against the devised property is evaluated as \unsafe by the tool (see E1P3 row in Table~\ref{tab:exp_extended}). When checking safety properties, \mcmt also allows to access an unsafe path of a given example in case
the verification result is \unsafe.

To conclude, we would like to point out that seemingly high number of SMT solver calls in \textbf{\#SMT-calls} against relatively
small execution time demonstrates that \mcmt could be considered as a promising tool supporting the presented line of research.
This is due to the following two reasons. On the one hand, the SMT technology underlying  solvers like \textsc{Yices} \cite{Dutertre} is quite mature and impressively well-performing. On the other hand, the backward reachability algorithm generates proof obligations which are relatively easy
to be analyzed as (un)satisfiable by the solver.


\end{document}